\keywords{$\lambda$-calculus, enriched category theory, quantale, categorical logic}
\crefname{section}{\S\hspace{-3pt}}{\S\hspace{-3pt}}
\crefname{figure}{Fig.}{Figs.}
\crefname{defi}{Def.}{Defs.}
\crefname{thm}{Thm.}{Thms.}
\crefname{lem}{Lemma}{Lemmata}
\crefname{exa}{Ex.}{Ex.}
\newcommand{\bern}{\mathtt{bernoulli}}
\DeclareMathOperator{\un}{\mathsf{u}}
\newcommand{\Cats}{\catfont{Cat}}
\newcommand{\Met}{\catfont{Met}}
\newcommand{\Free}[1]{\sfunfont{F}}
\newcommand{\typeI}{\typefont{I}}
\newcommand{\Shuff}{\mathrm{Sf}}
\newcommand{\VCat}{\mathcal{V}\text{-}\Cats}
\newcommand{\VCatSy}{\mathcal{V}\text{-}\Cats_{\mathsf{sym}}}
\newcommand{\VCatSe}{\mathcal{V}\text{-}\Cats_{\mathsf{sep}}}
\newcommand{\VCatSS}{\mathcal{V}\text{-}\Cats_{\mathsf{sym,sep}}}
\newcommand{\Lang}{\mathrm{Lng}}
\newcommand{\Syn}{\mathrm{Syn}}
\DeclarePairedDelimiter\abs{\lvert}{\rvert}%
\DeclarePairedDelimiter\norm{\lVert}{\rVert}%
\let\oldabs\abs
\def\abs{\@ifstar{\oldabs}{\oldabs*}}
\let\oldnorm\norm
\def\norm{\@ifstar{\oldnorm}{\oldnorm*}}
\DeclarePairedDelimiter\ket{\lvert}{\rangle}
\newcommand{\catfont}[1]{\mathsf{#1}}
\newcommand{\cop}{\catfont{op}}
\newcommand{\catC}{\catfont{C}}
\newcommand{\catD}{\catfont{D}}
\newcommand{\catA}{\catfont{A}}
\newcommand{\Set}{\catfont{Set}}
\newcommand{\Pos}{\catfont{Pos}}
\newcommand{\Subs}[2]{\catfont{Sub}_{}}
\newcommand{\Aut}{\catfont{Aut}}
\newcommand{\funfont}[1]{#1}
\newcommand{\funF}{\funfont{F}}
\newcommand{\funG}{\funfont{G}}
\newcommand{\sfunfont}[1]{\mathrm{#1}}
\newcommand{\Id}{\sfunfont{Id}}
\newcommand{\Yoneda}{\sfunfont{Y}}
\newcommand\adjunct[2]{\xymatrix@=8ex{\ar@{}[r]|{\top}\ar@<1mm>@/^2mm/[r]^{{#2}}
& \ar@<1mm>@/^2mm/[l]^{{#1}}}}
\newcommand\adjunctop[2]{\xymatrix@=8ex{\ar@{}[r]|{\bot}\ar@<1mm>@/^2mm/[r]^{{#2}}
& \ar@<1mm>@/^2mm/[l]^{{#1}}}}
\newcommand\retract[2]{\xymatrix@=8ex{\ar@{}[r]|{}\ar@<1mm>@/^2mm/@{^{(}->}[r]^{{#2}}
& \ar@<1mm>@/^2mm/@{->>}[l]^{{#1}}}}
\newcommand{\sem}[1]{\llbracket #1 \rrbracket}
\newcommand{\app}{\mathrm{app}}
\newcommand{\comp}{\cdot}
\DeclareMathOperator{\id}{\mathsf{id}}
\DeclareMathOperator{\sw}{\mathsf{sw}}
\DeclareMathOperator{\spl}{\mathsf{sp}}
\DeclareMathOperator{\sh}{\mathsf{sh}}
\DeclareMathOperator{\join}{\mathsf{jn}}
\DeclareMathOperator{\exch}{\mathsf{exch}}
\newcommand{\defeq}{\triangleq}
\newcommand{\resp}{resp.\xspace~}
\newcommand{\ie}{\textit{i.e.}\xspace~}
\newcommand{\eg}{\textit{e.g.}\xspace~}
\newcommand{\Nats}{\mathbb{N}}
\newcommand{\Reals}{\mathbb{R}}
\newcommand{\C}{\mathbb{C}}
\newcommand{\V}{\mathcal{V}}
\newcommand{\typefont}[1]{\mathbb{#1}}
\newcommand{\typeA}{\typefont{A}}
\newcommand{\typeB}{\typefont{B}}
\newcommand{\typeC}{\typefont{C}}
\newcommand{\typeD}{\typefont{D}}
\newcommand{\rulename}[1]{(\mathrm{\mathbf{#1}})}
\newcommand{\vljud}{\rhd}
\newcommand{\closValBP}[2]{\mathsf{Values}(#1, #2)}
\newcommand{\prog}[1]{\mathtt{#1}}
\theoremstyle{plain} 
\crefname{corollary}{Corollary}{Corollaries}
\theoremstyle{plain}\newtheorem{proposition}[thm]{Proposition} 
\crefname{proposition}{Prop.}{Props.}
\newcommand{\CPTP}{\catfont{CPTP}}
\newcommand{\Iso}{\catfont{Iso}}
\newcommand{\Ban}{\catfont{Ban}}
\newcommand{\normal}{\prog{normal}}
\newcommand{\unit}{\prog{unit}}
\newcommand{\real}{\prog{real}}
\newcommand{\meas}{\mathcal{M}}
\newcommand{\Tr}{\mathrm{Tr}}
\begin{document}

\title[Syntactic side categories enriched over generalised
metric spaces]{The syntactic side of autonomous categories enriched over generalised
metric spaces}

\titlecomment{{\lsuper*}This paper is an extended version
        of~\cite{dahlqvist22}.  It includes proofs omitted in the \emph{op.
        cit.} and new examples.  It also includes new technical results: among
        other things an equivalence theorem and an extension (from the linear
        setting to the affine one) of the results established in the~\emph{op.
cit.} }

\author[F.~Dahlqvist]{Fredrik Dahlqvist\lmcsorcid{0000-0003-2555-0490}}[a]
\author[R.~Neves]{Renato Neves\lmcsorcid{0000-0002-8787-2551}}[b]

\address{Queen Mary University of London and University College London}	
\email{f.dahlqvist@qmul.ac.uk}  

\address{University of Minho and INESC-TEC}	
\email{nevrenato@di.uminho.pt}  





\begin{abstract}
  Programs with a continuous state space or that interact with physical
  processes often require notions of equivalence going beyond the standard
  binary setting in which equivalence either holds or does not hold. In this
  paper we explore the idea of equivalence taking values in a quantale
  $\mathcal{V}$, which covers the cases of (in)equations and (ultra)metric
  equations among others.

  Our main result is the introduction of a \emph{$\mathcal{V}$-equational
  deductive system} for linear $\lambda$-calculus together with a proof that it
  is sound and complete. In fact we go further than this, by showing that
  linear $\lambda$-theories based on this $\V$-equational system form a
  category equivalent to a category of autonomous categories enriched over
  `generalised metric spaces'. If we instantiate this result to inequations, we
  get an equivalence with autonomous categories enriched over partial orders.
  In the case of (ultra)metric equations, we get an equivalence with autonomous
  categories enriched over (ultra)metric spaces. Additionally, we show that
  this syntax-semantics correspondence extends to the affine setting.
  
We use our results to develop examples of inequational and metric equational
systems for higher-order programming in the setting of real-time,
probabilistic, and quantum computing.  \end{abstract}

\maketitle

\section{Introduction}\label{S:one}
Programs frequently act over a \emph{continuous} state space or
interact with physical processes like time progression or the movement
of a vehicle. Such features naturally call for notions of
approximation and refinement integrated in different aspects of
program equivalence. Our paper falls in this line of research.
Specifically, our aim is to integrate notions of approximation and
refinement into the \emph{equational system} of linear
$\lambda$-calculus~\cite{benton92,mackie93,maietti05}.

The core idea that we explore in this paper is to have equations
$t =_q s$ labelled by elements $q$ of a quantale $\mathcal{V}$. This
covers a wide range of situations, among which the cases of
(in)equations~\cite{kurz2017quasivarieties,adamek20} and metric
equations~\cite{mardare16,mardare17}. The latter case is perhaps less
known: it consists of equations $t =_{\epsilon}s$ labelled by a
non-negative rational number $\epsilon$ which represents the `maximum
distance' that the two terms $t$ and $s$ can be from each other. In
order to illustrate metric equations, consider a programming language
with a (ground) type $X$ and a signature of operations
$\Sigma = \{ \prog{wait_n} : X \to X \mid n \in \Nats \}$ that model
time progression over computations of type $X$. Specifically,
$\prog{wait_n}(x)$ reads as ``add a latency of $n$ seconds to the
computation $x$''. In this context the following metric equations arise naturally:
\begin{flalign}\label{ax}
  \prog{wait_0}(x) =_0 x \hspace{1cm} \prog{wait_n}(\prog{wait_m}(x)) =_0
  \prog{wait_{n + m }}(x) \hspace{1cm}
  \infer{\prog{wait_n}(x) =_\epsilon \prog{wait_m}(x)}{\epsilon = |m - n|}
\end{flalign}
Equations $t =_0 s$ state that the terms $t$ and $s$ are exactly the same and
equations $t =_\epsilon s$ state that $t$ and $s$ differ by \emph{at most}
$\epsilon$ seconds in their execution time.

\noindent
\textbf{Contributions.}  In this paper we introduce an equational deductive
system for linear $\lambda$-calculus in which equations are labelled by
elements of a quantale $\mathcal{V}$.  By using key features of a quantale's
structure, we show that this deductive system is \emph{sound and complete} for
a class of enriched symmetric monoidal closed categories (\ie enriched
\emph{autonomous} categories).  In particular, if we fix $\mathcal{V}$ to be
the Boolean quantale this class of categories consists of autonomous categories
enriched over partial orders. If we fix $\mathcal{V}$ to be the (ultra)metric
quantale, then this class of categories consists of autonomous categories
enriched over (ultra)metric spaces. The aforementioned example of wait calls
fits in the setting in which $\mathcal{V}$ is the metric quantale.  Our result
provides this example with a sound and complete metric equational system, where
the models are all those autonomous categories enriched over metric spaces that
can soundly interpret the axioms of wait calls~\eqref{ax}.

The next contribution of our paper falls in a major topic of categorical logic:
to establish a \emph{syntax-semantics bidirectional correspondence} between logical systems
and certain classes of categories (in a
nutshell, this allows to translate categorical assertions or constructions
into logical ones and \emph{vice-versa}). A famous result of this kind is the
correspondence between $\lambda$-calculus and Cartesian closed categories,
formalised in terms of an equivalence between categories respective to both
structures. Intuitively, the equivalence states that $\lambda$-theories are the
syntactic counterpart  of Cartesian closed categories and that the latter are
the semantic counterpart of the former~\cite{lambek88,crole93}. An analogous
result is known to exist for linear $\lambda$-calculus and autonomous
categories~\cite{mackie93,maietti05}.  Here we extend the latter result to the
setting of $\V$-equations (\ie equations labelled by elements of a quantale
$\mathcal{V}$). Specifically, we prove the existence of an equivalence between
linear $\lambda$-theories based on $\V$-equations and autonomous categories
enriched over `generalised metric spaces'. 

\noindent \textbf{Outline.} Section~\ref{sec:back} recalls linear
$\lambda$-calculus and its equational system together with corresponding proofs
of soundness, completeness, and the aforementioned equivalence with a category
of autonomous categories.  The contents of this section are adaptations of
results presented in~\cite{benton92,mackie93,crole93,maietti05}, the main
difference being that we forbid the exchange rule to be explicitly part of
linear $\lambda$-calculus (instead it is only admissible). This choice is
important to ensure that judgements in the calculus have \emph{unique}
derivations, which allows to refer to their interpretations
unambiguously~\cite{shulman19}. Section~\ref{sec:main} presents the main
contributions of this paper. It walks a path analogous to
Section~\ref{sec:back}, but now in the setting of $\mathcal{V}$-equations. As
we will see, the semantic counterpart of moving from equations to
$\mathcal{V}$-equations is to move from ordinary categories to categories
enriched over \emph{$\mathcal{V}$-categories}. The latter, often regarded as
generalised metric spaces, are central entities in a fruitful area of enriched
category theory that aims to treat uniformly different kinds of `structured
sets', such as partial orders, fuzzy partial orders, and (ultra)metric
spaces~\cite{lawvere73,stubbe14,velebil19}. Our results are applicable to all
these cases. Section~\ref{sec:examples} presents some examples of
$\mathcal{V}$-equational axioms (and corresponding models) for three
computational paradigms, namely real-time, probabilistic, and quantum computing
(in all these paradigms notions of approximation take a central role).
Specifically, in Section~\ref{sec:examples} we will revisit the axioms of wait
calls~\eqref{ax} and consider an inequational variant. Then we will study a
metric axiom for probabilistic programs and show that the category of Banach
spaces and short linear maps is a model for the resulting metric theory. Next
we turn our attention to quantum computing and introduce a metric axiom that
reflects the fact that implementations of quantum operations can only
approximate the intended behaviour.  We build a corresponding model over a
certain category of presheaves based on the concept of a quantum channel.  We
also illustrate  how our deductive system allows to compute an approximate
distance between two probabilistic/quantum programs easily as opposed to
computing an exact distance `semantically' which tends to involve quite complex
operators. 

Finally Section~\ref{sec:concl} provides two concluding notes: first a proof
that our results extend from the linear to the affine case. Second the
presentation of a functorial connection (in terms of adjunctions) between our
results and previous (algebraic) semantics of linear logic~\cite{paiva99,
maietti05}. Section~\ref{sec:concl} then ends with a brief exposition of future
work.  We assume knowledge of $\lambda$-calculus and category
theory~\cite{mackie93,maietti05,lambek88,maclane98}. 

\noindent
\textbf{Related work.}  Several approaches to incorporating quantitative
information in programming languages have been explored in the literature.
Closest to this work are various approaches targeted at $\lambda$-calculi.
In~\cite{crubille2015metric,crubille2017metric} a notion of distance called
\emph{context distance} is developed, first for an affine, then for a more
general $\lambda$-calculus, with probabilistic programs as the main motivation.
\cite{gavazzo18} considers a notion of quantale-valued \emph{applicative
(bi)similarity}, an operational \emph{coinductive} technique used for showing
contextual equivalence between two programs.  Recently, \cite{pistone21}
presented several Cartesian closed categories of generalised metric spaces that
provide a quantitative semantics to simply-typed $\lambda$-calculus based on a
generalisation of logical relations.  None of these examples reason about
distances in a quantitative equational system, and in this respect our work is
closer to the metric universal algebra developed in \cite{mardare16,mardare17}.

A different approach consists in encoding quantitative information via a type
system.  In particular, graded (modal) types
\cite{girard1992bounded,gaboardi2016combining,orchard2019quantitative} have
found applications in \eg differential privacy~\cite{reed10} and information
flow \cite{abadi1999core}.  This approach is to some extent orthogonal to ours
as it mainly aims to model coeffects, whilst we aim to reason about the
intrinsic quantitative nature of $\lambda$-terms acting \eg on continuous or
ordered spaces.

Quantum programs provide an interesting example of intrinsically quantitative
programs, by which we mean that the metric structure on quantum states is not
seen as arising from (co)effects.  Recently \cite{hung19} showed how the issue
of noise in a quantum while-language can be handled by developing a
\emph{deductive system} to determine how similar a quantum program is from its
idealised, noise-free version; an approach very much in the spirit of this
work.

\section{Background: linear $\lambda$-theories and autonomous categories}
\label{sec:back}

In this section we recall linear $\lambda$-calculus, which can be regarded as a
term assignment system for the exponential free, multiplicative fragment of
\emph{intuitionistic linear logic}. We briefly recall that it is sound and
complete w.r.t.\ autonomous categories, the reader will find more details
in~\cite{mackie93,benton92,maietti05}. We then present categories of linear
$\lambda$-theories and of autonomous categories, and show that they are
equivalent. This second result is shown in~\cite{maietti05}, but we take the
more general approach of allowing functors to preserve autonomous structures
only \emph{up-to isomorphism} (\ie in the style of~\cite{crole93} which studies
the Cartesian variant). For that reason we will present it in some detail.
Finally note that both results are particular instances of the general results
presented in the ensuing section when one takes $\mathcal{V}$ to be the Boolean
quantale.

\subsection{Linear $\lambda$-calculus, soundness and completeness}
 \label{sec:folin}
 Let us start by fixing a \emph{class} $G$ of ground types.  The grammar of
 types for linear $\lambda$-calculus is given by:
\[
  \typeA ::=  X \in G \mid \typeI \mid 
  \typeA \otimes \typeA \mid \typeA \multimap \typeA
\] 
We also fix a \emph{class} $\Sigma$ of sorted operation symbols $f :
\typeA_1,\dots,\typeA_n \to \typeA$ with $n \geq 1$.  As usual, we use Greek
letters $\Gamma,\Delta, E,\dots$ to denote \emph{typing contexts}, \ie lists
$x_1 : \typeA_1,\dots,x_n : \typeA_n$ of typed variables such that each
variable $x_i, 1\leq i\leq n,$ occurs at most once in $x_1,\dots,x_n$.

We use the notion of \emph{shuffle} to build a linear typing system such
that the \emph{exchange rule} is admissible and each judgement $\Gamma \vljud v
: \typeA$ has a \emph{unique derivation} -- this will allow us to refer to a
judgement's denotation $\sem{\Gamma \vljud v : \typeA}$ unambiguously.  By
shuffle we mean a permutation of typed variables in a context sequence
$\Gamma_1,\dots,\Gamma_n$ such that for all $i \leq n$ the relative order of
the variables in $\Gamma_i$ is preserved~\cite{shulman19}. For example, if
$\Gamma_1 = x : \typeA, y : \typeB$ and $\Gamma_2 = z : \typeC$ then $z :
\typeC, x : \typeA, y : \typeB$ is a shuffle of $\Gamma_1,\Gamma_2$ but $y :
\typeB, x : \typeA, z : \typeC$ is \emph{not}, because we changed the order in
which $x$ and $y$ appear in $\Gamma_1$.  As explained in~\cite{shulman19},
such a restriction on relative orders is
crucial for judgements having unique derivations.  We denote by
$\Shuff(\Gamma_1;\dots;\Gamma_n)$ the set of shuffles on
$\Gamma_1,\dots,\Gamma_n$.

The term formation rules of the linear $\lambda$-calculus are shown in
Fig.~\ref{fig:lang}. They correspond to the natural deduction rules of the
exponential-free, multiplicative fragment of intuitionistic linear logic.
Substitution is defined as expected, yielding a particularly well-behaved
calculus.

\begin{figure*}[h!]
\small{
    \begin{tabular}{|llllc|}
    \hline
    & & & & \\
    \multicolumn{4}{|l}{
      \infer[\rulename{ax}]{E
        \vljud f(v_1,\dots,v_n) : \typeA}
      {\Gamma_i \vljud v_i : \typeA_i
      \quad f : \typeA_1,\dots,\typeA_n \to \typeA \in \Sigma
      \quad E \in \Shuff(\Gamma_1;\dots;\Gamma_n) }
      }
      &
      \infer[\rulename{hyp}]{x : \typeA \vljud x : \typeA}{}
      \\
      & & & & \\
      \infer[\rulename{\typeI_i}]{- \vljud \ast : \typeI}{}
      &
      \multicolumn{4}{r|}{
      \infer[\rulename{\otimes_e}]{E \vljud \prog{pm}\ v\ \prog{to}\
      x \otimes y.\ w : \typeC}
      {\Gamma \vljud v : \typeA \otimes \typeB
      \quad \Delta , x : \typeA, y : \typeB \vljud w : \typeC
      \quad E \in \Shuff(\Gamma;\Delta)}
       }
	\\
     & & & & \\
      \multicolumn{3}{|l}{
      \infer[\rulename{\otimes_i}]{E \vljud v \otimes w : \typeA \otimes
        \typeB}{\Gamma \vljud v : \typeA \quad \Delta \vljud w : \typeB
      \quad E \in \Shuff(\Gamma;\Delta)}
      }
      &
      \multicolumn{2}{r|}{
      \infer[\rulename{\typeI_e}]{E  \vljud v \ \prog{to}\ \ast.\ w
      : \typeA}
      {\Gamma \vljud v : \typeI \quad \Delta 
        \vljud w : \typeA \quad E \in \Shuff(\Gamma;\Delta)}
      }
      \\
      & & & & \\
      \multicolumn{3}{|l}{
      \infer[\rulename{\multimap_i}]{\Gamma \vljud \lambda x : \typeA . \,
      v : \typeA \multimap \typeB}
      {\Gamma, x : \typeA \vljud v : \typeB}
 
      }
      &
      \multicolumn{2}{r|}{
      \infer[\rulename{\multimap_e}]{E \vljud v \, w : \typeB}
      {\Gamma \vljud  v : \typeA \multimap \typeB \quad
      \Delta \vljud  w : \typeA \quad E \in \Shuff(\Gamma;\Delta)}      
     }
      \\
      \hline
    \end{tabular}
    }
  \caption{Term formation rules of linear $\lambda$-calculus.}
  \label{fig:lang}
\end{figure*}

\begin{thm}\label{thm:properties}
        The calculus defined by the rules of \cref{fig:lang} enjoys the
        following properties:
	\begin{enumerate}
                \item (Unique typing) For any two judgements $\Gamma \vljud v :
                        \typeA$ and $\Gamma \vljud v : \typeA'$, we have
                        $\typeA = \typeA'$; \label{i:unique_type}
                \item (Unique derivation) Every judgement $\Gamma \vljud v :
                        \typeA$ has a unique derivation; \label{i:unique_der}
                \item (Exchange) For every judgement $\Gamma, x : \typeA, y :
                        \typeB, \Delta \vljud v : \typeC$ \label{i:exch} we can
                        derive $\Gamma, y : \typeB, x : \typeA, \Delta \vljud v
                        : \typeC$; 
                \item (Substitution) For all judgements $\Gamma, x : \typeA
                        \vljud v : \typeB$ and $\Delta \vljud w : \typeA$ we
                        can derive $\Gamma,\Delta \vljud v[w/x] : \typeB$.
                        \label{i:subst} 
	\end{enumerate}
\end{thm}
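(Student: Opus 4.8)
The plan is to prove all four statements by induction on derivations; in fact, since we shall see that the last rule applied is forced by the syntactic shape of the term, this amounts to induction on the raw term $v$. All the real content lies in a handful of elementary facts about shuffles, which I would record first. If $E\in\Shuff(\Gamma_1;\dots;\Gamma_n)$ then, because a shuffle preserves the relative order of the entries inside each $\Gamma_i$, the context $\Gamma_i$ is exactly the subsequence of $E$ consisting of the entries whose variable occurs in $\Gamma_i$; consequently, once the partition of the variables of $E$ among the $\Gamma_i$ is known, the $\Gamma_i$ are uniquely recoverable from $E$. Two closure properties follow just as directly from the definition: (a) transposing two \emph{adjacent} entries of $E$ belonging to two \emph{different} components yields again an element of $\Shuff(\Gamma_1;\dots;\Gamma_n)$, the components themselves being unchanged as lists; and (b) if $\Gamma=\Gamma',x:\typeA,\Gamma''$, then replacing the single entry $x:\typeA$ inside an $E\in\Shuff(\Gamma;\Delta)$ by an arbitrary context $\Theta$ produces an element of $\Shuff(\Gamma',\Theta,\Gamma'';\Delta)$ --- in particular, when $x$ is the last entry of both $E$ and $\Gamma$, appending $\Theta$ at the end of $E$ lands in $\Shuff(\Gamma',\Theta;\Delta)$.

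With these in hand, statements~\ref{i:unique_type} and~\ref{i:unique_der} are proved together. The key syntactic observation is that each of the eight term constructors --- $f(v_1,\dots,v_n)$, a bare variable, $\ast$, $\prog{pm}\ v\ \prog{to}\ x \otimes y.\ w$, $v\otimes w$, $v\ \prog{to}\ \ast.\ w$, $\lambda x:\typeA.\,v$ and $v\,w$ --- is the subject of exactly one rule of \cref{fig:lang}, so the last rule of any derivation of $\Gamma\vljud v:\typeA$ is determined by $v$, hence so are its immediate subterms. By the induction hypothesis each subterm has a unique type, and for the elimination rules one then reads off that the type in the conclusion is forced (\eg in $\rulename{\multimap_e}$ the unique type of $v$ is necessarily of the form $\typeB'\multimap\typeB$ with $\typeB'$ the unique type of $w$, so $\typeB$ is determined), which gives~\ref{i:unique_type}. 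For~\ref{i:unique_der} one needs in addition that the premises' \emph{contexts} are determined: by linearity the free-variable sets of the subterms partition the variables of the conclusion's context $E$, and each premise's context is precisely the restriction of $E$ to the corresponding set, recoverable by the fact above; the subderivations are then unique by the induction hypothesis. (For $\rulename{hyp}$ and $\rulename{\typeI_i}$ the context is rigid, so there is nothing to choose.)

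Statement~\ref{i:exch} follows by induction on the derivation of $\Gamma,x:\typeA,y:\typeB,\Delta\vljud v:\typeC$, with a case analysis on the last rule. Since $x,y$ are adjacent in the conclusion's context, in each rule this pair either lies entirely inside one component $\Gamma_i$ of the governing shuffle --- in which case, by order preservation, $x,y$ are adjacent in $\Gamma_i$ too, so we apply the induction hypothesis to the premise typed over $\Gamma_i$ and reapply the rule --- or it straddles two components, in which case closure property~(a) shows the transposed context is still a shuffle of the very same components and the rule reapplies verbatim with the same subderivations. The binding rules $\rulename{\multimap_i}$ and $\rulename{\otimes_e}$ are no different, their bound variables being appended \emph{after} $x,y$ and hence not disturbing the adjacency; and $\rulename{hyp}$ and $\rulename{\typeI_i}$ cannot produce a context of the stipulated shape, so those cases are vacuous.

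Finally, statement~\ref{i:subst} is proved by induction on the derivation of $\Gamma,x:\typeA\vljud v:\typeB$. By linearity $x$ occurs free in exactly one immediate subterm, and --- again by order preservation of shuffles --- $x$ remains the last entry of its component when we pass to that subterm's premise, so the induction hypothesis applies there and the context $\Gamma,\Delta$ is obtained by reassembling via the special case of property~(b) (appending $\Delta$ at the end of the relevant component). In $\rulename{\multimap_i}$, and in the subcase of $\rulename{\otimes_e}$ where $x$ lies in the continuation, the premise carries bound variables \emph{after} $x$; there one first uses the already-established exchange property~\ref{i:exch} to move $x$ to the end of its context (and afterwards to move those bound variables past $\Delta$), then proceeds as before, choosing the bound variable fresh so that it is distinct from the variables of $\Delta$. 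I expect the only real friction to be exactly this shuffle bookkeeping --- tracking how a context splits over the premises and getting the reassembly right in the $\otimes$- and $\multimap$-elimination cases --- whereas the term- and type-level manipulations are entirely routine once statements~\ref{i:unique_type} and~\ref{i:unique_der} have made ``the derivation of $\Gamma\vljud v:\typeA$'' a well-defined object on which to induct.
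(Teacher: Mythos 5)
Your proof is correct and follows exactly the route the paper intends: the paper states \cref{thm:properties} without proof, deferring to the standard argument via the order-preservation properties of shuffles (citing Shulman), and your sketch supplies precisely that argument — determinacy of the last rule and of the premise contexts from the free variables for \ref{i:unique_type}/\ref{i:unique_der}, closure of shuffles under adjacent transposition for \ref{i:exch}, and the append-a-context-for-a-variable property for \ref{i:subst}. Your observation that the induction can be read as induction on the raw term also quietly disposes of the one delicate point in the substitution case (pre-applying exchange to move $x$ past bound variables does not break the induction measure), so nothing is missing.
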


We now recall the interpretation of judgements $\Gamma \vljud v: \typeA$ in a
symmetric monoidal closed (\ie autonomous) category $\catC$. We start by fixing
some notation. For all $\catC$-objects $X,Y,Z$, $\sw_{X,Y} : X \otimes Y \to Y
\otimes X$ denotes the symmetry morphism, $\lambda_X : \typeI \otimes X \to X$
the left unitor,  $\rho_X: X\otimes \typeI\to X$ the right unitor, and
$\alpha_{X,Y,Z} : X \otimes (Y \otimes Z) \to (X \otimes Y) \otimes Z$ the left
associator. Moreover for all $\catC$-morphisms $f : X \otimes Y \to Z$ the
morphism $\overline{f} : X \to (Y \multimap Z)$ denotes the corresponding
curried version (right transpose).

For all ground types $X \in G$ we postulate an interpretation $\sem{X}$ as a
$\catC$-object.  Types are interpreted  inductively using the unit $\typeI$,
the tensor $\otimes$ and the internal hom $\multimap$ of autonomous categories.
Given a non-empty context $\Gamma = \Gamma', x : \typeA$, its interpretation is
defined by $\sem{\Gamma', x : \typeA} = \sem{\Gamma'} \otimes \sem{\typeA}$ if
$\Gamma'$ is non-empty and $\sem{\Gamma', x : \typeA} = \sem{\typeA}$
otherwise. The empty context $-$ is interpreted as $\sem{-} = \typeI$.  Given
$X_1,\dots,X_n \in \catC$ we write $X_1 \otimes \dots \otimes X_n$ for the
$n$-tensor $(\dots (X_1 \otimes X_2) \otimes \dots ) \otimes X_n$, and
similarly for $\catC$-morphisms. We will often omit subscripts in
components of natural transformations if no ambiguities arise.

We will also need `housekeeping' morphisms to handle interactions between
context interpretation and the autonomous structure of $\catC$. Given
$\Gamma_1,\dots,\Gamma_n$ we denote by $\spl_{\Gamma_1; \dots;\Gamma_n} :
\sem{\Gamma_1, \dots, \Gamma_n} \to \sem{\Gamma_1} \otimes \dots \otimes
\sem{\Gamma_n}$ the morphism that splits $\sem{\Gamma_1, \dots, \Gamma_n}$ into
$\sem{\Gamma_1} \otimes \dots \otimes \sem{\Gamma_n}$ which is defined as
follows. Given $\Gamma_1$ and $\Gamma_2$, $\spl_{\Gamma_1 ; \Gamma_2} :
\sem{\Gamma_1, \Gamma_2} \to \sem{\Gamma_1} \otimes \sem{\Gamma_2}$ is defined
by
        \[
                \spl_{-;\Gamma} = \lambda^{-1} \hspace{1.3cm}
                \spl_{\Gamma; -} = \rho^{-1} \hspace{1.3cm}
                \spl_{\Gamma; x : \typeA} = \id \hspace{1.3cm}
                \spl_{\Gamma; (\Delta, x : \typeA) } =
                \alpha^{-1} \comp  (\spl_{\Gamma; \Delta} \otimes \id)
        \]
For $n > 2$, $\spl_{\Gamma_1; \dots ; \Gamma_n} : \sem{\Gamma_1, \dots,
\Gamma_n} \to \sem{\Gamma_1} \otimes \dots \otimes \sem{\Gamma_n}$ is defined
via the previous definition and induction on the size of $n$:
        \[
                \spl_{\Gamma_1; \dots ; \Gamma_n} = 
                (\spl_{\Gamma_1; \dots ; \Gamma_{n -1}} \otimes \id) \comp
                \spl_{\Gamma_1, \dots, \Gamma_{n-1}; \Gamma_n}
        \]
We denote by $\join_{\Gamma_1;\dots;\Gamma_n}$ the
        inverse of $\spl_{\Gamma_1;\dots;\Gamma_n}$.  Next, given
        $\Gamma, x : \typeA, y : \typeB, \Delta$ we denote by $\exch_{\Gamma,
        \underline{x : \typeA, y : \typeB}, \Delta} : \sem{\Gamma, x : \typeA,
y : \typeB, \Delta} \to \sem{\Gamma, y : \typeB, x : \typeA, \Delta}$ the
morphism permuting $x$ and $y$:
        \[ 
                \exch_{\Gamma; \underline{x : \typeA, y : \typeB}, \Delta}
                = \join_{\Gamma; y: \typeB, x : \typeA; \Delta} \comp\,
                (\id \otimes \sw \otimes \id) \comp 
                \spl_{\Gamma; x : \typeA, y : \typeB; \Delta}
        \]
The shuffling morphism $\sh_E : \sem{E} \to \sem{\Gamma_1,\dots,\Gamma_n}$ is
defined as a suitable composition of exchange morphisms.  Whenever convenient
we will drop variables or even the whole subscript in the housekeeping
morphisms.

\begin{figure*}[h!]
        \scalebox{0.81}{
\begin{tabular}{|llllr|}
	\hline
	& & & & \\
	\multicolumn{4}{|l}
	{
	\infer[]{\sem{E \vljud f(v_1,\dots,v_n) : \typeA} = \sem{f} \comp
      (m_1 \otimes \dots \otimes m_n) \comp \spl_{\Gamma_1;\dots;\Gamma_n}
      \comp \sh_E}
      {\sem{\Gamma_i \vljud v_i : \typeA_i} = m_i
      \quad f : \typeA_1,\dots,\typeA_n \to \typeA \in \Sigma
      \quad E \in \Shuff(\Gamma_1 \dots \Gamma_n) }
	}
	&
	\infer[]{\sem{x : \typeA \vljud x : \typeA} = \id_{\sem{\typeA}}}{}
	\\
	& & & & \\
	 \infer[]{\sem{- \vljud \ast : \typeI} = \id_{\sem{\typeI}}}{}      
	 &
	 \multicolumn{4}{r|}
	 {
	 \infer[]{\sem{E \vljud \prog{pm}\ v\ \prog{to}\ x \otimes y.\
          w : \typeC} =
        n \comp \join_{\Delta; \typeA; \typeB}
        \comp \, \alpha \comp \sw \comp (m \otimes \id) \comp
        \spl_{\Gamma;\Delta} \comp \sh_E}
      {\sem{\Gamma \vljud v : \typeA \otimes \typeB} = m
      \quad \sem{\Delta , x : \typeA, y : \typeB \vljud w : \typeC} = n
      \quad E \in \Shuff(\Gamma;\Delta)}
	 }
	 \\
	 & & & & \\
	 \multicolumn{2}{|l}
	 {
	 \infer[]{\sem{E \vljud v \otimes w : \typeA \otimes \typeB} =
        (m \otimes n) \comp \spl_{\Gamma;\Delta}
        \comp \sh_E}{\sem{\Gamma \vljud v : \typeA} = m
      \quad \sem{\Delta \vljud w : \typeB} = n
      \quad E \in \Shuff(\Gamma;\Delta)} 
	 }
	 &
	\multicolumn{3}{c|}
	{
	\infer[]{\sem{E  \vljud v \ \prog{to}\ \ast.\ w : \typeA} =
        n \comp \lambda \comp
      (m \otimes \id) \comp \spl_{\Gamma; \Delta} \comp \sh_E}
      {\sem{\Gamma \vljud v : \typeI} = m \quad \sem{\Delta 
          \vljud w : \typeA} = n \quad E \in \Shuff(\Gamma;\Delta)}
	}	 
	 \\
        & & & & \\
        \multicolumn{2}{|l}
	{
        \infer[]{\sem{\Gamma \vljud \lambda x : \typeA . \, v : \typeA \multimap \typeB} =
        \overline{ (m \comp \join_{\Gamma; \typeA}) }}
        {\sem{\Gamma, x : \typeA \vljud v : \typeB} = m }
        } &
	\multicolumn{3}{c|}
	{
        \infer[]{\sem{E \vljud v \, w : \typeB} = \app \comp (m \otimes n)
        \comp \spl_{\Gamma;\Delta} \comp \sh_E}
        {\sem{\Gamma \vljud  v : \typeA \multimap \typeB} = m \quad
        \sem{\Delta \vljud  w : \typeA} = n \quad E \in \Shuff(\Gamma;\Delta)}     
	}
        \\	
        \hline
\end{tabular}
}
  \caption{Judgement interpretation on an autonomous category $\catC$.}
  \label{fig:lang_sem}
\end{figure*}

For every operation symbol $f : \typeA_1, \dots, \typeA_n \to \typeA$
in $\Sigma$ we postulate an interpretation
$\sem{f} : \sem{\typeA_1} \otimes \dots \otimes \sem{\typeA_n} \to
\sem{\typeA}$ as a $\catC$-morphism. The interpretation of judgements
is defined by induction over
derivations according to the rules in \cref{fig:lang_sem}.

As detailed in~\cite{benton92,mackie93,maietti05}, linear $\lambda$-calculus
comes equipped with a class of equations, given in ~\cref{fig:eqs}, specifically
\emph{equations-in-context} $\Gamma \vljud v = w : \typeA$, that corresponds to
the axiomatics of autonomous categories.  As usual, we omit the context and
typing information of the equations in Fig.~\ref{fig:eqs}, which can be
reconstructed in the usual way.  

\begin{figure*}[h!]
\captionsetup{width=\linewidth}
\small{
	\begin{tabular}{| c | c |}
		\hline
		Monoidal structure
		&
		Higher-order structure
		\\
		\hline
		\begin{tabular}{r c l}
	    $\prog{pm}\ v \otimes w\ \prog{to}\ x \otimes y.\ u$ &
	    $=$ & $u[v/x,w/y]$ \\
	    $\prog{pm}\ v\ \prog{to}\ x \otimes y.\
	    u[x \otimes y / z]$ &
	    $=$ & $u[v/z]$ \\
	    $\ast\ \prog{to}\ \ast.\ v$ & $=$ & $v$ \\
	    $v\ \prog{to}\ \ast.\ w[\ast/ z]$ & $=$ & $w[v/z]$
	  	\end{tabular}
  		&
	  	\begin{tabular}{r c l}
                      $(\lambda x : \typeA .\ v)\ w$ &$=$& $v[w/x]$ \\
                      $\lambda x : \typeA . (v\ x)$ &$=$& $v$
                \end{tabular}
	  	\\
                \hline
                \multicolumn{2}{|c|}{Commuting conversions} 
                \\
                \hline
                \multicolumn{2}{|c|}{
                \begin{tabular}{r c l}
                        $u[v\ \prog{to} \ast.\ w/z]$ &$=$& $v\ \prog{to}\ \ast.\ u[w/z]$ \\
	                $u[\prog{pm}\ v\ \prog{to}\ x \otimes y.\ w/z]$
	                &$=$& $\prog{pm}\ v\ \prog{to}\ x \otimes y.\ u[w/z]$
                \end{tabular}
                }\\
	  	\hline
	\end{tabular}
	\caption{Equations corresponding to the axiomatics of autonomous categories.}
  \label{fig:eqs}
}
\end{figure*}

The next step is to prove that the equational
schema listed in~\cref{fig:eqs} is sound w.r.t. autonomous categories. For
that effect, we will use the following \emph{exchange and substitution
lemma} which can be proved straightforwardly by induction using the coherence theorem of symmetric monoidal categories~\cite{maclane98}.

\begin{lem}[Exchange and Substitution]
  \label{lem:exch_subst_inter}
  For any judgements
  $\Gamma, x : \typeA, y : \typeB, \Delta \vljud v : \typeC$,
  $\Gamma, x : \typeA \vljud v : \typeB$, and
  $\Delta \vljud w : \typeA$, the following equations hold in
  every autonomous category $\catC$:
  \begin{align*}
    \sem{\Gamma, x : \typeA, y : \typeB, \Delta \vljud v : \typeC} & =
    \sem{\Gamma, y : \typeB, x : \typeA, \Delta \vljud v : \typeC}
    \comp \exch_{\Gamma, \underline{\typeA,\typeB}, \Delta} \\
    \sem{\Gamma, \Delta \vljud v[w / x] : \typeB} & =
    \sem{\Gamma, x : \typeA \vljud v : \typeB} \comp
    \join_{\Gamma;\typeA} \comp\, (\id \otimes \sem{\Delta \vljud w : \typeA})
    \comp \spl_{\Gamma;\Delta}
  \end{align*}
\end{lem}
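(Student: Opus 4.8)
The plan is to prove the two equations by structural induction on the unique derivation of the term $v$ (guaranteed by \cref{thm:properties}), establishing the exchange equation first and then invoking it when proving the substitution equation. In each inductive case I would unfold the interpretation of the conclusion according to \cref{fig:lang_sem}, apply the induction hypothesis to the immediate subterms, and reduce the goal to an equation between two $\catC$-morphisms that have the same domain and codomain and differ only by a rearrangement of the housekeeping morphisms $\spl$, $\join$, $\exch$, $\sh$ together with a repositioning of the denotations of subterms. Since every housekeeping morphism is by construction a composite of components of the structural isomorphisms $\alpha$, $\lambda$, $\rho$, $\sw$, and the two composites being compared realise the same reordering-and-rebracketing of the context, they agree by Mac Lane's coherence theorem for symmetric monoidal categories~\cite{maclane98}; the repositioning of the subterm denotations is handled by naturality of those isomorphisms and of currying $\overline{(-)}$.

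For the \emph{exchange} equation the rules $\rulename{hyp}$ and $\rulename{\typeI_i}$ are vacuous, since their contexts are too short to contain $\Gamma, x:\typeA, y:\typeB, \Delta$. In a binding rule such as $\rulename{\multimap_i}$ the transposed variables lie in the premise context without touching the bound variable, so the induction hypothesis applies directly to the premise and the residual identity, of the shape $\overline{(m \comp \exch \comp \join)} = \overline{(m \comp \join)} \comp \exch$, follows from naturality of currying and coherence. In a multi-premise rule --- $\rulename{ax}$, $\rulename{\otimes_i}$, $\rulename{\otimes_e}$, $\rulename{\typeI_e}$, $\rulename{\multimap_e}$ --- I would use that $E$ is a shuffle of the premises' contexts and that the variables are distinct: either the transposed pair lies entirely inside one premise context, in which case the change propagates through that premise's denotation and through $\spl$ and $\sh$, or the two variables come from different premises, in which case no premise changes and the whole effect is absorbed into $\sh_E$; in both subcases the leftover equality is a coherence identity.

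For the \emph{substitution} equation the base case is $v = x$ (rule $\rulename{hyp}$, which forces $\Gamma = -$ and $\typeB = \typeA$); after unfolding $\spl_{-;\Delta} = \lambda^{-1}$ and $\join_{-;\typeA} = \lambda$ the claim is exactly naturality of $\lambda$. In the inductive cases the variable $x$, being the last variable of the conclusion context, is by linearity the last variable of exactly one premise context; I would apply the induction hypothesis to the subterm typed there, leave the other subterms untouched, reassemble via \cref{fig:lang_sem}, and reconcile the plumbing by coherence. The step that needs care is a binding rule, namely $\rulename{\multimap_i}$ or $\rulename{\otimes_e}$, where $x$ precedes a bound variable in the relevant premise context: there I would first use the already-proved exchange equation together with \cref{thm:properties} to move $x$ to the end of that premise context, then apply the induction hypothesis, and finally move the inserted context of $w$ back past the bound variables, each move contributing an $\exch$ morphism that is reabsorbed by coherence.

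The hard part is not conceptual but bookkeeping: in the multi-premise and binding rules one must track precisely how the shuffle $E$ of the conclusion decomposes relative to the premises, how it is transformed by the exchange or the substitution, and check that the induced transformation of the composite of housekeeping morphisms is exactly the canonical symmetric-monoidal isomorphism between the relevant reordered, reassociated tensor products. The cleanest way to discharge these coherence appeals is to note that each housekeeping composite is the image under the interpretation of a canonical morphism of the free symmetric monoidal category on the (pairwise distinct) variables of the context --- where, by coherence, there is at most one such morphism between any two objects --- so that the equality becomes automatic once domains and codomains are seen to match.
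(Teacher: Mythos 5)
Your proposal is correct and takes essentially the same route as the paper, which dispatches this lemma in a single line as provable ``straightforwardly by induction using the coherence theorem of symmetric monoidal categories.'' Your structural induction on the unique derivation, with coherence and naturality discharging the residual identities among the housekeeping morphisms, is a faithful and correctly organised elaboration of that argument, and it rightly isolates the two genuinely delicate points: routing the substituted variable past bound variables via the already-established exchange equation, and verifying that the canonical composites being compared realise the same permutation, as symmetric monoidal coherence requires.
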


\begin{thm}
  \label{theo:bsound}
  The equations presented in Fig.~\ref{fig:eqs} are sound w.r.t.\  judgement
  interpretation. More specifically if $\Gamma \vljud v = w : \typeA$ is one of
  the equations in Fig.~\ref{fig:eqs} then $\sem{\Gamma \vljud v : \typeA} =
  \sem{\Gamma \vljud w : \typeA}$.
\end{thm}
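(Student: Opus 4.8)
The plan is to verify each equational schema of \cref{fig:eqs} separately, by unfolding both sides of the claimed identity with the interpretation rules of \cref{fig:lang_sem} and then reducing the resulting equality of $\catC$-morphisms to three ingredients: (i) the substitution clause of \cref{lem:exch_subst_inter}, applied once or iterated; (ii) the coherence theorem for symmetric monoidal categories~\cite{maclane98}, together with naturality of $\alpha$, $\lambda$, $\rho$, $\sw$; and (iii) for the higher-order equations, the universal property of the internal hom $\multimap$.

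First, the monoidal $\beta$-laws. For $\prog{pm}\ v \otimes w\ \prog{to}\ x \otimes y.\ u = u[v/x,w/y]$ I would expand the denotation of the left-hand side: the rule $\rulename{\otimes_i}$ gives $\sem{v\otimes w} = (\sem v \otimes \sem w)\comp \spl \comp \sh$, and feeding this into the rule $\rulename{\otimes_e}$ yields $\sem u$ precomposed with a string of housekeeping morphisms ($\join$, $\alpha$, $\sw$, $\spl$, $\sh$) and $\sem v \otimes \sem w$. On the other side, iterating the substitution clause of \cref{lem:exch_subst_inter} twice rewrites $\sem{u[v/x,w/y]}$ as $\sem u$ precomposed with another string of $\join$'s, $\spl$'s and the same tensor $\sem v \otimes \sem w$. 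Both strings are built from the same non-structural data and structural isomorphisms only, so coherence makes them equal. The $\eta$-law for $\otimes$ and the two unit laws ($\ast\ \prog{to}\ \ast.\ v = v$ and $v\ \prog{to}\ \ast.\ w[\ast/z] = w[v/z]$) are entirely analogous and in fact simpler.

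For the higher-order equations I would use the tensor--hom adjunction directly. The rule $\rulename{\multimap_i}$ interprets $\lambda x{:}\typeA.\,v$ as $\overline{\sem v \comp \join_{\Gamma;\typeA}}$ and $\rulename{\multimap_e}$ inserts an $\app$. For $(\lambda x{:}\typeA.\,v)\,w = v[w/x]$ the left-hand side unfolds to $\app \comp (\overline{\sem v \comp \join}\otimes \sem w)\comp \spl \comp \sh$; using $\app \comp (\overline{f}\otimes \id) = f$ this collapses (on the canonical context $\Gamma,\Delta$, where $\sh = \id$) to $\sem v \comp \join_{\Gamma;\typeA}\comp (\id \otimes \sem w)\comp \spl_{\Gamma;\Delta}$, which is precisely the right-hand side of the substitution clause of \cref{lem:exch_subst_inter}. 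The $\eta$-law $\lambda x{:}\typeA.\,(v\,x) = v$ uses instead the \emph{uniqueness} half of the universal property, i.e.\ that $\overline{(-)}$ and uncurrying are mutually inverse, so that after recognising the housekeeping morphisms as identities via coherence one gets $\overline{\app \comp (\sem v \otimes \id)} = \sem v$.

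The main obstacle is the commuting conversions, such as $u[\prog{pm}\ v\ \prog{to}\ x \otimes y.\ w/z] = \prog{pm}\ v\ \prog{to}\ x \otimes y.\ u[w/z]$. Here $u$ has a free variable $z$ replaced by an elimination term, and naively expanding both sides via \cref{fig:lang_sem} does not suffice: one also needs that substitution commutes appropriately with the elimination forms. I would establish this by an auxiliary induction on the derivation of $u$, parallel to the proof of \cref{lem:exch_subst_inter} itself, showing that $\sem{u[e/z]}$ for $e$ one of these elimination terms equals $\sem u$ precomposed with the structural rearrangement performed by pulling the elimination to the front. Granting that, both sides of each commuting conversion again become a composite of $\sem u$, $\sem v$, $\sem w$ with structural isomorphisms, and coherence plus naturality close the case. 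Throughout, the only genuinely non-bookkeeping ingredients are the two universal-property facts about $\multimap$ and the coherence theorem; the rest is a routine, if lengthy, diagram chase.
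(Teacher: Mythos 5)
Your proposal is correct and follows essentially the same route as the paper: unfold both sides via the rules of \cref{fig:lang_sem} and close the gap using \cref{lem:exch_subst_inter}, naturality, and symmetric monoidal coherence, plus the universal property of $\multimap$ for the two higher-order laws. The one superfluous step is the auxiliary induction you propose for the commuting conversions: the substitution clause of \cref{lem:exch_subst_inter} already applies to an \emph{arbitrary} substituted term $\Delta \vljud e : \typeA$, elimination forms included, so the paper handles, e.g., $u[v\ \prog{to}\ \ast.\ w/z]$ by one forward application of that clause (with $e = v\ \prog{to}\ \ast.\ w$), a naturality/coherence shuffle, and one backward application folding $\sem{u}\comp\join_{\Gamma;\typeA}\comp(\id\otimes\sem{w})\comp\spl$ back into $\sem{u[w/z]}$ --- no separate induction on the derivation of $u$ is needed.
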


\begin{proof}[Proof sketch]
  Follows from Lemma~\ref{lem:exch_subst_inter},
  the coherence theorem for symmetric monoidal categories, and
  naturality. We exemplify this with one of the commuting conversions.
    \begin{flalign*}
    & \, \sem{\Gamma, \Delta,E \vljud u[v\ \prog{to}\ \ast.\ w/x]
    : \typeB} & \\
    & = \sem{u}  \comp
    \join_{\Gamma;\typeA} \comp\, (\id \otimes\, \sem{v\ \prog{to}\ \ast.\ w})
    \comp \spl_{\Gamma; \Delta,E} & \text{\{Lemma~\ref{lem:exch_subst_inter}\}}\\
    & \defeq \sem{u} \comp \join_{\Gamma;\typeA} \comp\, (\id \otimes\,
    (\sem{w} \comp \lambda \comp (\sem{v} \otimes \id) \comp
    \spl_{\Delta;E})) \comp \spl_{\Gamma;\Delta,E} \\
    & = \sem{u} \comp \join_{\Gamma;\typeA} \comp\, (\id \otimes\, \sem{w})
    \comp (\id \otimes\,
    (\lambda \comp (\sem{v} \otimes \id) \comp
            \spl_{\Delta;E})) \comp \spl_{\Gamma;\Delta,E} & \\
    & = \sem{u} \comp \join_{\Gamma;\typeA} \comp\, (\id \otimes\, \sem{w})
    \comp \spl_{\Gamma;E} \comp \join_{\Gamma;E} \comp\, (\id \otimes\,
    (\lambda \comp (\sem{v} \otimes \id) \comp
            \spl_{\Delta;E})) \comp \spl_{\Gamma;\Delta,E} & \\
    & = \sem{u[w/x]} \comp \join_{\Gamma;E} \comp\, (\id \otimes\,
    (\lambda \comp (\sem{v} \otimes \id) \comp\,
            \spl_{\Delta;E})) \comp \spl_{\Gamma;\Delta,E}& 
            \text{\{Lemma~\ref{lem:exch_subst_inter}\}}\\
    & = \sem{u[w/x]} \comp \lambda \comp (\sem{v} \otimes \id) \comp
            \spl_{\Delta;\Gamma,E} \comp \sh_{\Gamma,\Delta,E} & 
            \text{\{c\}}\\
    & \defeq \sem{\Gamma,\Delta,E \vljud v\ \prog{to}\ \ast.\ u[w/x] :
      \typeB}
    &  \qedhere
  \end{flalign*}
\end{proof}

\begin{defi}[Linear $\lambda$-theories]\label{defn:theory}
  Consider a tuple $(G,\Sigma)$ consisting of a class $G$ of ground types and a
  class $\Sigma$ of sorted operation symbols.  A \emph{linear $\lambda$-theory}
  $((G,\Sigma),Ax)$ is a triple such that $Ax$ is a class of
  equations-in-context over linear $\lambda$-terms built from $(G,\Sigma)$.
\end{defi}
The elements of $Ax$ are called the \emph{axioms} of the theory. Let $Th(Ax)$
be the smallest congruence that contains $Ax$, the equations listed in
\cref{fig:eqs}, and that is closed under exchange and substitution
(\cref{thm:properties}).  We call the elements of $Th(Ax)$ the
\emph{theorems} of the theory.
  
\begin{defi}[Models of linear $\lambda$-theories]\label{defn:model}
        Consider a linear $\lambda$-theory $((G,\Sigma),Ax)$ and also an autonomous
        category $\catC$. Suppose that for each $X \in G$ we have an
        interpretation $\sem{X}$ that is a $\catC$-object and analogously for
        the operation symbols. This interpretation structure is a \emph{model}
        of the theory if all axioms are satisfied by the interpretation.
\end{defi}

\begin{thm}[Soundness \& Completeness]~\label{theo:sound_compl}
  Consider a linear $\lambda$-theory $\mathscr{T}$. Then an equation $\Gamma
  \vljud v = w : \typeA$ is a theorem of $\mathscr{T}$ iff it is satisfied by
  all models of the theory.
\end{thm}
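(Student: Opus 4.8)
The plan is to prove the two directions separately, with soundness being essentially bookkeeping on top of Theorem~\ref{theo:bsound} and completeness resting on a term-model (syntactic category) construction.

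For soundness, suppose $\Gamma \vljud v = w : \typeA$ is a theorem of $\mathscr{T} = ((G,\Sigma),Ax)$, i.e.\ it lies in $Th(Ax)$, the smallest congruence containing $Ax$, the equations of Fig.~\ref{fig:eqs}, closed under exchange and substitution. Fix a model $\catC$ of $\mathscr{T}$; I want $\sem{\Gamma \vljud v : \typeA} = \sem{\Gamma \vljud w : \typeA}$. Proceed by induction on the derivation of membership in $Th(Ax)$. The base cases split into: (i) the axioms in $Ax$, which hold by the very definition of ``model'' (Def.~\ref{defn:model}); and (ii) the equations of Fig.~\ref{fig:eqs}, which hold by Theorem~\ref{theo:bsound}. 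The inductive cases are: reflexivity, symmetry, transitivity (trivial, since $=$ of morphisms is an equivalence relation); congruence w.r.t.\ each term former of Fig.~\ref{fig:lang} (immediate, because each clause of Fig.~\ref{fig:lang_sem} builds the denotation functorially out of the denotations of subterms — equal inputs give equal outputs); closure under exchange and substitution (handled exactly by the two equations of Lemma~\ref{lem:exch_subst_inter}: if $\sem{v} = \sem{v'}$ then post/pre-composing with the same housekeeping morphisms preserves the equality). This finishes soundness.

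For completeness, I build the \emph{syntactic (term) model} of $\mathscr{T}$. Define a category $\Syn(\mathscr{T})$ whose objects are the types of the calculus, and whose morphisms $\typeA \to \typeB$ are equivalence classes, modulo $Th(Ax)$, of judgements $x : \typeA \vljud v : \typeB$ (a single-variable context suffices by the $\otimes$ and $\typeI$ structure). Identities are $[x : \typeA \vljud x : \typeA]$ and composition is by substitution, $[y \vljud w] \circ [x \vljud v] \defeq [x \vljud w[v/y]]$; this is well defined because $Th(Ax)$ is closed under substitution, and associative/unital by the substitution equations. The tensor on objects is $\otimes$ of types, with the evident action on morphisms using $\prog{pm}$; the unit is $\typeI$; the internal hom is $\multimap$. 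The coherence isomorphisms (associator, unitors, symmetry) and the adjunction $\otimes \typeA \dashv \typeA \multimap (-)$ are all witnessed by $\lambda$-terms, and the triangle/pentagon/hexagon laws together with the $\beta\eta$-adjunction equations are precisely the equations of Fig.~\ref{fig:eqs} — so by Theorem~\ref{theo:bsound} (applied inside $\Syn(\mathscr{T})$, or simply because those equations are in $Th(Ax)$) $\Syn(\mathscr{T})$ is autonomous. It becomes a model of $\mathscr{T}$ by setting $\sem{X} \defeq X$ for $X \in G$ and $\sem{f} \defeq [x_1 : \typeA_1, \dots \vljud f(x_1,\dots,x_n) : \typeA]$ (transported along $\spl$ to a single-variable context); that the axioms $Ax$ hold in this model is immediate since they are in $Th(Ax)$ by definition.

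The key lemma tying it together is that, in $\Syn(\mathscr{T})$, the denotation $\sem{\Gamma \vljud v : \typeA}$ computed via Fig.~\ref{fig:lang_sem} equals the syntactic morphism $[x_\Gamma : \otimes\Gamma \vljud v : \typeA]$ (where $\otimes\Gamma$ is the tensor of the types in $\Gamma$, and we contract the context to one variable via $\prog{pm}$-elimination). This is proved by induction on the derivation of $\Gamma \vljud v : \typeA$ (recall derivations are unique by Theorem~\ref{thm:properties}), each step matching a clause of Fig.~\ref{fig:lang_sem} against the corresponding term-former using the autonomous structure of $\Syn(\mathscr{T})$ — this is the main technical obstacle, as it requires carefully checking that the housekeeping morphisms $\spl,\join,\exch,\sh$ are interpreted as the expected $\lambda$-terms, which in turn uses Lemma~\ref{lem:exch_subst_inter}. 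Granting this lemma, completeness follows: if $\Gamma \vljud v = w : \typeA$ holds in all models, it holds in $\Syn(\mathscr{T})$, so $[x_\Gamma \vljud v : \typeA] = [x_\Gamma \vljud w : \typeA]$, i.e.\ $x_\Gamma : \otimes\Gamma \vljud v = w : \typeA$ is a theorem; contracting/expanding the context via the (provable) $\prog{pm}$-equations of Fig.~\ref{fig:eqs} then gives $\Gamma \vljud v = w : \typeA \in Th(Ax)$, as required.
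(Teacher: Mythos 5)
Your proposal is correct and follows essentially the same route as the paper: soundness by induction over the rules generating $Th(Ax)$ using \cref{theo:bsound} and \cref{lem:exch_subst_inter}, and completeness via the Lindenbaum--Tarski-style syntactic category $\Syn(\mathscr{T})$ with types as objects and provability-classes of terms $x:\typeA \vljud v:\typeB$ as morphisms, together with the lemma identifying $\sem{\Gamma \vljud v:\typeA}$ with the corresponding syntactic morphism. The paper only sketches these steps; your elaboration of the congruence cases and of the housekeeping-morphism bookkeeping is consistent with what the sketch leaves implicit.
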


\begin{proof}[Proof sketch]
        Soundness follows by induction over the rules that define $Th(Ax)$ and
        by \cref{theo:bsound}.  Completeness is based on the idea of a
        \emph{Lindenbaum-Tarski algebra}: it follows from building the
        \emph{syntactic category} $\Syn(\mathscr{T})$ of $\mathscr{T}$ (also
        known as term model), showing that it possesses an autonomous structure
        and also that equality $\sem{\Gamma \vljud v : \typeA} = \sem{\Gamma
        \vljud w : \typeA}$ in the syntactic category is equivalent to
        provability $\Gamma \vljud v = w : \typeA$ in the theory.  The
        syntactic category of $\mathscr{T}$ has as objects the types of
        $\mathscr{T}$ and as morphisms $\typeA \to \typeB$ the equivalence
        classes (w.r.t.\  provability) of terms $v$ for which we can derive $x :
        \typeA \vljud v : \typeB$.
\end{proof}

\subsection{Equivalence theorem between linear $\lambda$-theories and
autonomous categories}
\label{sec:eq}

We now
present a category of $\lambda$-theories, a category
of autonomous categories, and then show that they are equivalent. 
In order to prepare the stage, we will start by establishing a
bijective correspondence (up-to isomorphism) between models of a
$\lambda$-theory $\mathscr{T}$ on a category $\catC$ and \emph{autonomous}
functors $\Syn(\mathscr{T}) \to \catC$. Although not strictly necessary for
establishing the aforementioned equivalence, this bijection has multiple
benefits: first it will allow us to formally see models as functors and thus
opens up the possibility of applying functorial constructions to them. For
example, the notion of a natural isomorphism (between functors) carries to the
notion of a model isomorphism. Second, it will later on help us motivate the
notion of a morphism and equivalence between $\lambda$-theories. Third, it will
help us establish the aforementioned equivalence of categories. Our proof
of the bijective correspondence between models and autonomous functors is
inspired by an analogous one~\cite{crole93} for the Cartesian case.

We first recall the definition of an autonomous functor. We will use $I_\catC$
to denote the unit of a monoidal category $\catC$, and drop the subscript
whenever no ambiguities arise.

\begin{defi}
        \label{def:autf}
        A functor $\funF : \catC \to \catD$ between two monoidal categories
        $\catC$ and $\catD$ is called \emph{monoidal} if it is equipped
        with a morphism $\un : I_\catD \to \funF (I_\catC)$ and a natural
        transformation $\mu_{X,Y} : \funF X \otimes_\catD \funF Y \to \funF (X
        \otimes_\catC Y)$ such that the following diagrams commute:
        \begin{center}
        \begin{tabular}{l r}
        \xymatrix@R=16pt{
        	\funF X \otimes I \ar[r]^(0.46){\id \otimes \un} 
        	\ar[d]_{\rho} &
        	\funF X \otimes \funF I \ar[d]^{\mu_{}} \\
        	\funF X & \ar[l]^(0.55){\funF \rho} \funF(X \otimes I)
        } 
    	&
    	\qquad
    	\multirow{4}{*}[-14pt]{ 
    		\xymatrix@R=18pt@C=66pt{
    			(\funF X \otimes \funF Y) \otimes \funF Z \ar[r]^{\alpha^{-1}} 
    			\ar[d]_{\mu\, \otimes\, \id} &
    			\funF X \otimes (\funF Y \otimes \funF Z) 
    			\ar[d]^{\id \otimes\, \mu} \\
    			\funF ( X \otimes Y) \otimes \funF Z 
    			\ar[d]_{\mu} & 
    			\funF X \otimes \funF ( Y \otimes Z) 
    			\ar[d]^{\mu} \\
    			\funF ((X \otimes Y) \otimes Z) \ar[r]_{\funF \alpha^{-1}} & 
    			\funF (X \otimes (Y \otimes Z))
    		} 
    	}
    	\\ 
		\\
    	 \xymatrix@R=16pt{
    		I \otimes \funF X \ar[r]^(0.46){\un \otimes \id} 
    		\ar[d]_{\lambda} &
    		\funF I \otimes \funF X \ar[d]^{\mu_{}} \\
    		\funF X & \ar[l]^(0.55){\funF \lambda} \funF(I \otimes X)
    	}       
        \end{tabular}
    	\end{center}
    
        \noindent The functor $\funF : \catC \to \catD$ is \emph{strong monoidal} if $\un
        : I_\catD \to \funF(I_\catC)$ is an isomorphism and if $\mu_{X,Y} :
        \funF X \otimes_\catD \funF Y \to \funF(X \otimes_\catC Y)$ is a
        natural isomorphism. We call $\funF$ \emph{strict} if $\un$ is the
        identity and $\mu_{X,Y}$ is also the identity for all $\catC$-objects
        $X,Y$. Assume next that both categories $\catC$ and $\catD$ are
        symmetric monoidal. Then we say that $\funF$ is \emph{symmetric
        monoidal} if it is monoidal and moreover the diagram below commutes.
        \[
                \xymatrix@R=16pt{
                        \funF X \otimes \funF Y \ar[r]^{\sw} \ar[d]_{\mu} & 
                        \funF Y \otimes \funF X \ar[d]^{\mu}
                        \\
                        \funF (X \otimes Y) \ar[r]_{\funF \sw}
                        & \funF(Y \otimes X)
                }
        \]
        Finally consider the following $\catD$-morphism:
        \[
               \xymatrix{
                       \funF(X \multimap Y) \otimes \funF X 
                       \ar[r]^{\mu} & 
                       \funF((X \multimap Y) \otimes X)  \ar[r]^(0.68){\funF \app} & 
                       \funF Y
               }
        \]
        We call $\funF$ \emph{autonomous} if it is symmetric strong monoidal
        and if the right transpose of the previous morphism is an isomorphism.
        We call $\funF$ \emph{strict autonomous} if it is symmetric strict 
        monoidal and if the right transpose is the identity.
\end{defi}

Next, for a linear $\lambda$-theory $\mathscr{T}$ we will show that autonomous
functors $\funF : \catC \to \catD$ send models of $\mathscr{T}$ on $\catC$ to
models of $\mathscr{T}$ on $\catD$.  In other words, the autonomous functor
$\funF$ permits a `$\mathscr{T}$-respecting' change of interpretation domain.
This mapping will be useful for proving the aforementioned bijection between
autonomous functors and models.  Let us consider an interpretation $\sem{-}_M$
of a linear $\lambda$-theory $\mathscr{T}$ over an autonomous category $\catC$,
and an autonomous functor $\funF : \catC \to \catD$. We define the
interpretation $\sem{-}_{F_\ast M}$ on ground types by
\[
\sem{X}_{\funF_\ast M} \defeq \funF \sem{X}_M, \qquad X\in G.
\]
In order to define the interpretation $\sem{f}_{\funF_\ast M}$ of
operation symbols $f: \typeA_1,\dots,\typeA_n \to \typeA$, we start by inductively  building an isomorphism $h_\typeA : \sem{\typeA}_{\funF_\ast M} \to
\funF \sem{\typeA}_M$ for all types $\typeA$. 
\begin{itemize}
	\item If $\typeA=X\in G$ is a ground type then
	\[
	h_\typeA\defeq \id_{F\sem{\typeA}_M}:\sem{\typeA}_{\funF_\ast M} \to  \funF \sem{\typeA}_M
	\]
	\item  If $\typeA=\typeI$ then
	\[h_\typeI \defeq \un :
	\sem{\typeI}_{\funF_\ast M} = I_\catD \to \funF I_\catC = \funF
	\sem{\typeI}_M
	\]
	\item  If $\typeA=\typeA_1 \otimes \typeA_2$ then
	\[
	h_{\typeA_1 \otimes \typeA_2}\defeq\mu \comp
	(h_{\typeA_1} \otimes h_{\typeA_2}): \sem{\typeA_1 \otimes \typeA_2}_{\funF_\ast M} \to	\funF\sem{\typeA_1 \otimes \typeA_2}_M
	\]
	\item If $\typeA=\typeA_1 \multimap \typeA_2$ then
	\[
	h_{\typeA_1 \multimap \typeA_2}\defeq m^{-1} \comp (-
	\comp h_{\typeA_1}^{-1}) \comp (h_{\typeA_2} \comp -): \sem{\typeA_1 \multimap \typeA_2}_{\funF_\ast M}\to \funF\sem{\typeA_1
		\multimap\typeA_2}_M
	\]
	where $m:
	\funF(\sem{\typeA_1}_M \multimap \sem{\typeA_2}_M) \to \funF \sem{\typeA_1}_M
	\multimap \funF \sem{\typeA_2}_M$ is the right transpose of $\funF (\app)
	\comp \mu$
\end{itemize}
With these isomorphisms in place, we define the interpretation of operation symbols as:
\[
        \sem{f}_{\funF_\ast M} \defeq h^{-1}_{\typeA} \comp \funF \sem{f}_M 
        \comp h_{\typeA_1 \otimes \dots \otimes \typeA_n}
\]
Note that if $\funF$ is strict the isomorphism $h_\typeA$ collapses to the identity and $\sem{f}_{\funF_\ast M}=F\sem{f}_M$.
Whenever no ambiguities arise we will drop the subscript in $h_\typeA$.  For a
context $\Gamma$ let us define $\sem{\Gamma}^\funF_M$ via the equations
$\sem{-}^\funF_M = I_\catD$, $\sem{\Gamma, x  : \typeA}^\funF_M =
\sem{\Gamma}^\funF_M \otimes \funF \sem{\typeA}_M$ if $\Gamma$ is non-empty and
$\sem{\Gamma, x : \typeA}^\funF_M = \funF \sem{\typeA}_M$ otherwise.
Intuitively, $\sem{\Gamma}^\funF_M$ corresponds to a pointwise application of
$\funF$ to the tuple of objects in the tensor $\sem{\Gamma}_M$. We can then
extend the isomorphism $h: \sem{\typeA}_{\funF_\ast M} \to \funF
\sem{\typeA}_M$ to contexts $h[\Gamma] : \sem{\Gamma}_{\funF_\ast M} \to
\sem{\Gamma}^\funF_M$ via induction.

In order to show that $\funF_\ast M$ is a model of $\mathscr{T}$ whenever $M$
is, we need to fix extra notation.  Recall the natural transformation
$\mu_{X,Y} : \funF X \otimes \funF Y \to \funF (X \otimes Y)$.  Then for $n
\geq 0$, let us define the morphism $\mu[X_1,\dots,X_n] : \funF X_1\, \otimes
\dots \otimes \funF X_n \to \funF(X_1 \otimes \dots \otimes X_n)$ by setting
$\mu[-] = \un : I_\catD \to \funF (I_\catC)$, $\mu[X] = \id$ and
$\mu[X_1,\dots,X_n] = \mu \comp (\mu[X_1,\dots,X_{n-1}] \otimes \id)$ for $n
\geq 2$. Given a context $\Gamma = x_1 : \typeA_1, \dots, x_n : \typeA_n$ we
use $\mu[\Gamma]$ to denote the morphism
$\mu[\sem{\typeA_1}_M,\dots,\sem{\typeA_n}_M] : \sem{\Gamma}^\funF_M \to \funF
\sem{\Gamma}_M$.  Finally, note that for $n$ contexts $\Gamma_1,\dots,\Gamma_n$
we can build a `split' morphism $\spl^\funF_{\Gamma_1;\dots;\Gamma_n} :
\sem{\Gamma_1,\dots,\Gamma_n}^\funF_M \to \sem{\Gamma_1}^\funF_M \otimes \dots
\otimes \sem{\Gamma_n}^\funF_M$ analogous to the morphism
$\spl_{\Gamma_1;\dots;\Gamma_n} : \sem{\Gamma_1,\dots,\Gamma_n}_M \to
\sem{\Gamma_1}_M \otimes \dots \otimes \sem{\Gamma_n}_M$.  We will also need
the following lemma which states that the morphisms $h_\typeA$ and the monoidal
structure of $\funF : \catC \to \catD$ commute with the housekeeping morphisms
used for judgement interpretation. The proof is straightforward, using the
definitions and properties of symmetric monoidal categories.

\begin{lem}\label{lem:housekeeping}
	The following diagrams commute:
		\begin{equation}
        \label{dia_h_sp}
        \xymatrix@C=60pt{
                \sem{\Gamma_1,\dots,\Gamma_n}_{\funF_\ast M} 
                \ar[r]^(0.45){\spl_{\Gamma_1;\dots;\Gamma_n}} 
                \ar[d]_{h[\Gamma_1, \dots, \Gamma_n]} &
                \sem{\Gamma_1}_{\funF_\ast M} \otimes\, \dots\, \otimes\,
                \sem{\Gamma_n}_{\funF_\ast M}
                \ar[d]^{h[\Gamma_1]\, \otimes\, \dots\, \otimes\, h[\Gamma_n]} 
                \\
                \sem{\Gamma_1,\dots,\Gamma_n}^\funF_M
                \ar[r]_(0.45){\spl^\funF_{\Gamma_1;\dots;\Gamma_n}} &
                \sem{\Gamma_1}^\funF_M \otimes\, \dots \otimes 
                \sem{\Gamma_n}^\funF_M
        }
		\end{equation}
		\begin{equation}
        \label{dia_mu_sp}
        \xymatrix@C=60pt{
                \sem{\Gamma_1,\dots,\Gamma_n}^\funF_{M} 
                \ar[r]^(0.47){\spl^\funF_{\Gamma_1;\dots;\Gamma_n}} 
                \ar[dd]_{\mu[\Gamma_1,\dots,\Gamma_n]} &
                \sem{\Gamma_1}^\funF_{M} \otimes\, \dots \otimes
                \sem{\Gamma_n}^\funF_{M}
                \ar[d]^{\mu[\Gamma_1]\, \otimes\, \dots\, \otimes\,
                \mu[\Gamma_n]} 
                \\ 
                & \funF \sem{\Gamma_1}_M \otimes \dots \otimes
                \funF \sem{\Gamma_n}_M
                \ar[d]^{\mu[\sem{\Gamma_1}_M,\dots,
                \sem{\Gamma_n}_M]}\\
                \funF \sem{\Gamma_1,\dots,\Gamma_n}_M
                \ar[r]_(0.47){\funF \spl_{\Gamma_1;\dots;\Gamma_n}} &
                \funF (\sem{\Gamma_1}_M \otimes\, \dots \otimes\, 
                \sem{\Gamma_n}_M)
        }
		\end{equation}
		\begin{minipage}{0.49\textwidth}
		\begin{equation}
        \label{dia_h_sh}
        \xymatrix@R=16pt@C=36pt{
                \sem{E}_{\funF_\ast M}
                \ar[r]^(0.42){\sh_{E}}
                \ar[d]_{h[E]} &
                \sem{\Gamma_1,\dots,\Gamma_n}_{\funF_\ast M} 
                \ar[d]^{h[\Gamma_1,\dots,\Gamma_n]} \\
                \sem{E}^\funF_{M}
                \ar[r]_(0.42){\sh^\funF_{E}} 
                &
                \sem{\Gamma_1,\dots,\Gamma_n}^\funF_{M} 
        } 
		\end{equation}
		\end{minipage}
		\begin{minipage}{0.49\textwidth}
		\begin{equation}
        \label{dia_mu_sh}
        \xymatrix@R=16pt@C=36pt{
                \sem{E}^\funF_{M}
                \ar[r]^(0.42){\sh^\funF_{E}}
                \ar[d]_{\mu[E]} &
                \sem{\Gamma_1,\dots,\Gamma_n}^\funF_{M} 
                \ar[d]^{\mu[\Gamma_1,\dots,\Gamma_n]} \\
                \funF \sem{E}_{M}
                \ar[r]_(0.42){\funF \sh_{E}} 
                &
                \funF \sem{\Gamma_1,\dots,\Gamma_n}_{M} 
        }
		\end{equation}
		\end{minipage}
\end{lem}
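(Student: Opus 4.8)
The four diagrams all assert that the "change of base" isomorphisms $h_\typeA$ (resp. the comparison maps $\mu[-]$) are compatible with the housekeeping morphisms $\spl$ and $\sh$. My plan is to treat them one at a time, in the order \eqref{dia_h_sp}, \eqref{dia_mu_sp}, \eqref{dia_h_sh}, \eqref{dia_mu_sh}, since the last two reduce to the first two.

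First, for \eqref{dia_h_sp}: I would argue by induction on $n$, and for $n=2$ by a further induction on the structure of $\Gamma_2$ following the defining clauses of $\spl_{\Gamma_1;\Gamma_2}$ (the four cases $\spl_{-;\Gamma}=\lambda^{-1}$, $\spl_{\Gamma;-}=\rho^{-1}$, $\spl_{\Gamma;x:\typeA}=\id$, and $\spl_{\Gamma;(\Delta,x:\typeA)}=\alpha^{-1}\comp(\spl_{\Gamma;\Delta}\otimes\id)$). Each base case unwinds to a coherence identity: in the $\lambda^{-1}$ and $\rho^{-1}$ cases one uses precisely the two unit triangles in the definition of a monoidal functor (applied to $\funF$), relating $\un$, $\mu$ and the unitors; in the $\alpha^{-1}$ case one uses the associativity pentagon from that same definition. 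The naturality of $\mu$ is what lets the induction hypothesis on $\spl_{\Gamma;\Delta}$ slot in. For general $n$ one just composes, using the inductive clause $\spl_{\Gamma_1;\dots;\Gamma_n}=(\spl_{\Gamma_1;\dots;\Gamma_{n-1}}\otimes\id)\comp\spl_{\Gamma_1,\dots,\Gamma_{n-1};\Gamma_n}$ together with the definition $h[\Gamma,x:\typeA]=h[\Gamma]\otimes h_\typeA$.

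Diagram \eqref{dia_mu_sp} is handled the same way, replacing $h$ by $\mu[-]$ and $\sem{-}_{\funF_\ast M}$ by $\sem{-}^\funF_M$; the defining clauses $\mu[-]=\un$, $\mu[X]=\id$, $\mu[X_1,\dots,X_n]=\mu\comp(\mu[X_1,\dots,X_{n-1}]\otimes\id)$ play the role the $h$-clauses did, and again the monoidal-functor coherence diagrams for $\funF$ plus naturality of $\mu$ close each case. Finally, \eqref{dia_h_sh} and \eqref{dia_mu_sh}: since $\sh_E$ is by definition a composite of exchange morphisms $\exch$, and each $\exch$ is in turn built from $\join$ (the inverse of $\spl$), $\id\otimes\sw\otimes\id$ and $\spl$, it suffices to prove the exchange-morphism analogue of each square; that analogue follows by pasting two instances of \eqref{dia_h_sp} (resp. \eqref{dia_mu_sp}) — one for $\spl$ and one inverted for $\join$ — around the naturality square for $\mu$ with respect to the symmetry $\sw$, i.e. the symmetric-monoidal-functor axiom for $\funF$. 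Composing these along the definition of $\sh_E$ yields the full diagrams.

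The only mildly delicate point — and the one I would spell out carefully — is the bookkeeping in the $\multimap$-free induction: keeping straight which unitor/associator instance appears and matching it to the correct triangle or pentagon face of the monoidal-functor axioms, especially when a context is empty and the interpretation clause degenerates (e.g. $\sem{-}^\funF_M=I_\catD$). There is no conceptual obstacle; everything is forced by coherence, which is exactly why the statement asserts the proof is "straightforward, using the definitions and properties of symmetric monoidal categories." I would present one representative case in full (say the $\alpha^{-1}$ clause of \eqref{dia_h_sp}) and leave the remaining cases to the reader.
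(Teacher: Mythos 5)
Your plan is correct and is essentially the argument the paper has in mind: the paper omits the proof entirely, remarking only that it is ``straightforward, using the definitions and properties of symmetric monoidal categories'', and your structural induction over the defining clauses of $\spl$, the coherence axioms for $(\funF,\mu,\un)$, and the reduction of the $\sh$-squares to pasted $\spl$/$\join$-squares around the symmetry axiom is exactly that argument made explicit. One small simplification you could exploit: diagrams \eqref{dia_h_sp} and \eqref{dia_h_sh} do not actually require the monoidal-functor axioms, since both horizontal maps there are the same structural isomorphism of $\catD$ (built from $\alpha$, $\lambda$, $\rho$, $\sw$) evaluated at different objects, so those two squares are plain naturality squares with respect to the family $h_\typeA$; the unit triangles and the associativity/symmetry axioms of the monoidal functor are genuinely needed only for \eqref{dia_mu_sp} and \eqref{dia_mu_sh}, where $\un$, $\mu$ and $\funF$ enter.
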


\begin{proposition}
        \label{prop:model}
        If $M$ is a model of the $\lambda$-theory $\mathscr{T}$, then $\funF_\ast M$ is also model of $\mathscr{T}$.
\end{proposition}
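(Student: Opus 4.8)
The plan is to show that $\funF_\ast M$ satisfies every axiom of $\mathscr{T}$, given that $M$ does. The crucial bridge is a coherence statement relating the interpretation under $\funF_\ast M$ to the interpretation under $M$ pushed through $\funF$; concretely, I would first establish, by induction on derivations, that for every judgement $\Gamma \vljud v : \typeA$ the square
\[
\xymatrix@C=50pt{
  \sem{\Gamma}_{\funF_\ast M} \ar[r]^{\sem{\Gamma \vljud v : \typeA}_{\funF_\ast M}} \ar[d]_{h[\Gamma]} & \sem{\typeA}_{\funF_\ast M} \ar[d]^{h_\typeA} \\
  \sem{\Gamma}^\funF_M \ar[r] & \funF\sem{\typeA}_M
}
\]
commutes, where the bottom arrow is the `pointwise-$\funF$' transport of $\sem{\Gamma \vljud v : \typeA}_M$ obtained by pre/post-composing $\funF\sem{\Gamma \vljud v : \typeA}_M$ with the structure maps $\mu[\Gamma]$ and (the inverse of) $\mu[\typeA]$ — more precisely $h_\typeA \comp \sem{v}_{\funF_\ast M} = \mu[\,\typeA\,]^{-1}\comp \funF\sem{v}_M \comp \mu[\Gamma]\comp h[\Gamma]$, reading $\mu$ appropriately on the hom-type case via the map $m$. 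Once this is in hand, the proposition is immediate: if $\Gamma \vljud v = w : \typeA$ is an axiom, then $\sem{v}_M = \sem{w}_M$ since $M$ is a model, hence the right-hand sides of the coherence identity for $v$ and for $w$ coincide, and since $h_\typeA$ and $h[\Gamma]$ are isomorphisms we conclude $\sem{v}_{\funF_\ast M} = \sem{w}_{\funF_\ast M}$.

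The induction proceeds by cases over the rules of \cref{fig:lang}, using the interpretation clauses of \cref{fig:lang_sem}. The base cases \rulename{hyp}, \rulename{\typeI_i} are trivial (both sides are identities up to $h$). For the structural rules \rulename{\otimes_i}, \rulename{\otimes_e}, \rulename{\typeI_e}, \rulename{\multimap_e}, \rulename{ax} the argument is the same shape: expand both interpretations via \cref{fig:lang_sem}, invoke the inductive hypotheses on the premises, and then slide the $h$'s and $\mu$'s past the housekeeping morphisms $\spl$, $\join$, $\sh$ using \cref{lem:housekeeping}, and past the monoidal-coherence isomorphisms $\lambda,\rho,\alpha,\sw$ using the defining coherence diagrams of a symmetric monoidal functor from \cref{def:autf} together with naturality of $\mu$. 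The case \rulename{\multimap_i} is where the map $m$ (the right transpose of $\funF(\app)\comp\mu$) and the definition of $h_{\typeA_1\multimap\typeA_2}$ enter: one has to check that currying commutes appropriately with $\funF$ and the transpose, which is exactly the content of how $h$ was defined on hom-types, plus the fact that $m$ is an isomorphism (here using that $\funF$ is autonomous). The clause for \rulename{ax} additionally uses that $\sem{f}_{\funF_\ast M}$ was \emph{defined} as $h_\typeA^{-1}\comp\funF\sem{f}_M\comp h_{\typeA_1\otimes\dots\otimes\typeA_n}$, so the coherence square for $f$ holds essentially by construction.

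The main obstacle is the bookkeeping in the inductive step: keeping track of which $h$ (for which context or type) appears where, and reorganising long composites of $\mu[-]$, $h[-]$, $\spl^\funF$, $\sh^\funF$ so that \cref{lem:housekeeping} and the monoidal-functor axioms apply cleanly. There is no conceptual difficulty — every rewrite is justified by naturality of $\mu$, the coherence diagrams of \cref{def:autf}, and the commuting squares \eqref{dia_h_sp}--\eqref{dia_mu_sh} — but the $\multimap_i$/$\multimap_e$ cases require care because transposition interacts with $\funF$ only through $m$, and one must check the naturality of the transpose in the right variables. I would state the coherence lemma as a separate numbered lemma (it is the real work) and then give the proof of \cref{prop:model} in the two lines indicated above, noting as a remark that when $\funF$ is strict all the $h$'s and $\mu$'s are identities and the statement collapses to $\sem{v}_{\funF_\ast M} = \funF\sem{v}_M$, making the proposition obvious in that special case.
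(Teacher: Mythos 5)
Your proposal follows essentially the same route as the paper's proof: the paper also reduces the proposition to the coherence identity $\sem{v}_{\funF_\ast M} = h_\typeA^{-1}\comp \funF\sem{v}_M\comp\mu[\Gamma]\comp h[\Gamma]$ (stated as an in-proof assumption rather than a separate lemma), derives the conclusion from the invertibility of $h$, and then establishes the identity by induction over the semantic rules using \cref{lem:housekeeping}, naturality of $\mu$, and the transpose/$m$ manipulations in the $\multimap$ cases exactly as you describe. The only differences are presentational (your extra $\mu[\typeA]^{-1}$ is the identity on a single type, so the square coincides with the paper's equation).
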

\begin{proof}
        Let us first assume that
        for all judgements $\Gamma \vljud v : \typeA$ the following
        equation holds.
        \[
                \sem{\Gamma \vljud v : \typeA}_{\funF_\ast M} = 
                h^{-1}_\typeA \comp 
                \funF \sem{\Gamma \vljud v : \typeA}_{M} \comp \mu[\Gamma] \comp h[\Gamma]
        \]
        Diagrammatically this corresponds to the commutativity of the diagram,
        \[
                \xymatrix@C=60pt{
                        \sem{\Gamma}_{\funF_\ast M} \ar[r]^{\sem{v}_{\funF_\ast M}}
                        \ar[d]_{\mu[\Gamma]\ \comp\ h[\Gamma]}
                        & \sem{\typeA}_{\funF_\ast M} \ar@/^/[d]^{h} \\
                        \funF \sem{\Gamma}_M \ar[r]_{\funF \sem{v}_{M}}
                        & \ar@/^/[u]^{h^{-1}} \funF \sem{\typeA}_M 
                }
        \]
        From this assumption we reason as follows:
        \begin{flalign*}
                & v = w & \\
                & \Rightarrow \sem{v}_M = \sem{w}_M & \\
                & \Rightarrow \funF \sem{v}_M = \funF \sem{w}_M & \\
                & \Rightarrow h^{-1} \comp \funF \sem{v}_M = h^{-1} \comp \funF \sem{w}_M &  \\
                        & \Rightarrow h^{-1} \comp \funF \sem{v}_M \comp \mu[\Gamma] 
                        \comp h[\Gamma] = h^{-1} \comp 
                        \funF \sem{w}_M \comp \mu[\Gamma] \comp h[\Gamma]  
                        & \\
                        & \Rightarrow 
                        \sem{v}_{\funF_\ast M} = \sem{w}_{\funF_\ast M}  
                        & \text{\{Assumption\}}
        \end{flalign*}
        which indeed entails our claim. The rest of the proof amounts to
        showing that our assumption holds. This follows from induction over the
        semantic rules of linear $\lambda$-calculus. We present a selection of
        some cases, the other ones are obtained in an analogous manner.
          \begin{flalign*}
                & \, \sem{v \otimes w}_{\funF_\ast M} \\
                & = (\sem{v}_{\funF_\ast M} \otimes \sem{w}_{\funF_\ast M} ) \comp
                \spl_{\Gamma ; \Delta} \comp \sh_E \\
                & = ((h^{-1} \comp \funF \sem{v} \comp \mu[\Gamma] \comp h[\Gamma]) \otimes
                (h^{-1} \comp \funF \sem{w} \comp \mu[\Delta] \comp h[\Delta])) 
                \comp \spl_{\Gamma; \Delta} \comp \sh_E \\
                & = h^{-1} \comp \mu \comp ((\funF \sem{v} \comp \mu[\Gamma] \comp h[\Gamma]) 
                \otimes
                (\funF \sem{w} \comp \mu[\Delta] \comp h[\Delta])) 
                \comp \spl_{\Gamma; \Delta} \comp \sh_E &
                \text{\{Def. $h$ on $\typeA \otimes \typeB$\}}
                \\
                & =  h^{-1} \comp \mu \comp (\funF \sem{v} \otimes \funF \sem{w}) \comp
                ((\mu[\Gamma] \comp h[\Gamma]) \otimes (\mu[\Delta] \comp h[\Delta]))
                \comp \spl_{\Gamma;\Delta} \comp \sh_E 
                \\
                & = h^{-1} \comp \funF (\sem{v} \otimes \sem{w}) \comp \mu \comp 
                ((\mu[\Gamma] \comp h[\Gamma]) \otimes (\mu[\Delta] \comp h[\Delta]))
                \comp \spl_{\Gamma;\Delta} \comp \sh_E 
                & \text{\{Naturality of $\mu$\}}
                \\
                & = h^{-1} \comp \funF (\sem{v} \otimes \sem{w}) \comp \mu \comp 
                (\mu[\Gamma] \otimes \mu[\Delta]) \comp (h[\Gamma] \otimes h[\Delta])
                \comp \spl_{\Gamma;\Delta} \comp \sh_E 
                \\
                & = h^{-1} \comp \funF (\sem{v} \otimes \sem{w}) \comp 
                \funF \spl_{\Gamma;\Delta} \comp \funF \sh_E \comp \mu[E] \comp h[E]
                & \{\star\}
                \\
                & = h^{-1} \comp \funF \sem{v \otimes w} \comp \mu[E] \comp h[E]
                & 
        \end{flalign*}
        where $\{\star\}$ follows from combining the commutative diagrams
        of ~\cref{lem:housekeeping}. 
        
        \begin{flalign*}
                & \, \sem{\lambda x : \typeA.\ v }_{\funF_\ast M} \\
                & = \overline{\sem{v}_{\funF_\ast M} \comp
                \join_{\Gamma; x : \typeA}} \\
                & = \overline{h^{-1} \comp \funF \sem{v} 
                \comp \mu[\Gamma, x : \typeA] \comp h [\Gamma, x : \typeA] \comp
                \join_{\Gamma; x : \typeA}} \\
                & = \overline{ h^{-1} \comp \funF \sem{v} 
                \comp \funF \join_{\Gamma; x : \typeA} 
                \comp \, \mu \comp ((\mu[\Gamma] \comp h[\Gamma]) \otimes h))}
                & \{\dagger\}
                \\
                & = \overline{h^{-1} \comp \funF \sem{v} 
                \comp \funF \join_{\Gamma; x : \typeA} 
                \comp \, \mu \comp (\id \otimes h)}
                \comp \mu[\Gamma] \comp h[\Gamma]
                & \left \{\overline{(f \comp (g \otimes \id)} = \overline{ f} \comp g \right \}
                \\
                & = (h^{-1} \comp -) \comp \overline{ 
                \funF \sem{v} 
                \comp \funF \join_{\Gamma; x : \typeA} 
                \comp \, \mu \comp (\id \otimes h)}
                \comp \mu[\Gamma] \comp h[\Gamma]
                & \left \{ \overline{g \comp f} = (g \comp -) \comp \overline{f} \right \}
                \\
                & = (h^{-1} \comp -) \comp (- \comp h) \comp \overline{
                \funF \sem{v} 
                \comp \funF \join_{\Gamma; x : \typeA} 
                \comp \mu}
                \comp \mu[\Gamma] \comp h[\Gamma]
                & \left \{ \overline {g \comp (\id \otimes f)} 
                = (- \comp f) \comp \overline{g} \right \}
                \\
                & = (h^{-1} \comp -) \comp (- \comp h) \comp \overline{
                \funF (\sem{v} 
                \comp \join_{\Gamma; x : \typeA} )
                \comp \mu}
                \comp \mu[\Gamma] \comp h[\Gamma]
                & 
                \\
                & = (h^{-1} \comp -) \comp (- \comp h) \comp m \comp \funF
                (\overline{\sem{v} 
                \comp \join_{\Gamma; x : \typeA} )}
                \comp \mu[\Gamma] \comp h[\Gamma]
                & \left \{ \overline{\funF f \comp \mu} = m \comp \funF \overline{f} \right \}
                \\
                & = h^{-1} \comp  
                \funF (\overline {\sem{v} 
                \comp \join_{\Gamma; x : \typeA} )}
                \comp \mu[\Gamma] \comp h[\Gamma]
                &\text{\{Def. $h$ on $\typeA \multimap \typeB$\}}
                \\
                & = h^{-1} \comp \funF \sem{\lambda x: \typeA.\ v }
                \comp \mu[\Gamma] \comp h[\Gamma]
                \end{flalign*}
            where $\{\dagger\}$ follows from $\{\star\}$ above combined with $\join_{\Gamma_1;\dots;\Gamma_n}$ being the
            inverse of $\spl_{\Gamma_1;\dots;\Gamma_n}$.
        \end{proof}

Finally, we use the previous result to formally establish the bijective
correspondence (up-to isomorphism) that was discussed in the beginning of the
current subsection.

\begin{thm} \label{th:str}
        Let $\mathscr{T}$ be a linear $\lambda$-theory. Every autonomous functor
        $\funF : \Syn(\mathscr{T}) \to \catC$ induces a model ${\funF_\ast
        \Syn(\mathscr{T})}$ of $\mathscr{T}$ and every model $M$ of
        $\mathscr{T}$ induces a strict autonomous functor $([v] \mapsto
        \sem{v}_M) : \Syn(\mathscr{T}) \to \catC$. Furthermore these
        constructions are inverse to each other up-to isomorphism, in the sense
        that,
        \begin{align*}
                ([v] \mapsto \sem{v}_{\funF_\ast \Syn(\mathscr{T})}) 
                \cong \funF \hspace{1cm} 
                \text{ and } \hspace{1cm}
                ([v] \mapsto \sem{v}_M)_\ast \Syn(\mathscr{T}) = M
        \end{align*}
\end{thm}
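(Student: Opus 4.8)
The plan is to build everything on top of the \emph{generic model} $\mathscr{G}$ that $\Syn(\mathscr{T})$ carries, which is implicit in the proof of \cref{theo:sound_compl}. Concretely one sets $\sem{X}_{\mathscr{G}} = X$ for each ground type and takes $\sem{f}_{\mathscr{G}}$ to be the morphism $[f(x_1,\dots,x_n)] : \typeA_1\otimes\cdots\otimes\typeA_n \to \typeA$ of $\Syn(\mathscr{T})$, and then one verifies by induction on derivations the fundamental identity $\sem{x:\typeA\vljud v:\typeB}_{\mathscr{G}} = [v]$ (this is precisely what makes equality of interpretations in $\Syn(\mathscr{T})$ coincide with provability, as invoked in the completeness argument). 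Since the types of $\mathscr{T}$ endow $\Syn(\mathscr{T})$ with a strict autonomous structure whose $\otimes$, $\typeI$, $\multimap$ are the syntactic connectives, and since each axiom of $\mathscr{T}$ is a theorem, $\mathscr{G}$ is a model by \cref{theo:sound_compl}. Part~(1) then reads $\funF_\ast\Syn(\mathscr{T})$ as $\funF_\ast\mathscr{G}$ and is an immediate instance of \cref{prop:model}.

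For part~(2), given a model $M$ on $\catC$ I would define $\funG_M : \Syn(\mathscr{T}) \to \catC$ by $\typeA \mapsto \sem{\typeA}_M$ and $[v] \mapsto \sem{v}_M$. Well-definedness is soundness (\cref{theo:sound_compl}); preservation of identities is the $\rulename{hyp}$ clause of \cref{fig:lang_sem}; preservation of composition is the substitution part of \cref{lem:exch_subst_inter} specialised to single-variable contexts, where the housekeeping morphisms $\join_{\Gamma;\typeA}$ and $\spl_{\Gamma;\Delta}$ degenerate to identities. Strict preservation of $\typeI$, $\otimes$ and $\multimap$ is immediate from the clauses for type interpretation, which give $\sem{\typeI}_M = I_\catC$, $\sem{\typeA\otimes\typeB}_M = \sem{\typeA}_M\otimes\sem{\typeB}_M$ and $\sem{\typeA\multimap\typeB}_M = \sem{\typeA}_M\multimap\sem{\typeB}_M$ on the nose, so $\un$ and $\mu$ are identities; it then remains to compute from \cref{fig:lang_sem} that $\funG_M$ carries the syntactic symmetry and the syntactic evaluation morphism of $\Syn(\mathscr{T})$ exactly to $\sw$ and $\app$ in $\catC$, which forces the transpose appearing in \cref{def:autf} to be the identity as well, i.e.\ $\funG_M$ is strict autonomous.

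It then remains to check the two round-trips. Starting from a model $M$: since $\funG_M$ is strict, the isomorphisms $h_\typeA$ of the $\funF_\ast(-)$ construction collapse to identities (as observed after \cref{def:autf}), hence $\sem{X}_{\funG_{M\,\ast}\mathscr{G}} = \funG_M X = \sem{X}_M$ on ground types and $\sem{f}_{\funG_{M\,\ast}\mathscr{G}} = \funG_M\sem{f}_{\mathscr{G}} = \sem{f}_M$ on operation symbols (using $\sem{f}_{\mathscr{G}} = [f(x_1,\dots,x_n)]$ and the action of $\funG_M$), so $\funG_{M\,\ast}\mathscr{G} = M$ literally. Starting from an autonomous functor $\funF$: the family $h_\typeA : \sem{\typeA}_{\funF_\ast\mathscr{G}} \to \funF\sem{\typeA}_{\mathscr{G}} = \funF\typeA$ constructed in the $\funF_\ast(-)$ definition consists of isomorphisms by construction, and its naturality in $\typeA$, namely $h_\typeB\comp\sem{v}_{\funF_\ast\mathscr{G}} = \funF[v]\comp h_\typeA$ for every $[v] : \typeA\to\typeB$, is exactly the key equation $\sem{\Gamma\vljud v:\typeB}_{\funF_\ast M} = h_\typeB^{-1}\comp\funF\sem{v}_M\comp\mu[\Gamma]\comp h[\Gamma]$ proved inside \cref{prop:model}, instantiated at the single-variable context $\Gamma = x:\typeA$ (so that $\mu[\Gamma] = \id$ and $h[\Gamma] = h_\typeA$), at $M = \mathscr{G}$, and using $\sem{v}_{\mathscr{G}} = [v]$; moreover $h_{\typeA\otimes\typeB} = \mu\comp(h_\typeA\otimes h_\typeB)$ by definition, so $h$ is even monoidal. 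Thus $h : \funG_{\funF_\ast\mathscr{G}} \Rightarrow \funF$ is the required isomorphism.

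I expect the genuine work to be concentrated in the strictness claim of part~(2): one must spell out the autonomous structure of $\Syn(\mathscr{T})$ and check against the clauses of \cref{fig:lang_sem} that $\funG_M$ sends the syntactic evaluation and symmetry morphisms precisely to $\app$ and $\sw$. Everything else reduces to bookkeeping with the housekeeping morphisms and to direct appeals to \cref{prop:model,lem:exch_subst_inter,theo:sound_compl}.
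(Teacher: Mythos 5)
Your proposal is correct and follows essentially the same route as the paper's proof: part (1) via the generic model on $\Syn(\mathscr{T})$ and \cref{prop:model}, strictness of $[v]\mapsto\sem{v}_M$ checked directly, and both round-trips reduced to the key equation $\sem{v}_{\funF_\ast M} = h^{-1}\comp\funF\sem{v}_M\comp\mu[\Gamma]\comp h[\Gamma]$ instantiated at single-variable contexts. The only cosmetic difference is that the paper writes the syntactic interpretation of an operation symbol as $[\prog{pm}\ x\ \prog{to}\ x_1\otimes\dots\otimes x_n.\ f(x_1,\dots,x_n)]$, since morphisms of $\Syn(\mathscr{T})$ live over single-variable contexts; your $[f(x_1,\dots,x_n)]$ should be read with that pattern-match made explicit.
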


\begin{proof}
       Let  us first focus on the mapping that sends functors to models.
       Consider an autonomous functor $\funF : \Syn(\mathscr{T}) \to \catC$.
       Then observe that $\Syn(\mathscr{T})$ corresponds to a model of
       $\mathscr{T}$ and thus by Proposition~\ref{prop:model} we conclude that
       $\funF_\ast \Syn(\mathscr{T})$ must be a model of $\mathscr{T}$. For the
       inverse direction, we start with a model $M$ and build the functor
       $\Syn(\mathscr{T}) \to \catC$ that sends the equivalence class $[v]$
       into $\sem{v}_M$. It is straightforward to
       prove that this last functor is strict autonomous.

       Our next step is to prove the existence of a natural isomorphism $([v]
       \mapsto \sem{v}_{\funF_\ast \Syn(\mathscr{T})}) \cong \funF$. For that
       effect observe that for all types $\typeA$ we already have,
       \[
               \sem{\typeA}_{\funF_\ast \Syn(\mathscr{T})} \stackrel{h}{\cong} \funF
               \sem{\typeA}_{\Syn(\mathscr{T})} = \funF \typeA
       \]
       Observe as well that the corresponding naturality square,
       \[
                \xymatrix@C=45pt{
                \sem{\typeA}_{\funF_\ast \Syn(\mathscr{T})} 
                \ar[d]_{\sem{v}_{\funF_\ast \Syn(\mathscr{T})}}
                \ar@/^/[r]^{h}
                & \funF \sem{\typeA}_{\Syn(\mathscr{T})}
                \ar@/^/[l]^{h^{-1}} 
                \ar[d]^{\funF \sem{v}_{\Syn(\mathscr{T})}\, =\, \funF [v]}  \\
                \sem{\typeB}_{\funF_\ast \Syn(\mathscr{T})} 
                \ar@/^/[r]^{h}
                & \funF \sem{\typeB}_{\Syn(\mathscr{T})}
                \ar@/^/[l]^{h^{-1}}
                }
       \]
       is an instance of equation $\sem{v}_{\funF_\ast M} = h^{-1} \comp \funF
       \sem{v}_{M} \comp \mu[\Gamma] \comp h[\Gamma]$ (specifically, with
       $|\Gamma| = 1$ and $M = \Syn(\mathscr{T})$) which was proved in
       Proposition~\ref{prop:model}. Finally, we show that the equation $([v]
       \mapsto \sem{v}_M)_\ast \Syn(\mathscr{T}) = M$ holds. First,
       \begin{align*}
              \sem{G}_{([v] \mapsto \sem{v}_M)_\ast \Syn(\mathscr{T})} 
              = ([v] \mapsto \sem{v}_M) (\sem{G}_{\Syn(\mathscr{T})}) 
                = ([v] \mapsto \sem{v}_M) (G)
                = \sem{G}_M &
       \end{align*}
       and second, by keeping in mind that $[v] \mapsto \sem{v}_M$ is strict autonomous,
       \begin{flalign*}
              & \, \sem{f}_{([v] \mapsto \sem{v}_M)_\ast \Syn(\mathscr{T})} & \\ 
              & = ([v] \mapsto \sem{v}_M) (\sem{f}_{\Syn(\mathscr{T})}) & \\
              & = ([v] \mapsto \sem{v}_M) ([\prog{pm}\ x\ \prog{to}\ x_1 
               \otimes \dots \otimes x_n.\ f(x_1,\dots,x_n)]) & \\
              & = \sem{\prog{pm}\ x\ \prog{to}\ x_1 
       \otimes \dots \otimes x_n.\ f(x_1,\dots,x_n)}_M & \\
              & = \sem{f}_M &
              \qedhere
       \end{flalign*}
\end{proof}
We are now in the right setting to present the equivalence between linear
$\lambda$-theories and autonomous categories that was discussed in the paper's
introduction and beginning of Section~\ref{sec:back}. So let $\Aut$ be the category of
locally small autonomous categories and autonomous functors. Consider as well
the category $\Aut_{/_{\cong}}$ whose objects are locally small autonomous
categories and morphisms are \emph{isomorphism classes} of autonomous functors
(this category is well-defined because isomorphisms form an equivalence
relation that is closed w.r.t. pre- and
post-composition~\cite[II.8]{maclane98}). We will show that the latter category
is equivalent to a certain category $\lambda\text{-} \catfont{Th}$ whose
objects are linear $\lambda$-theories:
\begin{align} \label{equ1}
        \xymatrix@C=40pt{
                \lambda\text{-} \catfont{Th}
                \ar@<5pt>[r]_(0.50){\simeq} & 
                \Aut_{/_{\cong}}
                \ar@<5pt>[l]^(0.50){}
        }
\end{align}
We need to define the notion of a morphism in the category
$\lambda\text{-} \catfont{Th}$.  Following traditions in type
theory~\cite{crole93} we set $\lambda\text{-}
\catfont{Th}(\mathscr{T}_1,\mathscr{T}_2) :=
\Aut_{/_{\cong}}(\Syn(\mathscr{T}_1),\Syn(\mathscr{T}_2))$.  In words, a
morphism $\mathscr{T}_1 \to \mathscr{T}_2$ between $\lambda$-theories
$\mathscr{T}_1$ and $\mathscr{T}_2$ is exactly an isomorphism class of
autonomous functors $\Syn(\mathscr{T}_1) \to \Syn(\mathscr{T}_2)$ -- which 
by~\cref{th:str} are in bijective correspondence (up-to isomorphism) to
models of $\mathscr{T}_1$ on the category $\Syn(\mathscr{T}_2)$. 

Note that an isomorphism $\mathscr{T}_1 \cong \mathscr{T}_2$ in
$\lambda\text{-} \catfont{Th}$ is equivalent to the corresponding syntactic
categories being equivalent, which provides a bijection,
\[
\Aut_{/_{\cong}}(\Syn(\mathscr{T}_1),\catC)
\cong
\Aut_{/_{\cong}}(\Syn(\mathscr{T}_2),\catC)
\]
natural on $\catC$. This echoes the notion of \emph{Morita equivalence} in
universal algebra~\cite{adamek06,johnstone02}, which states that two theories
are equivalent provided that the corresponding categories of varieties are
equivalent (in our case, the variety of a $\lambda$-theory $\mathscr{T}$ on
$\catC$ is simply regarded as the set
$\Aut_{/_{\cong}}(\Syn(\mathscr{T}),\catC$). 

Next we set the stage for describing the functor $\Aut_{/_{\cong}} \to
\lambda\text{-} \catfont{Th}$ that is part of the equivalence~\eqref{equ1}.  As
we will see, it corresponds to an \emph{internal language construct} \ie it
maps an autonomous category $\catC$ to a linear $\lambda$-theory $\Lang(\catC)$
that encodes the former completely. Among other things, this is useful for
seeing morphisms of a category $\catC$ equivalently as $\lambda$-terms, or more
generally for seeing $\catC$ from a set-theoretic lens without loss of
information (for a deeper discussion on the notion of internal language, see~\cite{pitts01}).

\begin{defi}[Internal language]
\label{def:lng}
An autonomous category $\catC$ induces a linear $\lambda$-theory
$\Lang(\catC)$ whose ground types $X \in G$ are the objects of $\catC$
and whose signature $\Sigma$ of operation symbols consists of all the
morphisms in $\catC$ plus certain isomorphisms that we describe
in~\eqref{eq:iso}. The axioms of $\Lang(\catC)$ are all the equations
satisfied by the obvious interpretation on $\catC$.  In order to
explicitly distinguish the autonomous structure of $\catC$ from the
type structure of $\Lang(\catC)$ let us denote the tensor of $\catC$
by $\otimes_\catC$, the unit by $I_\catC$, and the exponential by
$\multimap_\catC$. Consider then the following map on types:
\begin{flalign}\label{eq:iso}
	\hspace{-10pt}
    i(\typeI) = I_\catC \qquad i(X) = X \qquad i(\typeA \otimes
    \typeB) = i(\typeA) \otimes_\catC i(\typeB) \qquad i(\typeA
    \multimap \typeB) = i(\typeA) \multimap_\catC i(\typeB)
\end{flalign}
For each type $\typeA$ we add an isomorphism $ i_\typeA : \typeA \cong
i(\typeA)$ to the theory $\Lang(\catC)$.
\end{defi}

The following theorem is the core of the proof that establishes the
equivalence~\eqref{equ1} and formalises the fact that the $\lambda$-theory
$\Lang(\catC)$ associated to a category $\catC$ describes the latter completely
\ie it is the internal language of $\catC$. 

\begin{thm}
        \label{th:int}
        For every autonomous category $\catC$ there exists an equivalence
        $\Syn(\Lang(\catC)) \simeq \catC$ and both functors witnessing this
        equivalence are autonomous. The functor going from the left to right
        direction is additionally strict. 
\end{thm}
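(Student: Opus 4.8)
The plan is to read the left-to-right functor off \cref{th:str} and to build its quasi-inverse by hand. First I would fix the \emph{obvious} model $M$ of $\Lang(\catC)$ on $\catC$: it sends each ground type $X$ (an object of $\catC$) to $X$ itself, each operation symbol coming from a $\catC$-morphism $f$ to $f$, and each added isomorphism $i_\typeA : \typeA \cong i(\typeA)$ to the identity on $i(\typeA)$; this is legitimate because, unwinding \cref{def:lng} and the inductive clauses for type interpretation, one gets $\sem{\typeA}_M = i(\typeA) = \sem{i(\typeA)}_M$. By the very definition of $\Lang(\catC)$ its axioms are \emph{exactly} the equations valid under $\sem{-}_M$, so $M$ is indeed a model, and \cref{th:str} then produces a \emph{strict autonomous} functor $\funF \defeq ([v] \mapsto \sem{v}_M) : \Syn(\Lang(\catC)) \to \catC$, acting on objects by $\funF\typeA = \sem{\typeA}_M = i(\typeA)$. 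This is the left-to-right functor of the claimed equivalence, and it is strict and autonomous by construction.

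Next I would show that $\funF$ is fully faithful and surjective on objects, which already yields the equivalence. Surjectivity on objects is immediate: every object $X$ of $\catC$, viewed as a ground type, satisfies $\funF X = i(X) = X$. Faithfulness is equally direct: if $\funF[v] = \funF[w]$ for $[v],[w] : \typeA \to \typeB$, then $\sem{v}_M = \sem{w}_M$, so $x : \typeA \vljud v = w : \typeB$ is valid under $\sem{-}_M$, hence an axiom, hence a theorem, hence $[v] = [w]$. For fullness, a $\catC$-morphism $g : i(\typeA) \to i(\typeB)$ is itself an operation symbol, and the term $v \defeq i_\typeB^{-1}(g(i_\typeA(x)))$ has $x : \typeA \vljud v : \typeB$ with $\funF[v] = \sem{v}_M = \id \comp g \comp \id = g$, since the added isomorphisms are interpreted as identities and $\funF$ preserves inverses.

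For the other direction I would exhibit $\funG : \catC \to \Syn(\Lang(\catC))$ explicitly: $\funG X$ is the ground type $X$, a morphism $f : X \to Y$ goes to $\funG f \defeq [x : X \vljud f(x) : Y]$, and the symmetric strong monoidal and autonomous structure of $\funG$ is supplied by the added isomorphisms -- for instance $\un \defeq i_\typeI : \typeI \to I_\catC = \funG I_\catC$ and $\mu_{X,Y} \defeq i_{X \otimes Y} : X \otimes Y \to X \otimes_\catC Y = \funG(X \otimes_\catC Y)$. One then checks that $\funF\funG = \id_\catC$ on the nose (the housekeeping morphisms attached to a single-variable context are identities) and that $(i_\typeA)_\typeA$ forms a natural isomorphism $\id_{\Syn(\Lang(\catC))} \cong \funG\funF$: its naturality square at $[v] : \typeA \to \typeB$ is the equation $i_\typeB \comp [v] = \funG(\funF[v]) \comp i_\typeA$, whose two sides both denote $\sem{v}_M$ under $\sem{-}_M$, so the equation is an axiom. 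Functoriality of $\funG$ and its monoidal, symmetry and closedness coherence conditions are all of this shape -- each reduces, the axioms of $\Lang(\catC)$ being precisely the valid equations, to an identity that plainly holds in $\catC$. Alternatively, one can simply invoke the standard transport-of-structure fact that a quasi-inverse of an autonomous equivalence is canonically autonomous.

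The one genuine obstacle is exactly this last round of bookkeeping. Because $\Syn(\Lang(\catC))$ has strictly more types than $\catC$ has objects -- the type $\typeA \otimes \typeB$ is \emph{not} the ground type $i(\typeA) \otimes_\catC i(\typeB)$, only isomorphic to it -- the equivalence cannot be an isomorphism of categories, and the added isomorphisms $i_\typeA$ must be threaded through every coherence check in order to glue the two levels of types together. Beyond that, everything follows from \cref{th:str} and from the defining property of the internal language.
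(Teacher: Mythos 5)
Your proposal is correct and follows essentially the same route as the paper: the left-to-right functor is obtained from the obvious model via \cref{th:str}, the quasi-inverse is $X \mapsto X$, $f \mapsto [f(x)]$ with its autonomous structure supplied by the added isomorphisms $i_\typeA$, and every coherence check reduces, via the defining property that the axioms of $\Lang(\catC)$ are exactly the equations valid under the canonical interpretation, to an identity in $\catC$. The fully-faithful-plus-surjective detour is harmless but redundant, since you must construct $\funG$ explicitly anyway to verify it is autonomous.
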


\begin{proof}
        By construction, we have an interpretation of $\Lang(\catC)$ in $\catC$
        which behaves as the identity for operation symbols and ground types.
        This interpretation is by definition a model of $\Lang(\catC)$ on
        $\catC$ and by~\cref{th:str} we obtain a strict autonomous functor
        $\Syn(\Lang(\catC)) \to \catC$. The functor in the opposite direction
        behaves as the identity on objects and sends a $\catC$-morphism $f$
        into $[f(x)]$. The equivalence of categories is then shown by using the
        aforementioned isomorphisms which connect the type constructors of
        $\Lang(\catC)$ with the autonomous structure of $\catC$.

        To finish the proof we need to show that the functor $\catC \to
        \Syn(\Lang(\catC))$ is autonomous (\cref{def:autf}).  Our first step is
        to establish an isomorphism $\typeI \to I_\catC$ living in the category
        $\Syn(\Lang(\catC))$ which we define as $a : \typeI \vljud
        [i_\typeI(a)] : I_\catC$.  Analogously, for all $\catC$-objects $X$ and
        $Y$ the required isomorphism $X \otimes Y \to X\, \otimes_\catC  Y$
        living in the same category is provided by $a : X \otimes Y \vljud
        [i_{X \otimes Y}(a)] : X \otimes_\catC Y$. Next we prove that the thus
        obtained transformation is natural on $X$ and $Y$. By analysing the
        corresponding diagram we see that  it corresponds to proving the
        equality $[i_{X' \otimes Y'}(b)] \comp [\prog{pm}\ a\ \prog{to}\ x
        \otimes y.\ f(x) \otimes g(y)] = [f\, \otimes_\catC g\, (x)] \comp
        [i_{X \otimes Y}(a)]$. Then note that, by the definition of syntactic
        category and by substitution, the previous equality is entailed by
        $i_{X' \otimes Y'}(\prog{pm}\ a\ \prog{to}\ x \otimes y.\ f(x) \otimes
        g(y)) = f\, \otimes_\catC g\ (i_{X \otimes Y}(a))$. According to the
        definition of $\Lang(\catC)$ (Definition~\ref{def:lng}), this last
        equality is entailed by,
        \[
                \sem{ i_{X' \otimes Y'}(\prog{pm}\ a\ \prog{to}\ x \otimes y.\
                f(x) \otimes g(y))} = \sem{ f \otimes_\catC  g\, (i_{X \otimes Y}(a))}
        \]
        on $\catC$ and the proof that the latter holds is straightforward.

        Next let us show that the left unitality diagram commutes. An inspection
        of this diagram reveals that it corresponds to the equality
        $[ \prog{pm}\ a\ \prog{to}\ i \otimes x.\ i\ \prog{to}\ \ast.\ x] =
        [\lambda_\catC (c)] \comp [i_{I_\catC \otimes X}(b)] \comp
        [ \prog{pm}\ a\ \prog{to}\ i \otimes x.\ i_\typeI(i) \otimes x]$. Analogously
        to the previous reasoning we will prove that,
        \[
                \sem{\prog{pm}\ a\ \prog{to}\ i \otimes x.\ i\ \prog{to}\ \ast.\ x}
                =
                \sem{\lambda_\catC (i_{I_\catC \otimes X}(\prog{pm}\ a\ 
                        \prog{to}\ i \otimes x.\
                i_\typeI(i) \otimes x))}
        \]
        We start by simplifying the right-hand side of equation,
        \begin{align*}
                \sem{\lambda_\catC (i_{I_\catC \otimes X}(\prog{pm}\ a\ 
                        \prog{to}\ i \otimes x.\
                i_\typeI(i) \otimes x))}
                & =
                \sem{\lambda_\catC} \comp \sem{i_{I_\catC \otimes X}} \comp
                \sem{\prog{pm}\ a\ \prog{to}\ i \otimes x.\
                i_\typeI(i) \otimes x} \\ 
                & = \sem{\lambda_\catC} \comp \id \comp \id  \\
                & = \lambda_\catC & 
        \end{align*}
        The simplified equation amounts to stating that the left unitor of
        $\Syn(\Lang(\catC))$ (the term on the left-hand side of the equation)
        is \emph{strictly} preserved by the interpretation of $\Lang(\catC)$ on
        $\catC$. This is clearly the case because the functor
        $\Syn(\Lang(\catC)) \to \catC$ corresponding to this interpretation is
        \emph{strict} autonomous (\cref{th:str}).  This reasoning is
        also applicable to the diagrams concerning right unitality, symmetry,
        and associativity.

        The final step is to prove that
        the right transpose of the composite,
        \[
                \xymatrix@C=40pt{
                        (X \multimap_\catC Y) \otimes X
                        \ar[r]^{i_{(X \multimap_{\catC} Y) \otimes X}} &
                        (X \multimap_\catC Y)\, \otimes_\catC X 
                        \ar[r]^(0.65){\app_\catC} & Y
                }
        \]
        is an isomorphism in the category $\Syn(\Lang(\catC))$. To that effect
        note that the aforementioned right transpose is given by $f : X
        \multimap_\catC Y \vljud \left [ \lambda x : X.\ \app_\catC\ \left
        (i_{X \multimap_\catC Y \otimes_\catC X}\ (f \otimes x) \right ) \right ] : X
        \multimap Y$. It is straightforward to prove that this term is interpreted as
        the identity on $\catC$. Next note the existence of the term $x : X \multimap Y \vljud
        \left [i_ {X \multimap Y}(x) \right ] : X \multimap_\catC Y$ which is given in
        Definition~\ref{def:lng} and is also interpreted as the identity on the category
        $\catC$.  Since both terms are interpreted as the identity, by Definition~\ref{def:lng}
        they are indeed inverses of each other. 
\end{proof}

Adapting the nomenclature of~\cite{linton66} to our setting, we call a theory
$\mathscr{T}$ varietal if $\Syn(\mathscr{T})$ is locally small. In the rest of
this section we will only work with varietal theories, which allows to prove
the following theorem.

\begin{thm}
        \label{th:eq}
        There exists an equivalence,
        \begin{align*}
        \xymatrix@C=40pt{
                \lambda\text{-} \catfont{Th}
                \ar@<5pt>[r]^(0.50){\Syn}_{\simeq} & 
                \Aut_{/_{\cong}}
                \ar@<5pt>[l]^(0.50){\Lang}
        }
        \end{align*}
\end{thm}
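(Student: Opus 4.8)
The plan is to prove the stronger statement that $\Syn\colon\lambda\text{-}\catfont{Th}\to\Aut_{/_{\cong}}$ is full, faithful and essentially surjective, and then to check that the internal-language construction $\Lang$, once extended to a functor, is a quasi-inverse to it. Fullness and faithfulness come for free: we \emph{defined} $\lambda\text{-}\catfont{Th}(\mathscr{T}_1,\mathscr{T}_2)\defeq\Aut_{/_{\cong}}(\Syn(\mathscr{T}_1),\Syn(\mathscr{T}_2))$ with composition inherited from $\Aut_{/_{\cong}}$, so $\Syn$ may be taken to be the identity on every hom-set, and functoriality is then immediate. For essential surjectivity I would invoke \cref{th:int}: given a locally small autonomous category $\catC$, the two autonomous functors witnessing $\Syn(\Lang(\catC))\simeq\catC$ descend, on passing to isomorphism classes, to a pair of mutually inverse morphisms of $\Aut_{/_{\cong}}$, so $\Syn(\Lang(\catC))\cong\catC$ in $\Aut_{/_{\cong}}$; in particular $\Syn(\Lang(\catC))$ is again locally small, so $\Lang(\catC)$ is varietal and is a bona fide object of $\lambda\text{-}\catfont{Th}$, and $\catC$ lies in the essential image of $\Syn$. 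A full, faithful and essentially surjective functor is an equivalence, so $\Syn$ already is one; what remains is to recognise $\Lang$ as a quasi-inverse.

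To do so, write $e_\catC\colon\Syn(\Lang(\catC))\to\catC$ for the isomorphism of $\Aut_{/_{\cong}}$ produced above, with inverse $e_\catC^{-1}$. On objects $\Lang$ is given by \cref{def:lng}; on a morphism $[\funF]\colon\catC\to\catD$ of $\Aut_{/_{\cong}}$ I would set $\Lang([\funF])\defeq[\,e_\catD^{-1}\circ\funF\circ e_\catC\,]$, which is a morphism $\Syn(\Lang(\catC))\to\Syn(\Lang(\catD))$ of $\Aut_{/_{\cong}}$ and hence, by definition, a morphism $\Lang(\catC)\to\Lang(\catD)$ of $\lambda\text{-}\catfont{Th}$. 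This assignment is well defined on isomorphism classes because $\cong$ is preserved under pre- and post-composition, and it is functorial because $e_\catD\circ e_\catD^{-1}=\id$ in $\Aut_{/_{\cong}}$. The family $(e_\catC)_\catC$ is then a natural isomorphism $\Syn\circ\Lang\cong\Id_{\Aut_{/_{\cong}}}$: for $[\funF]\colon\catC\to\catD$ one computes $e_\catD\circ\Syn(\Lang([\funF]))=[\,e_\catD\circ e_\catD^{-1}\circ\funF\circ e_\catC\,]=[\funF]\circ e_\catC$. Symmetrically, instantiating \cref{th:int} at $\catC\defeq\Syn(\mathscr{T})$, the morphisms $e_{\Syn(\mathscr{T})}\colon\Syn(\Lang(\Syn(\mathscr{T})))\to\Syn(\mathscr{T})$ assemble into an isomorphism $\Lang(\Syn(\mathscr{T}))\cong\mathscr{T}$ in $\lambda\text{-}\catfont{Th}$, natural in $\mathscr{T}$ by the same one-line calculation. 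Together these two natural isomorphisms exhibit $\Syn$ and $\Lang$ as the asserted equivalence.

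The genuine mathematical content — that $\Syn(\Lang(\catC))$ is equivalent to $\catC$ through autonomous functors — has already been discharged in \cref{th:int}; consequently the only thing that demands care in the present argument is the bookkeeping: that every construction descends cleanly to isomorphism classes of autonomous functors, that the extension of $\Lang$ to morphisms is functorial, and that the size conditions (local smallness, hence varietality) are transported along the equivalences. I expect this bookkeeping — rather than any conceptual difficulty — to be the main obstacle.
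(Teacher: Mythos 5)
Your proposal is correct and follows essentially the same route as the paper: both rest entirely on \cref{th:int}, define $\Lang$ on a morphism $[\funF]$ by conjugating with the pair of autonomous functors supplied by that theorem, and verify well-definedness, functoriality and the two natural isomorphisms by the same bookkeeping with isomorphism classes. Your preliminary observation that $\Syn$ is fully faithful by construction (so that only essential surjectivity is genuinely at stake), and your remark that local smallness --- hence varietality --- is transported along the equivalence, are tidy points the paper leaves implicit, but the mathematical content is identical.
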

\begin{proof}
The functor $\Syn : \lambda\text{-}\catfont{Th} \to \Aut_{/_{\cong}}$ sends a
$\lambda$-theory to its syntactic category and acts as the identity on
morphisms. For the inverse direction, recall that for every autonomous category
$\catC$~\cref{th:int} provides autonomous functors $e_\catC : \catC \to
\Syn(\Lang(\catC))$ and $e'_\catC : \Syn(\Lang(\catC)) \to \catC$. Then we
define  $\Lang : \Aut_{/_{\cong}}\to \lambda\text{-}\catfont{Th}$ as the
functor that sends an autonomous category $\catC$ to its theory $\Lang(\catC)$
and that sends an isomorphism class $[\funF]$ of autonomous functors of type
$\catC \to \catD$ into $[ e_\catD \comp \funF \comp e'_\catC]$.  This last
mapping is well-defined because if $\funF \cong \funF'$ then $e_\catD \comp
\funF \comp e'_\catC \cong\, e_\catD \comp \funF' \comp e'_\catC$.
Furthermore it respects the functorial laws because $e_\catC \comp \Id \comp
e'_\catC \cong \Id$ and thus $[e_\catC \comp \Id \comp e'_\catC] = [\Id]$, also
for all autonomous functors $\funF : \catC \to \catA$ and $\funG : \catA \to
\catD$ we have the isomorphism $e_\catD \comp \funG \comp \funF \comp e'_\catC \cong e_\catD
\comp \funG \comp e'_\catA  \comp e_\catA \comp \funF \comp e'_\catC$ and
therefore $[e_\catD \comp \funG \comp \funF \comp e'_\catC] = [e_\catD \comp
\funG \comp e'_\catA  \comp e_\catA \comp \funF \comp e'_\catC]$.

The next step is to prove the existence of isomorphisms $\Syn(\Lang(\catC))
\cong \catC$ in $\Aut_{/_{\cong}}$ and $\Lang(\Syn(\mathscr{T})) \cong
\mathscr{T}$ in $\lambda\text{-}\catfont{Th}$. The former arises directly 
from~\cref{th:int} and the fact that we are now working with isomorphism
classes of autonomous functors. For the latter we appeal 
to~\cref{th:int} to establish an equivalence
$\Syn(\Lang(\Syn(\mathscr{T}))) \simeq \Syn(\mathscr{T})$; then we resort to
the previous remark on isomorphisms of $\lambda$-theories and Morita equivalence to
obtain $\Lang(\Syn(\mathscr{T})) \cong \mathscr{T}$. The fact that the thus
established isomorphisms are natural on $\catC$ and $\mathscr{T}$ follows from
routine calculations.  \end{proof}

\section{From equations to $\mathcal{V}$-equations}
\label{sec:main}
In this section we extend the results of Section~\ref{sec:back} to the setting
of $\V$-equations.

\subsection{Preliminaries} Let $\mathcal{V}$ denote a commutative and unital
quantale, $\otimes :
\mathcal{V} \times \mathcal{V} \to \mathcal{V}$ the corresponding binary
operation, and $k$ the corresponding unit~\cite{paseka00}.  We start by
recalling two definitions concerning ordered
structures~\cite{gierz03,JGL-topology} and then explain their relevance to our
work.

\begin{defi}
	Consider a complete lattice $L$.  For every $x, y \in L$ we say that
	$y$ is \emph{way-below} $x$ (in symbols, $y \ll x$) if for every
	subset $X \subseteq L$ whenever $x \leq \bigvee X$ there exists a
	\emph{finite} subset $A \subseteq X$ such that $y \leq \bigvee A$.
	The lattice $L$ is called \emph{continuous} iff for every $x \in L$,
	\begin{flalign*}
		x = \bigvee \{ y  \mid y \in L\ \text{and} \ y \ll x \}
	\end{flalign*}
\end{defi}

\begin{defi}
	Let $L$ be a complete lattice. A \emph{basis} $B$ of $L$ is a subset
	$B \subseteq L$ such that for every $x \in L$ the set
	$B \cap \{ y \mid y \in L\ \text{and} \ y \ll x \}$ is directed and
	has $x$ as the least upper bound.
\end{defi}

From now on we assume that the underlying lattice of $\mathcal{V}$ is
continuous and has a basis $B$ which is closed under finite joins, the
multiplication of the quantale $\otimes$ and contains the unit $k$. These
assumptions will allow us to work \emph{only} with a specified subset of
$\mathcal{V}$-equations chosen \eg for computational reasons, such as the
\emph{finite} representation of values $q \in \mathcal{V}$. Additionally, note
that even if we take a basis $B$ \emph{not} closed under such operations we can
close it and the resulting set will again be a basis: indeed the directedness
condition follows directly from the closure under finite joins
and~\cite[Proposition I-1.2]{gierz03} whilst the least upper bound condition is
easy to check.

\begin{exa}
	The Boolean quantale $((\{0 \leq 1\}, \vee), \otimes := \wedge)$ is
	\emph{finite} and thus continuous~\cite{gierz03}. Since it is
	continuous, $\{0,1\}$ itself is a basis for the quantale that
	satisfies the conditions above. For the G\"{o}del
	t-norm~\cite{denecke13} $(([0,1], \vee), \otimes := \wedge)$, the
	way-below relation is the strictly-less relation $<$ with the
	exception that $0 < 0$. A basis for the underlying lattice that
	satisfies the conditions above is the set $\mathbb{Q} \cap
	[0,1]$. Note that, unlike real numbers, rationals numbers always
	have a finite representation. For the metric quantale (also known as
	Lawvere quantale) $(([0,\infty], \wedge), \otimes := +)$, the
	way-below relation corresponds to the \emph{strictly greater}
	relation with $\infty > \infty$, and a basis for the underlying
	lattice that satisfies the conditions above is the set of extended
	non-negative rational numbers. The latter also serves as basis for
	the ultrametric quantale $(([0,\infty], \wedge), \otimes := \max)$.
\end{exa}

From now on we assume that $\mathcal{V}$ is \emph{integral}, \ie that the unit
$k$ is the top element of $\mathcal{V}$. Despite not being strictly necessary
in many of our results, it will allow us to establish a smoother theory of
$\mathcal{V}$-equations,
whilst still covering \eg\ all the examples above. This assumption is common in
quantale theory~\cite{velebil19}.

\subsection{A $\V$-equational deductive system}
\label{sec:sys}
As mentioned in the introduction, $\mathcal{V}$ induces the notion of a
$\mathcal{V}$-equation, \ie an equation $t =_q s$ labelled by an element $q$ of
$\mathcal{V}$. This subsection explores this concept by introducing a
$\mathcal{V}$-equational deductive system for linear $\lambda$-calculus and a
notion of a linear $\mathcal{V}\lambda$-theory.  Recall the term formation
rules of linear $\lambda$-calculus from Fig.~\ref{fig:lang}. A
$\mathcal{V}$-equation-in-context is an expression $\Gamma \vljud v =_q w :
\typeA$ with $q \in B$ (the basis of $\mathcal{V}$), $\Gamma \vljud v : \typeA$
and $\Gamma \vljud w : \typeA$. Let $\top$ be the top element in $\mathcal{V}$.
An equation-in-context $\Gamma \vljud v = w : \typeA$ now denotes the
particular case in which both $\Gamma \vljud v =_\top w : \typeA$ and $\Gamma
\vljud w =_\top v : \typeA$. For the case of the Boolean quantale,
$\mathcal{V}$-equations are labelled by $\{0,1\}$. We will see that $\Gamma
\vljud v =_1 w : \typeA$ can be effectively treated as an inequation $\Gamma
\vljud v \leq w : \typeA$, whilst $\Gamma \vljud v =_0 w : \typeA$ corresponds
to a trivial $\mathcal{V}$-equation, \ie a $\mathcal{V}$-equation that always
holds. For the G\"{o}del t-norm, we can choose $\mathbb{Q} \cap [0,1]$ as basis
and then obtain what we call \emph{fuzzy inequations}. For the metric quantale,
we can choose the set of extended non-negative rational numbers as basis and
then obtain \emph{metric equations} in the spirit
of~\cite{mardare16,mardare17}. Similarly, by choosing the ultrametric quantale
$(([0,\infty], \wedge), \otimes := \max)$ with the set of extended non-negative
rational numbers as basis we obtain what we call \emph{ultrametric equations}.

\begin{defi}[$\mathcal{V}\lambda$-theories]\label{defn:vtheory}
	Consider a tuple $(G,\Sigma)$ consisting of a class $G$ of ground
	types and a class of sorted operation symbols
	$f : \typeA_1,\dots,\typeA_n \to \typeA$ with $n \geq 1$. A linear
	$\mathcal{V}\lambda$-theory $((G,\Sigma),Ax)$ is a tuple such that
	$Ax$ is a class of $\mathcal{V}$-equations-in-context over linear
	$\lambda$-terms built from $(G,\Sigma)$.
\end{defi}

\begin{figure}[h!]
\small{
	{\renewcommand{\arraystretch}{1}
	\begin{tabular}{|m{11em} m{11em} m{11em}|}
		\hline
		\multicolumn{3}{|l|}{}
		\\
		\infer[\textbf{(refl)}]{v =_\top v}{}
		&
		\infer[\textbf{(trans)}]{v =_{q \otimes r} u}{
			v =_q w  \qquad
			w =_r u}
		&
                \infer[\textbf{(weak)}]{v =_r w}{v =_q w \qquad r \leq q }
		\\
		\infer[\textbf{(arch)}]{v =_q w}{
			\forall r \ll q .\ v =_r w}
		&
		\infer[\textbf{(join)}]{v =_{\vee q_i} w}{\forall i \leq n.\ v =_{q_i} w}
                & 
                \infer[]{v \otimes v' =_{q \otimes r} w \otimes w'}{
			v =_q w \quad v' =_r w'}
                \\
		\infer[]{f(v_1,\dots,v_n) =_{\otimes q_i} f(w_1,\dots,w_n)}
		{\forall i \leq n.\ v_i =_{q_i} w_i}
		&
		\infer[]{
			v\ \prog{to}\ \ast .\  v'=_{q \otimes r}
			w\ \prog{to}\ \ast .\ w'}
		{v =_q w \qquad v' =_r w'}
		&
                \infer[]{
	        \lambda x : \typeA .\ v =_q  \lambda x :
	        \typeA .\ w}{v =_q w}
		\\
		\multicolumn{2}{|c}{
		\infer[]{
			\prog{pm}\ v\ \prog{to}\ x \otimes y.\ v' =_{q \otimes r}
			\prog{pm}\ w\ \prog{to}\ x \otimes y.\ w'}
			{v =_q w \qquad v' =_r w' }
		}
		&
                \infer[]{v \, v' =_{q \otimes r} w \, w'}
		{v =_q w \quad v' =_r w'}
		\\
                \multicolumn{2}{|c}{ 
                \infer[]{\Delta \vljud v =_q w : \typeA}{
		\Gamma \vljud v =_q w : \typeA \qquad \Delta \in
                \mathrm{perm}(\Gamma)}
                }
                &
		\infer[]{v[v'/x] =_{q \otimes r}w[w'/x]}
		{v =_q w \qquad v' =_r w'}		
                \\
	        \hline
	\end{tabular}
}}	
\caption{$\mathcal{V}$-congruence rules.}
\label{fig:theo_rules}
\end{figure}

The elements of $Ax$ are called axioms of the theory. Let $Th(Ax)$ be the
smallest class that contains $Ax$ and that is closed under the rules of
Fig.~\ref{fig:eqs} (\ie the classical equational system) and of
Fig.~\ref{fig:theo_rules} (as usual we omit the context and typing
information). The elements of $Th(Ax)$ are called theorems of the theory.

Let us examine the rules in Fig.~\ref{fig:theo_rules} in more detail. They can
be seen as a generalisation of the notion of a congruence. The rules
\textbf{(refl)} and \textbf{(trans)} are a generalisation of equality's
reflexivity and transitivity. Rule \textbf{(weak)} encodes the principle that
the higher the label in the $\mathcal{V}$-equation, the `tighter' is the
relation between the two terms in the $\mathcal{V}$-equation. In other words,
$v =_r w$ is subsumed by $v =_q w$, for $r\leq q$. This can be seen clearly \eg
with the metric quantale by reading $v =_q w$ as ``the terms $v$ and $w$ are
\emph{at most} at distance $q$ from each other'' (recall that in the metric
quantale the usual order is reversed, \ie $\leq\ :=\ \geq_{[0,\infty]}$). Rule
\textbf{(arch)} is essentially a generalisation of the Archimedean rule
in~\cite{mardare16,mardare17}. It says that if $v =_r w$ for all
\emph{approximations} $r$ of $q$ then it is also the case that $v =_q w$. Rule
\textbf{(arch)} involves possibly infinitely many premisses. We comment on its
necessity after establishing the aforementioned soundness and completeness
theorem (\cref{theo:sound_compl2}).  Rule \textbf{(join)} says that deductions
are closed under finite joins, and in particular it is always the case that $v
=_\bot w$. All other rules correspond to a generalisation of
\emph{compatibility} to a $\mathcal{V}$-equational setting.  The reader may
have noticed that the rules in Fig.~\ref{fig:theo_rules} do not contain a
$\mathcal{V}$-generalisation of symmetry w.r.t.  standard equality. Such a
generalisation would be:
\begin{flalign*}
	\infer{w =_q v}{v =_q w}
\end{flalign*}
This rule is not present in Fig.~\ref{fig:theo_rules} because in some
quantales $\mathcal{V}$ it forces too many
$\mathcal{V}$-equations. For example, in the Boolean quantale the condition
$v \leq w$ would automatically entail $w \leq v$ (due to
symmetry); in fact, for this particular case symmetry forces the notion
of inequation to collapse into the classical notion of equation. On
the other hand, symmetry is desirable in the (ultra)metric case
because (ultra)metrics need to respect the symmetry
equation~\cite{JGL-topology}.
\begin{defi}[Symmetric linear $\mathcal{V}\lambda$-theories]
	A symmetric linear $\mathcal{V}\lambda$-theory is a linear
	$\mathcal{V}\lambda$-theory whose set of theorems is closed under
	symmetry.
\end{defi}

We end this subsection by briefly commenting on linear
$\mathcal{V}\lambda$-theories where $\mathcal{V}$ is a quantale with a
\emph{linear order}. This is relevant for comparing our general $\V$-equational
system to that of metric equations~\cite{mardare16,mardare17} which tacitly
uses a quantale with a linear order, namely  the metric quantale. 

\begin{thm}\label{theo:simpl}
  Assume that the underlying order of $\mathcal{V}$ is linear and
  consider a (symmetric) linear
  $\mathcal{V}\lambda$-theory. Substituting the rule below on the left
  by the one below on the right does not change the theory.
  \begin{flalign*}
    \infer{v =_{\vee q_i} w}{\forall i \leq n.\ v =_{q_i} w} \hspace{2cm}
    \infer{v =_\bot w}{}
  \end{flalign*}
\end{thm}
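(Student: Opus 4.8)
The plan is to compare the two systems at the level of theorems and reduce the claim to two elementary closure checks. Write $Th(Ax)$ for the theorem set generated from an axiom class $Ax$ by the rules of \cref{fig:eqs} together with all rules of \cref{fig:theo_rules}, and $Th'(Ax)$ for the theorem set generated by the same rules but with \textbf{(join)} replaced by the axiom $v=_\bot w$. Since each of these is by definition the smallest class containing $Ax$ and closed under the stated rules, I would prove $Th(Ax)=Th'(Ax)$ by showing that $Th(Ax)$ is closed under every rule defining $Th'(Ax)$ and, conversely, that $Th'(Ax)$ is closed under every rule defining $Th(Ax)$; minimality then yields the two inclusions. (Equivalently one could induct on derivation trees, but the closure formulation makes the two one-line arguments transparent.)

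For the inclusion $Th'(Ax)\subseteq Th(Ax)$ I would simply observe that the new axiom is the empty-family instance of \textbf{(join)}: taking $n=0$ in \textbf{(join)} gives $v=_{\bigvee_{i\in\emptyset}q_i}w$ and $\bigvee\emptyset=\bot$, so $v=_\bot w\in Th(Ax)$; all other rules are shared verbatim. Note this direction uses nothing about $\mathcal V$ beyond it being a quantale. For the converse $Th(Ax)\subseteq Th'(Ax)$ I would check that $Th'(Ax)$ is closed under \textbf{(join)}: given premises $v=_{q_1}w,\dots,v=_{q_n}w$ in $Th'(Ax)$, if $n=0$ the conclusion $v=_\bot w$ lies in $Th'(Ax)$ because that is precisely the axiom it is closed under, while if $n\ge 1$ linearity of the order on $\mathcal V$ makes the finite set $\{q_1,\dots,q_n\}$ have a greatest element $q_{i_0}$, so $\bigvee_{i\le n}q_i=q_{i_0}$ and the conclusion $v=_{\bigvee_i q_i}w$ is literally the premise $v=_{q_{i_0}}w$, hence already in $Th'(Ax)$. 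Since $Th'(Ax)$ is trivially closed under all the remaining (shared) rules, minimality of $Th(Ax)$ gives the inclusion. The symmetric case is immediate: closing both $Th(Ax)$ and $Th'(Ax)$ additionally under the symmetry rule is orthogonal to swapping \textbf{(join)} for the $\bot$-axiom, so the same argument applies.

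I do not expect a genuine obstacle here; the proof is short. The only points that need care are to keep the ``smallest closed class'' description of $Th(-)$ in view so that each direction becomes a single closure check rather than an induction on derivations, and to treat the degenerate $n=0$ instance of \textbf{(join)} explicitly, since it is exactly the instance that the $\bot$-axiom is meant to replace --- linearity is used only to collapse nonempty finite joins to maxima.
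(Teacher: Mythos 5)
Your proof is correct and follows essentially the same route as the paper's: both directions rest on the same two observations, namely that the $\bot$-rule is the $n=0$ instance of \textbf{(join)}, and that for $n\ge 1$ linearity collapses $\bigvee q_i$ to one of the $q_i$, making the conclusion literally a premise. Your framing via closure of the two theorem sets is just a slightly more explicit packaging of the paper's case distinction.
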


\begin{proof}
  Clearly, the rule on the left subsumes the one on the right by
  choosing $n = 0$. So we only need to show the inverse direction
  under the assumption that $\mathcal{V}$ is linear. Thus, assume that
  $\forall i \leq n.\ v =_{q_i} w$. We proceed by case distinction. If
  $n = 0$ then we need to show that $v =_\bot w$ which is given
  already by the rule on the right.  Suppose now that $n > 0$. Then
  since the order of $\mathcal{V}$ is linear the value $\vee q_i$ must
  already be one of the values $q_i$ and $v =_{q_i} w$ is already part
  of the theory. In other words, in case of $n > 0$ the rule on the
  left is redundant.
\end{proof}

The above result is in accordance with metric universal
algebra~\cite{mardare16,mardare17} which also does not include rule
\textbf{(join)}. Interestingly, however, we still have $v =_\bot w$ for all
$\lambda$-terms $v$ and $w$ and the counterpart of such a rule is not present
in~\cite{mardare16,mardare17}. This is explained by the fact that metric
equations in~\cite{mardare16,mardare17} are labelled \emph{only} by
non-negative rational numbers whilst we also permit infinity to be a label (in
our case, labels are given by a basis $B$ which for the metric case corresponds
to the \emph{extended} non-negative rational numbers). All remaining rules of
our $\mathcal{V}$-equational system instantiated to the metric case find a
counterpart in the metric equational system presented
in~\cite{mardare16,mardare17}.  Similarly, when $\mathcal{V}$ is the Boolean
quantale (which is linear) and we assume $\mathcal{V}\lambda$-theories to be
symmetric $\mathcal{V}$-congruence is equivalent to classical congruence.

\subsection{Semantics of $\mathcal{V}$-equations}
\label{sec:vcat}
In this subsection we set the necessary background for presenting a
sound and complete class of models for (symmetric) linear
$\mathcal{V}\lambda$-theories. We start by recalling basics concepts
of $\mathcal{V}$-categories, which are central in a field initiated by
Lawvere in~\cite{lawvere73} and can be intuitively seen as generalised
metric spaces~\cite{stubbe14,hofmann20,velebil19}. As we will show,
$\mathcal{V}$-categories provide structure to suitably interpret
$\mathcal{V}$-equations.

\begin{defi}
  \label{defn:vcat}
  A (small) $\mathcal{V}$-category is a pair $(X,a)$ where $X$ is a
  class (set) and $a : X \times X \to \mathcal{V}$ is a function that
  satisfies:
  \begin{flalign*}
    k \leq a(x,x) \qquad \text{ and }  \qquad
    a(x,y) \otimes a(y,z) \leq a(x,z) \hspace{2cm}
    (x,y,z \in X)
  \end{flalign*}
  For two $\mathcal{V}$-categories $(X,a)$ and $(Y,b)$, a
  $\mathcal{V}$-functor $f : (X,a) \to (Y,b)$ is a function
  $f : X \to Y$ that satisfies the inequality
  $a(x,y) \leq b(f(x),f(y))$ for all $x,y \in X$.
\end{defi}
Small $\mathcal{V}$-categories and $\mathcal{V}$-functors form a
category which we denote by $\VCat$.  A $\mathcal{V}$-category $(X,a)$
is called \emph{symmetric} if $a(x,y) = a(y,x)$ for all $x,y \in
X$. We denote by $\VCatSy$ the full subcategory of $\VCat$ whose
objects are symmetric. Every $\mathcal{V}$-category carries a natural
order defined by $x \leq y$ whenever $k \leq a(x,y)$. A
$\mathcal{V}$-category is called \emph{separated} if its natural order
is anti-symmetric. We denote by $\VCatSe$ the full subcategory of
$\VCat$ whose objects are separated.
\begin{exa}
  For $\mathcal{V}$ the Boolean quantale, $\VCatSe$ is the category $\Pos$ of
  partially ordered sets and monotone maps; $\VCatSS$ is simply the category
  $\Set$ of sets and functions.  For $\mathcal{V}$ the metric quantale,
  $\VCatSS$ is the category $\Met$ of extended metric spaces and non-expansive
  maps. For $\mathcal{V}$ the ultrametric quantale, $\VCatSS$ is the category
  of extended ultrametric spaces and non-expansive maps. In what follows we
  omit the qualifier `extended' in `extended (ultra)metric spaces'. 
\end{exa}
The inclusion functor $\VCatSe \hookrightarrow \VCat$ has a left
adjoint \cite{hofmann20}. It is constructed first by defining the
equivalence relation $x \sim y$ whenever $x \leq y$ and $y \leq x$
(for $\leq$ the natural order introduced earlier). Then this relation
induces the separated $\mathcal{V}$-category $(X/_\sim, \tilde a)$
where $\tilde a$ is defined as $\tilde a([x],[y]) = a(x,y)$ for every
$[x],[y] \in X/_\sim$. The left adjoint of the inclusion functor
$\VCatSe \hookrightarrow \VCat$ sends every $\mathcal{V}$-category
$(X,a)$ to $(X/_\sim, \tilde a)$. This quotienting construct preserves
symmetry, and therefore we automatically obtain the following result.
\begin{thm}\label{theo:reflective}
  The inclusion functor $\VCatSS \hookrightarrow \VCatSy$ has a left
  adjoint.
\end{thm}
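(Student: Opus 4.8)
The plan is to obtain the left adjoint by simply \emph{corestricting} the left adjoint of $\VCatSe \hookrightarrow \VCat$ recalled just above, using that the quotienting construction $(X,a) \mapsto (X/_\sim, \tilde a)$ preserves symmetry. Write $L : \VCat \to \VCatSe$ for that left adjoint and $\eta_{(X,a)} : (X,a) \to L(X,a) = (X/_\sim,\tilde a)$ for its unit, which is the quotient $\mathcal{V}$-functor $x \mapsto [x]$.

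First I would check that $L$ sends symmetric $\mathcal{V}$-categories to (separated) symmetric ones: if $a(x,y)=a(y,x)$ for all $x,y$, then by the very definition of $\tilde a$ we get $\tilde a([x],[y]) = a(x,y) = a(y,x) = \tilde a([y],[x])$, so $L(X,a)$ lies in $\VCatSS$. Hence $L$ corestricts to a functor $L' : \VCatSy \to \VCatSS$, and for symmetric $(X,a)$ the unit component $\eta_{(X,a)}$ is a morphism of $\VCatSy$, since both its domain and its codomain are symmetric $\mathcal{V}$-categories and it is a $\mathcal{V}$-functor.

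Second, I would record that $\VCatSS$, $\VCatSy$ and $\VCatSe$ are \emph{full} subcategories of $\VCat$, that $\VCatSS = \VCatSe \cap \VCatSy$, and that the inclusion $\VCatSS \hookrightarrow \VCatSy$ is the restriction of $\VCatSe \hookrightarrow \VCat$. The adjunction then follows formally: for a symmetric $\mathcal{V}$-category $(X,a)$ and a symmetric separated $\mathcal{V}$-category $(Y,b)$,
\begin{align*}
  \VCatSy\big((X,a),(Y,b)\big)
  &= \VCat\big((X,a),(Y,b)\big)\\
  &\cong \VCatSe\big(L(X,a),(Y,b)\big)\\
  &= \VCatSS\big(L'(X,a),(Y,b)\big),
\end{align*}
where the middle bijection is the one witnessing $L \dashv (\VCatSe \hookrightarrow \VCat)$ (given by precomposition with $\eta_{(X,a)}$) and the two equalities hold by fullness. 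Naturality of the composite bijection in $(X,a)$ and $(Y,b)$ is inherited from that of the bijection for $L$, so $L'$ is left adjoint to $\VCatSS \hookrightarrow \VCatSy$, with unit the restriction of $\eta$.

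There is essentially no obstacle here: the only content is the one-line computation that quotienting preserves symmetry (already asserted before the statement), the rest being a routine transfer of a reflection along full-subcategory inclusions. If a self-contained argument is preferred, one can instead verify directly that every $\mathcal{V}$-functor $f : (X,a) \to (Y,b)$ with $(Y,b)$ separated and symmetric factors uniquely through $\eta_{(X,a)}$ — but this is precisely the universal property already established for $L$, since a separated symmetric $\mathcal{V}$-category is in particular separated.
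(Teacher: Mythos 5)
Your proposal is correct and takes essentially the same route as the paper, which simply observes that the quotienting construction preserves symmetry and that the adjunction therefore restricts "automatically"; you have merely spelled out the routine transfer of the reflection along the full-subcategory inclusions.
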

Next, we recall notions of enriched category theory~\cite{kelly82} instantiated
into the setting of \emph{autonomous categories enriched over
$\mathcal{V}$-categories}.  We will use the enriched structure to give
semantics to $\mathcal{V}$-equations between linear $\lambda$-terms (the latter
as usual being morphisms in the category itself).  First, note that every
category $\VCat$ is autonomous with the tensor $(X,a) \otimes (Y,b) := (X
\times Y, a \otimes b)$ where $a\otimes b$ is defined as,
\begin{flalign*}
  (a \otimes b)((x,y), (x',y')) = a(x,x') \otimes b(y,y')
\end{flalign*}
and the set of $\mathcal{V}$-functors $\VCat((X,a),(Y,b))$ is equipped
with the map,
\[
(f,g) \mapsto \bigwedge_{x \in X} b(f(x),g(x))
\]
\begin{thm}\label{theo:auto}
  The categories $\VCatSy$,
  $\VCatSe$, and $\VCatSS$ inherit the autonomous structure of
  $\VCat$ whenever $\mathcal{V}$ is integral.
\end{thm}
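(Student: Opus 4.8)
The plan is to verify that each of the three subcategories $\VCatSy$, $\VCatSe$, and $\VCatSS$ is closed under the autonomous structure already exhibited on $\VCat$ — namely the tensor $(X,a)\otimes(Y,b)$, the monoidal unit, and the internal hom $\VCat((X,a),(Y,b))$ — and that once closure is established, the coherence isomorphisms and the adjunction witnessing closedness restrict automatically because they are carried by the same underlying functions as in $\VCat$. So the real content is three small stability checks per structural operation, and the integrality hypothesis $k=\top$ is what makes the internal hom land in the symmetric/separated worlds.

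First I would treat the unit. The monoidal unit of $\VCat$ is the one-point $\mathcal{V}$-category (or, depending on presentation, $(\{\ast\},k)$); it is trivially symmetric and trivially separated, so it lies in all three subcategories. Next, tensor: given symmetric $(X,a)$ and $(Y,b)$, the formula $(a\otimes b)((x,y),(x',y')) = a(x,x')\otimes b(y,y')$ is symmetric because $\otimes$ on $\mathcal{V}$ is commutative and $a,b$ are symmetric; and if $a,b$ are separated then $(a\otimes b)((x,y),(x',y')) \geq k$ forces, using integrality ($k=\top$, so $p\otimes q = \top$ implies $p=q=\top$ — more carefully, $p\otimes q \leq p$ and $\leq q$ since $k$ is top, hence $p\otimes q = \top$ gives $p = q = \top$), both $a(x,x')=k$ and $b(y,y')=k$, and symmetrically in the other direction, so by separatedness $x=x'$ and $y=y'$. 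Thus $\otimes$ restricts to each of $\VCatSy$, $\VCatSe$, $\VCatSS$.

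Then the internal hom. For $\mathcal{V}$-categories $(X,a)$, $(Y,b)$ the hom-object has underlying set $\VCat((X,a),(Y,b))$ with distance $[f,g] = \bigwedge_{x\in X} b(f(x),g(x))$. If $(Y,b)$ is separated, then $[f,g]\geq k$ and $[g,f]\geq k$ together give $b(f(x),g(x)) = k = b(g(x),f(x))$ for every $x$, hence $f(x)=g(x)$ for all $x$, i.e. $f=g$; so the hom-object is separated whenever the codomain is. If $(Y,b)$ is symmetric, then $[f,g] = \bigwedge_x b(f(x),g(x)) = \bigwedge_x b(g(x),f(x)) = [g,f]$, so the hom-object is symmetric whenever the codomain is. Hence $\VCatSy$ and $\VCatSS$ are closed under the internal hom (the domain plays no role), and similarly $\VCatSe$. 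The adjunction $(-)\otimes(Y,b) \dashv (Y,b)\multimap(-)$ then restricts to the subcategory because unit and counit are the same underlying maps (currying and evaluation) and we have just checked both functors preserve the relevant subcategory; the pentagon, triangle, symmetry, and the closed-structure isomorphisms hold in the subcategory because they already hold in $\VCat$ and the subcategories are full.

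The main obstacle — really the only nonroutine point — is the use of integrality in the tensor-separatedness step: without $k=\top$ one could have $p\otimes q \geq k$ with $p, q$ strictly below $\top$, and separatedness of the tensor would fail. I would therefore state that step carefully, noting $p\otimes q \leq p\otimes k = p$ and $p\otimes q\leq k\otimes q = q$ (valid since $k$ is the unit and, being the top, $q\leq k$ and $p\leq k$), so $k\leq p\otimes q$ forces $p=q=k$. Everything else is a direct unfolding of the definitions of $\VCat$'s tensor, unit, and internal hom together with the commutativity of $\otimes$ and the fullness of the three subcategories, so I would present it compactly rather than belabouring each coherence diagram.
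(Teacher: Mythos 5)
Your proposal is correct and follows essentially the same route as the paper's (much terser) proof: check that the tensor, unit, and internal hom preserve symmetry and separation, using integrality via $p\otimes q\leq p$ for the tensor--separation step and the fact that $k\leq\bigwedge_x b(f(x),g(x))$ forces $k\leq b(f(x),g(x))$ pointwise for the hom--separation step. Your write-up simply spells out the details the paper leaves implicit.
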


\begin{proof}
  The proof follows by showing that the closed monoidal structure of
  $\VCat$ preserves symmetry and separation. It is immediate for
  symmetry. For separation, note that since $\mathcal{V}$ is integral
  the inequation $x \otimes y \leq x$ holds for all
  $x,y \in \mathcal{V}$. It follows that the monoidal structure
  preserves separation. The fact that the closed structure also
  preserves separation uses the implication
  $x \leq \bigwedge A \Rightarrow \forall a \in A.\ x \leq a$ for all
  $x \in X, A \subseteq X$.
\end{proof}

Since we assume that $\mathcal{V}$ is integral, this last theorem states that
all the categories mentioned therein are suitable bases of
enrichment~\cite{kelly82}. 

\begin{defi}\label{def:VCatEnriched}
  A category $\catC$ is $\VCat$-enriched (or simply, a $\VCat$-category) if for
  all $\catC$-objects $X$ and $Y$ the hom-set $\catC(X,Y)$ is a
  \emph{$\mathcal{V}$-category} and if the composition of $\catC$-morphisms, 
  \begin{flalign*}
    (\ \cdot\ ) : \catC(X,Y) \otimes \catC(Y,Z)
    \longrightarrow \catC(X,Z)
  \end{flalign*}
  is a $\mathcal{V}$-functor. Given two $\VCat$-categories $\catC$ and $\catD$
  and a functor $\funF : \catC \to \catD$, we call $\funF$ a $\VCat$-enriched
  functor (or simply, $\VCat$-functor) if for all $\catC$-objects $X$ and $Y$
  the map $\funF_{X,Y} : \catC(X,Y) \to \catD(\funF X, \funF, Y)$ is a
  $\mathcal{V}$-functor.  An adjunction $\catC : \funF \dashv \funG : \catD$ is
  called $\VCat$-enriched if for all objects $X \in |\catC|$ and $Y \in
  |\catD|$ there exists a $\mathcal{V}$-isomorphism $\catD(\funF X, Y) \cong
  \catC(X, \funG Y)$ natural in $X$ and $Y$.  We obtain analogous notions of
  enrichment by substituting $\VCat$ with $\VCatSe$, $\VCatSy$, or $\VCatSS$.
\end{defi}

If $\catC$ is a $\VCat$-category then $\catC \times \catC$ is also a
$\VCat$-category via the tensor operation $\otimes$ in $\VCat$.  We take
advantage of this fact in the following definition.

\begin{defi}\label{defn:enr_aut}
  A $\VCat$-enriched autonomous category $\catC$ is an autonomous and
  $\VCat$-enriched category $\catC$ such that the bifunctor $\otimes : \catC
  \times \catC \to \catC$ is a $\VCat$-functor and the adjunction $(- \otimes
  X) \dashv (X \multimap -)$ is a $\VCat$-adjunction.  We obtain analogous
  notions of enriched autonomous category by replacing $\VCat$ (as basis of
  enrichment) with $\VCatSe$, $\VCatSy$, or $\VCatSS$.
\end{defi}

\begin{exa}\label{ex:pos_met}
  Recall that $\Pos \cong \VCatSe$ when $\mathcal{V}$ is the Boolean quantale.
  According to~\cref{theo:auto} the category $\Pos$ is autonomous. It
  follows by general results that the category is
  $\Pos$-enriched~\cite{borceux94}. It is also easy to see that its tensor is
  $\Pos$-enriched and that the adjunction $(- \otimes X) \dashv (X \multimap
  -)$ is $\Pos$-enriched. Therefore, $\Pos$ is an instance of
  Definition~\ref{defn:enr_aut}. Note also that $\Set \cong \VCatSS$ for
  $\mathcal{V}$ the Boolean quantale and that $\Set$ is trivially an instance
  of Definition~\ref{defn:enr_aut}.

  Recall that $\Met \cong \VCatSS$ when $\mathcal{V}$ is the metric quantale.
  Thus, the category $\Met$ is autonomous (\cref{theo:auto}) and
  $\Met$-enriched~\cite{borceux94}.  It follows as well from routine
  calculations that its tensor is $\Met$-enriched and that the adjunction $(-
  \otimes X) \dashv (X \multimap -)$ is $\Met$-enriched. Therefore $\Met$ is an
  instance of Definition~\ref{defn:enr_aut}. An analogous reasoning tells that
  the category of ultrametric spaces (enriched over itself) is also an instance
  of Definition~\ref{defn:enr_aut}. We present further examples of
  $\VCat$-enriched autonomous categories in Section~\ref{sec:examples}.
\end{exa}

Finally, let us recall the interpretation of linear $\lambda$-terms on an
autonomous category $\catC$ (see Section~\ref{sec:back}) and assume that
$\catC$ is $\VCat$-enriched. Then we say that a $\mathcal{V}$-equation $\Gamma
\vljud v =_q w : \typeA$ is \emph{satisfied} by this interpretation if $q \leq
a(\sem{\Gamma \vljud v : \typeA},\sem{\Gamma \vljud w : \typeA})$ where $a :
\catC(\sem{\Gamma},\sem{\typeA}) \times \catC(\sem{\Gamma},\sem{\typeA}) \to
\mathcal{V}$ is the underlying function of the $\mathcal{V}$-category
$\catC(\sem{\Gamma},\sem{\typeA})$.

\begin{thm}\label{theo:sound}
  The rules listed in Fig.~\ref{fig:eqs} and in Fig.~\ref{fig:theo_rules}
  are sound for $\VCat$-enriched autonomous categories $\catC$.
  Specifically if $\Gamma \vljud v =_q w : \typeA$ results from the
  rules in Fig.~\ref{fig:eqs} and Fig.~\ref{fig:theo_rules} then
  $q \leq a(\sem{\Gamma \vljud v : \typeA},\sem{\Gamma \vljud w : \typeA})$.
\end{thm}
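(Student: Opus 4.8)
The plan is to argue by induction on the derivation of $\Gamma \vljud v =_q w : \typeA$, using throughout the $\VCat$-enrichment of $\catC$ in the following distilled form. Writing $a = a_{X,Y}$ for the underlying map of the $\mathcal{V}$-category $\catC(X,Y)$ (\cref{defn:vcat}), the ingredients I would isolate first are: (i) the $\mathcal{V}$-category axioms $k \leq a(f,f)$ and $a(f,g) \otimes a(g,h) \leq a(f,h)$; (ii) since composition is a $\mathcal{V}$-functor, for every fixed morphism $e$ the maps $(-)\comp e$ and $e \comp (-)$ are $\mathcal{V}$-functors, so $a(f,g) \leq a(f \comp e, g \comp e)$ and $a(f,g) \leq a(e \comp f, e \comp g)$, and more generally $a(f,g) \otimes a(f',g') \leq a(f' \comp f, g' \comp g)$ whenever the composites are defined; (iii) since $\otimes : \catC \times \catC \to \catC$ is a $\VCat$-functor, $a(f,g) \otimes a(f',g') \leq a(f \otimes f', g \otimes g')$; (iv) since $(-\otimes X) \dashv (X \multimap -)$ is a $\VCat$-adjunction, currying $\overline{(-)}$ is a $\mathcal{V}$-isomorphism and in particular $a(f,g) \leq a(\overline f, \overline g)$. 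I would also record that, by the standing integrality assumption, $\top = k$, and that the lattice of $\mathcal{V}$ is continuous.

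With this in place, the classical axioms and the order-theoretic rules are immediate. If $\Gamma \vljud v = w : \typeA$ is an instance of a schema in \cref{fig:eqs}, then by \cref{theo:bsound} we have $\sem{v} = \sem{w}$, so both $\top \leq a(\sem v, \sem w)$ and $\top \leq a(\sem w, \sem v)$ reduce to $\top = k \leq a(\sem v, \sem v)$, which is (i); rule \textbf{(refl)} is the same computation. For \textbf{(trans)}, from $q \leq a(\sem v, \sem w)$ and $r \leq a(\sem w, \sem u)$ monotonicity of $\otimes$ and the triangle axiom give $q \otimes r \leq a(\sem v, \sem u)$; \textbf{(weak)} is $r \leq q \leq a(\sem v, \sem w)$; \textbf{(join)} is that $a(\sem v, \sem w)$ is an upper bound of the $q_i$, so $\bigvee q_i \leq a(\sem v, \sem w)$; and \textbf{(arch)} is the only place continuity enters: if $r \leq a(\sem v, \sem w)$ for all $r \ll q$ then $q = \bigvee \{ r \mid r \ll q \} \leq a(\sem v, \sem w)$.

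The bulk of the work is the family of congruence rules of \cref{fig:theo_rules}. For each of them I would read off, from the corresponding clause of \cref{fig:lang_sem}, that the denotation of the conclusion is obtained from the denotations of the premisses by a fixed pattern of (a) tensoring two morphisms, (b) pre-/post-composing with fixed housekeeping morphisms ($\spl$, $\join$, $\sh$, and structural isos such as $\sw$ and $\app$), (c) post-composing with the fixed morphism $\sem f$ for an operation symbol, and (d) currying. By (ii)--(iv) each of (a)--(d) is non-expansive in the appropriate $\mathcal{V}$-sense and $\otimes$ is monotone, so chaining the resulting inequalities yields exactly the label in the conclusion; for instance $q \otimes r \leq a(\sem v, \sem w) \otimes a(\sem{v'}, \sem{w'}) \leq a(\sem v \otimes \sem{v'}, \sem w \otimes \sem{w'}) \leq a(\sem{v \otimes v'}, \sem{w \otimes w'})$ for the tensor rule, with the last step being pre-composition with $\spl \comp \sh$; the $\lambda$-rule uses (iv) in addition; the operation-symbol rule iterates (iii) and then applies (ii) with $e = \sem f$; and the rules for $\prog{pm}\ v\ \prog{to}\ x \otimes y.\ v'$ and $v\ \prog{to}\ \ast.\ v'$ recombine the two premiss denotations by composition, handled by (ii) in both arguments. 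For the substitution rule I would first invoke \cref{lem:exch_subst_inter} to rewrite $\sem{v[v'/x]}$ as $\sem v$ post-composed with morphisms depending non-expansively on $\sem{v'}$, and for the permutation rule I would use the exchange part of the same lemma to rewrite $\sem{\Delta \vljud v : \typeA}$ as $\sem{\Gamma \vljud v : \typeA}$ pre-composed with a fixed isomorphism independent of $v$; both then reduce to the same bookkeeping.

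The one genuinely delicate point --- and hence where I expect the main effort to go --- is steps (b) and (c): verifying, rule by rule, that the variable parts of the conclusion's denotation really do factor through the fixed housekeeping and operation morphisms exactly as claimed, so that those morphisms can be peeled off by (ii) without ever mixing with the inductive data. This is a routine but tedious inspection of \cref{fig:lang_sem} (mildly subtler for $\prog{pm}$ and $\prog{to}\ \ast$, where the body denotation and the scrutinee denotation occupy distinct slots and are recombined by composition); no idea beyond the four enrichment facts above is needed.
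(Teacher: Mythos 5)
Your proposal is correct and follows essentially the same route as the paper's proof: induction on derivations, with the $\mathcal{V}$-category axioms handling \textbf{(refl)}/\textbf{(trans)}/\textbf{(weak)} and the Fig.~\ref{fig:eqs} equations (via \cref{theo:bsound} and integrality), continuity for \textbf{(arch)}, the lub property for \textbf{(join)}, and the enrichment of composition, $\otimes$, and the closed-structure adjunction (plus \cref{lem:exch_subst_inter} for substitution and exchange) for the congruence rules. The sample calculations you sketch for the tensor, operation-symbol, and substitution rules match the ones the paper writes out.
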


\begin{proof}
  Let us focus first on the equations listed in Fig.~\ref{fig:eqs}.  Recall
  that an equation $\Gamma \vljud v = w : \typeA$ abbreviates the
  $\mathcal{V}$-equations $\Gamma \vljud v =_\top w : \typeA$ and $\Gamma
  \vljud w =_\top v : \typeA$.  Moreover, we already know that the equations
  listed in Fig.~\ref{fig:eqs} are sound for autonomous categories,
  specifically if $v = w$ is an equation of Fig.~\ref{fig:eqs} then $\sem{v} =
  \sem{w}$ in $\catC$ (\cref{theo:sound_compl}).  Thus, by the
  definition of a $\mathcal{V}$-category (Definition~\ref{defn:vcat}) and by
  the assumption of $\mathcal{V}$ being integral ($k = \top$) we obtain
  $\top = k \leq a(\sem{v},\sem{w})$ and $\top = k \leq a(\sem{w},\sem{v})$.

  Let us now focus on the rules listed in Fig.~\ref{fig:theo_rules}. The first
  three rules follow from the definition of a $\mathcal{V}$-category and the
  transitivity property of $\leq$. Rule \textbf{(arch)} follows from the
  continuity of $\mathcal{V}$, specifically from the fact that $q$ is the
  \emph{least} upper bound of all elements $r$ that are way-below $q$.  Rule
  \textbf{(join)} follows from the definition of least upper bound.  The
  remaining rules follow from the definition of the tensor functor $\otimes$ in
  $\VCat$, the fact that $\catC$ is $\VCat$-enriched, $\otimes : \catC \times
  \catC \to \catC$ is a $\VCat$-functor, and the fact that $(- \otimes X)
  \dashv (X \multimap -)$ is a $\VCat$-adjunction. For example, for the
  compatibility rule concerning $\rulename{ax}$ we reason as follows:
  \begin{flalign*}
    & \, a(\sem{f(v_1,\dots,v_n)}, \sem{f(w_1,\dots,w_n)}) \\
    & = a(\sem{f} \comp (\sem{v_1} \otimes \dots \otimes \sem{v_n})
    \comp \spl_{\Gamma_1;\dots;\Gamma_n} \comp \sh_E,
    \sem{f} \comp (\sem{w_1} \otimes \dots \otimes \sem{w_n}) \comp
    \spl_{\Gamma_1;\dots;\Gamma_n} \comp \sh_E) & \\
    & \geq a(\sem{f} \comp (\sem{v_1} \otimes \dots \otimes \sem{v_n}),
    \sem{f} \comp (\sem{w_1} \otimes \dots \otimes \sem{w_n})) & \\
    & \geq a(\sem{v_1} \otimes \dots \otimes \sem{v_n}),
      (\sem{w_1} \otimes \dots \otimes \sem{w_n}) & \\
    & \geq a(\sem{v_1},\sem{w_1}) \otimes \dots \otimes
      a(\sem{v_n},\sem{w_n}) &  \\
    & \geq q_1 \otimes \dots \otimes q_n &  
  \end{flalign*}
  where the second step follows from the fact that
  $\spl_{\Gamma_1;\dots;\Gamma_n} \comp \sh_E$ is a morphism in
  $\catC$ and that $\catC$ is $\VCat$-enriched; the third step follows
  from an analogous reasoning; the fourth step follows from the fact that
  $\otimes : \catC \times \catC \to \catC$ is a $\VCat$-functor; the
  last step follows from the premise of the rule in question.  As
  another example, the proof for the substitution rule  proceeds similarly:
  \begin{flalign*}
    & \, a(\sem{v[v'/x]}, \sem{w[w'/x]}) & \\
    & = a(\sem{v} \comp \join_{\Gamma,\typeA} \comp
      (\id \otimes \sem{v'}) \comp \spl_{\Gamma;\Delta},
      \sem{w} \comp \join_{\Gamma,\typeA} \comp
          (\id \otimes \sem{w'}) \comp \spl_{\Gamma;\Delta}) & \\
    & \geq a(\sem{v} \comp \join_{\Gamma,\typeA} \comp
      (\id \otimes \sem{v'}),
      \sem{w} \comp \join_{\Gamma,\typeA} \comp
          (\id \otimes \sem{w'})) & \\
    & \geq
    a(\id \otimes \sem{v'}, \id \otimes \sem{w'}) \otimes
    a(\sem{v} \comp \join_{\Gamma,\typeA},
          \sem{w} \comp \join_{\Gamma,\typeA})&  \\
    & \geq
    a(\id \otimes \sem{v'}, \id \otimes \sem{w'}) \otimes
      a(\sem{v}, \sem{w}) \\
    & \geq
    a(\id,\id) \otimes a(\sem{v'}, \sem{w'})  \otimes
    a(\sem{v}, \sem{w}) \\
    & =
          a(\sem{v'}, \sem{w'}) \otimes a(\sem{v}, \sem{w})&  \\
    & \geq q \otimes r
  \end{flalign*}
  The proof for the rule concerning $(\multimap_i)$ additionally
  requires the following two facts: if a $\mathcal{V}$-functor
  $f : (X,a) \to (Y,b)$ is an isomorphism then
  $a(x,x') = b(f(x),f(x'))$ for all $x,x' \in X$. For a context
  $\Gamma$, the morphism
  $\join_{\Gamma; x : \typeA} : \sem{\Gamma} \otimes \sem{\typeA} \to
  \sem{\Gamma, x : \typeA}$ is an isomorphism in $\catC$. The proof
  for the rule concerning the permutation of variables (exchange) also
  makes use of the fact that $\sem{\Delta} \to \sem{\Gamma}$ is
  an isomorphism.
\end{proof}

\subsection{$\V\lambda$-calculus, soundness, and completeness}

We now establish a soundness and completeness result for $\V\lambda$-calculus
introduced in Section~\ref{sec:sys}.  A key construct in this result is the
quotienting of a $\mathcal{V}$-category into a \emph{separated}
$\mathcal{V}$-category: we will use it to identify linear $\lambda$-terms when
generating a syntactic category (from a linear $\mathcal{V}\lambda$-theory)
that satisfies the axioms of autonomous categories. This naturally leads to the
following notion of a model for linear $\mathcal{V}\lambda$-theories.

\begin{defi}[Models of linear $\mathcal{V}\lambda$-theories]
  Consider a linear $\mathcal{V}\lambda$-theory $((G,\Sigma),Ax)$ and
  a $\VCatSe$-enriched autonomous category $\catC$. Suppose that for
  each $X \in G$ we have an interpretation $\sem{X}$ as a
  $\catC$-object and analogously for the operation symbols. This
  interpretation structure is a model of the theory if all axioms in
  $Ax$ are satisfied by the interpretation.
\end{defi}

Let us then focus on establishing a completeness result for
$\V\lambda$-calculus. For two types $\typeA$ and $\typeB$ of a
$\mathcal{V}\lambda$-theory $\mathscr{T}$, consider the class
$\closValBP{\typeA}{\typeB}$ of values $v$ such that $x : \typeA \vljud v :
\typeB$. We equip $\closValBP{\typeA}{\typeB}$ with the function $a :
\closValBP{\typeA}{\typeB} \times \closValBP{\typeA}{\typeB} \to \mathcal{V}$
defined by,
\[
        a(v,w) = \bigvee \{ q \mid v =_q w \text{ is a theorem of } \mathscr{T}\}
\]
It is easy to see that $(\closValBP{\typeA}{\typeB},a)$ is a (possibly large)
$\mathcal{V}$-category. We then quotient this $\mathcal{V}$-category into a
\emph{separated} $\mathcal{V}$-category which we denote by
$(\closValBP{\typeA}{\typeB},a)_{/ \sim}$. As detailed in the proof of the next
theorem, this $\V$-category will serve as a hom-object of the syntactic
category $\Syn(\mathscr{T})$ generated from the linear
$\mathcal{V}\lambda$-theory $\mathscr{T}$.  We call $\mathscr{T}$
\emph{varietal} if $(\closValBP{\typeA}{\typeB},a)_{/ \sim}$ is a \emph{small}
$\mathcal{V}$-category for all types $\typeA$ and $\typeB$. In the rest of the
paper we will only work with varietal theories and locally small categories.

\begin{thm}[Soundness \& Completeness]\label{theo:sound_compl2}
  Consider a varietal $\mathcal{V}\lambda$-theory. A
  $\mathcal{V}$-equation-in-context $\Gamma \vljud v =_q w : \typeA$
  is a theorem iff it holds in all models of the theory.
\end{thm}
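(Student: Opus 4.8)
The argument follows the template of \cref{theo:sound_compl}: soundness by induction on derivations, completeness by a term-model (Lindenbaum--Tarski) construction, but now carried out internally to the $\VCatSe$-enriched setting. For soundness I would proceed by induction on the derivation of $\Gamma \vljud v =_q w : \typeA$ in $Th(Ax)$. The base cases are the axioms of the theory, which hold in every model by definition of a model, and the equational axioms of \cref{fig:eqs}; the inductive cases are the rules of \cref{fig:theo_rules}. The equational axioms and all the rules of \cref{fig:theo_rules} are sound for $\VCat$-enriched — hence a fortiori $\VCatSe$-enriched — autonomous categories by \cref{theo:sound}, so every theorem is satisfied by every model.

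For completeness the plan is to build the syntactic category $\Syn(\mathscr{T})$ and to show (i) it is a $\VCatSe$-enriched autonomous category, (ii) the tautological interpretation on it is a model of $\mathscr{T}$, and (iii) in $\Syn(\mathscr{T})$ one has $q \leq a\bigl(\sem{\Gamma \vljud v : \typeA},\, \sem{\Gamma \vljud w : \typeA}\bigr)$ iff $\Gamma \vljud v =_q w : \typeA$ is a theorem. Objects of $\Syn(\mathscr{T})$ are the types of $\mathscr{T}$; the hom-object from $\typeA$ to $\typeB$ is the separated $\mathcal{V}$-category $(\closValBP{\typeA}{\typeB}, a)_{/\sim}$ introduced just before the statement, so its underlying morphisms are equivalence classes $[v]$ of terms $x : \typeA \vljud v : \typeB$ modulo provable equality (equivalently, modulo $v =_\top w$ and $w =_\top v$, which is the kernel of the natural order of $a$ since $\mathcal{V}$ is integral). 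Identities are $[x]$ and composition of $[v] : \typeA \to \typeB$ with $[w] : \typeB \to \typeC$ is $[w[v/x]]$; well-definedness of this operation and the fact that the resulting composition map is a $\mathcal{V}$-functor are exactly what the substitution rule of \cref{fig:theo_rules} delivers. The monoidal-closed structure ($\typeI$, $\otimes$, $\multimap$, associator, unitors, symmetry, evaluation) is inherited from the classical syntactic category of \cref{theo:sound_compl}, with all structural maps given by the obvious terms, and the compatibility rules for $f$, $\otimes_i$, $\otimes_e$, $\typeI_e$, $\multimap_i$ and $\multimap_e$ in \cref{fig:theo_rules} are precisely what is needed to see that $\otimes$ is a $\VCat$-functor and that $(-\otimes X)\dashv(X\multimap -)$ is a $\VCat$-adjunction, so that \cref{defn:enr_aut} is met; varietality makes the hom-objects small.

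Interpreting each ground type as itself and each $f : \typeA_1,\dots,\typeA_n \to \typeA$ as the class of $\prog{pm}\ x\ \prog{to}\ x_1 \otimes \dots \otimes x_n.\ f(x_1,\dots,x_n)$, a routine induction over derivations (as in \cref{theo:sound_compl}) gives $\sem{x : \typeA \vljud v : \typeB}_{\Syn(\mathscr{T})} = [v]$, and more generally $\sem{\Gamma \vljud v : \typeA}$ is the class of the corresponding context-patterned term. Since each axiom of $Ax$ is a theorem it is satisfied, so $\Syn(\mathscr{T})$ is a model, which is (ii). For (iii), by construction $a([v],[w]) = \bigvee\{\, r \mid v =_r w \text{ is a theorem}\,\}$; if $v =_q w$ is a theorem then $q$ lies in this set and $q \leq a([v],[w])$. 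Conversely, if $q \leq a([v],[w])$ then for every $s \ll q$ the continuity of $\mathcal{V}$ yields a finite $A \subseteq \{\, r \mid v =_r w\,\}$ with $s \leq \bigvee A$; rule \textbf{(join)} gives $v =_{\bigvee A} w$ and rule \textbf{(weak)} gives $v =_s w$, and since this holds for all $s \ll q$, rule \textbf{(arch)} yields $v =_q w$. Passing between a general context $\Gamma$ and the single-variable context $z : \typeA_1 \otimes \dots \otimes \typeA_n$ via tensor pattern matching and the substitution rule, (iii) together with the model $\Syn(\mathscr{T})$ from (ii) gives completeness: a $\mathcal{V}$-equation valid in all models is valid in $\Syn(\mathscr{T})$, hence a theorem. (The symmetric variant is identical, using $\VCatSS$ in place of $\VCatSe$ and noting that closure of the theorems under symmetry makes the hom-$\mathcal{V}$-categories of $\Syn(\mathscr{T})$ symmetric.)

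The laborious part is clause (i): verifying in full that $\Syn(\mathscr{T})$ is a bona fide $\VCatSe$-enriched autonomous category — that the hom-$\mathcal{V}$-categories are genuinely separated, and that composition, the tensor bifunctor and currying are all $\mathcal{V}$-functorial — which is where every compatibility rule of \cref{fig:theo_rules} is consumed. The more delicate, conceptual point lies inside (iii): completeness genuinely relies on rule \textbf{(arch)} together with the continuity of $\mathcal{V}$ and the basis $B$, since without them the supremum defining $a([v],[w])$ need not be attained and $q \leq a([v],[w])$ would not force provability of $v =_q w$.
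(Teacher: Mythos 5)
Your proposal is correct and follows essentially the same route as the paper's proof: soundness by induction on $Th(Ax)$ via \cref{theo:sound}, and completeness by building $\Syn(\mathscr{T})$ with hom-objects $(\closValBP{\typeA}{\typeB},a)_{/\sim}$, verifying $\VCatSe$-enriched autonomy through the compatibility rules, and recovering provability of $v =_q w$ from $q \leq a([v],[w])$ via the way-below/basis argument combined with \textbf{(join)}, \textbf{(weak)} and \textbf{(arch)}. The only cosmetic difference is that you spell out a few steps (well-definedness of composition, the context-to-single-variable reduction) that the paper leaves implicit.
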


\begin{proof}[Proof]
  Soundness follows by induction over the rules that define the class
  $Th(Ax)$ and by an appeal to~\cref{theo:sound}.  For completeness, we use a
  strategy similar to the proof of~\cref{theo:sound_compl}, and take
  advantage of the quotienting of a $\mathcal{V}$-category into a separated
  $\mathcal{V}$-category. Recall that we assume that the theory is
  \emph{varietal} and therefore can safely take
  $(\closValBP{\typeA}{\typeB},a)_{/ \sim}$  (defined above) to be a small
  $\mathcal{V}$-category. Note that the quotienting process identifies all
  terms $x : \typeA \vljud v : \typeB$ and $x : \typeA \vljud w : \typeB$
  such that $v =_\top w$ and $w =_\top v$. Such a relation contains the
  equations-in-context from Fig.~\ref{fig:eqs} and moreover it is
  straightforward to show that it is compatible with the term formation rules
  of linear $\lambda$-calculus (Fig.~\ref{fig:lang}). So, analogously
  to~\cref{theo:sound_compl} we obtain an autonomous category
  $\Syn(\mathscr{T})$ whose objects are the types of the language and whose
  hom-sets are the underlying sets of the $\mathcal{V}$-categories
  $(\closValBP{\typeA}{\typeB},a)_{/ \sim}$.

  Our next step is to show that the category $\Syn(\mathscr{T})$ has a
  $\VCatSe$-enriched autonomous structure.  We start by showing that the
  composition map
  $\Syn(\mathscr{T})(\typeA,\typeB) \otimes \Syn(\mathscr{T})(\typeB,\typeC) \to
  \Syn(\mathscr{T})(\typeA,\typeC)$ is a $\mathcal{V}$-functor:
  \begin{flalign*}
    & \,  a(([v'],[v]), ([w'],[w])) & \\
    & = a([v],[w]) \otimes a([v'],[w']) &  \\
    & = a(v,w) \otimes a(v',w') & \\
    & = \bigvee \{ q \mid v =_q w \} \otimes
      \bigvee \{ r \mid v' =_r w' \}
    &  \\
    & = \bigvee \{ q \otimes r \mid v =_q w, v' =_r w' \} & 
    \{\text{Defn. of quantale}\}\\
    & \leq \bigvee \{ q \mid v[v'/x]
      =_q w[w'/x] \}  &
      \left \{ A \subseteq B \Rightarrow \bigvee A \leq \bigvee B 
      \right \} \\
                      & = a( v[v'/x] , w[w'/x]) & \\
                      & = a( [v[v'/x]] , [w[w'/x]]) & \\
                      & = a([v] \comp [v'] ,[w] \comp [w']) &
  \end{flalign*}
  The fact that $\otimes : \Syn(\mathscr{T}) \times \Syn(\mathscr{T}) \to
  \Syn(\mathscr{T})$ is a $\VCat$-functor follows by an analogous reasoning.
  Next, we need to show that $(- \otimes X) \dashv (X \multimap -)$ is a
  $\VCat$-adjunction. It is straightforward to show that both functors are
  $\VCat$-functors, and from a similar reasoning it follows that the
  isomorphism $\Syn(\mathscr{T})(\typeB, \typeA \multimap \typeC) \cong 
  \Syn(\mathscr{T})(\typeB
  \otimes \typeA, \typeC)$ is a $\mathcal{V}$-isomorphism.

  The final step is to show that if an equation
  $\Gamma \vljud v =_q w : \typeA$ with $q \in B$ is satisfied by
  $\Syn(\mathscr{T})$ then it is a theorem of the linear
  $\mathcal{V}\lambda$-theory. By assumption
  $a([v],[w]) = a(v,w) = \bigvee \{ r \mid v =_r w \} \geq q$. It follows
  from the definition of the way-below relation that for all $x \in B$
  with $x \ll q$ there exists a \emph{finite} set
  $A \subseteq \{ r \mid v =_r w \}$ such that $x \leq \bigvee A$. Then
  by an application of rule \textbf{(join)} in Fig.~\ref{fig:theo_rules}
  we obtain $v =_{\bigvee A} w$, and consequently rule \textbf{(weak)} in
  Fig.~\ref{fig:theo_rules} provides $v =_x w$ for all $x \ll
  q$. Finally, by an application of rule \textbf{(arch)} in
  Fig.~\ref{fig:theo_rules}  we deduce that
  $v =_q w$ is part of the theory.
\end{proof}

\cref{theo:sound_compl2} extends straightforwardly to symmetric linear $\V
\lambda$-theories and $\VCatSS$-autonomous categories. In particular, when
$\mathcal{V}$ is the Boolean quantale we obtain the classical soundness and
completeness result described in the previous section. Additionally, note that
by inspecting the last part of the proof of \cref{theo:sound_compl2} it is
clear that Rule~\textbf{(arch)} takes a crucial role. It is possible however to
drop the latter and prove the following variation of completeness.

\begin{thm}[Approximate completeness]
        Consider a varietal $\V \lambda$-theory $\mathscr{T}$. If a
        $\V$-equation $\Gamma \vljud v =_q w : \typeA$ holds in all models of
        the theory then for all approximations $r \ll q$ with $r \in B$ we have
        $\Gamma \vljud v =_r w : \typeA$ as a theorem. In particular if $q$ is
        compact (\ie $q \ll q$) we have $\Gamma \vljud v =_q w : \typeA$ as a
        theorem.
\end{thm}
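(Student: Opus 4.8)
The plan is to rerun the completeness half of the proof of
\cref{theo:sound_compl2} almost verbatim, stopping one step earlier: just before
the concluding appeal to rule \textbf{(arch)}. First I would note that nothing in
the construction of the syntactic category $\Syn(\mathscr{T})$, nor in the
verification that it carries a $\VCatSe$-enriched autonomous structure, makes any
use of \textbf{(arch)}; those arguments rely only on \textbf{(refl)},
\textbf{(trans)}, \textbf{(weak)}, \textbf{(join)}, the compatibility rules of
Fig.~\ref{fig:theo_rules}, and the defining laws of a quantale. Consequently,
even after removing \textbf{(arch)} from the deductive system,
$(\closValBP{\typeA}{\typeB},a)_{/\sim}$ with $a(v,w) = \bigvee\{ s \in B \mid v
=_s w \text{ is a theorem}\}$ is still a well-defined (small, by varietality)
$\mathcal{V}$-category, $\Syn(\mathscr{T})$ is still a $\VCatSe$-enriched
autonomous category, and its canonical interpretation still satisfies every axiom
of $\mathscr{T}$; so $\Syn(\mathscr{T})$ remains a model.

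Next I would instantiate the hypothesis at the model $\Syn(\mathscr{T})$. Using,
exactly as in the last paragraph of the proof of \cref{theo:sound_compl2}, that
satisfaction of $\Gamma \vljud v =_q w : \typeA$ in $\Syn(\mathscr{T})$ amounts
to $q \leq a(v,w) = \bigvee\{ s \in B \mid v =_s w \text{ is a theorem}\}$, we
obtain $q \leq \bigvee\{ s \in B \mid v =_s w\}$. Now fix $r \in B$ with $r \ll
q$. Since $r \ll q \leq \bigvee\{ s \in B \mid v =_s w\}$, the definition of the
way-below relation supplies a \emph{finite} set $A \subseteq \{ s \in B \mid v
=_s w\}$ with $r \leq \bigvee A$; as $B$ is closed under finite joins, $\bigvee A
\in B$, so $v =_{\bigvee A} w$ is a legitimate $\mathcal{V}$-equation and a
theorem by rule \textbf{(join)}. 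Rule \textbf{(weak)} applied to $r \leq \bigvee
A$ then gives $\Gamma \vljud v =_r w : \typeA$ as a theorem. For the final
sentence, if $q$ is compact then $q \ll q$, and taking $r = q$ in the argument
just given yields $\Gamma \vljud v =_q w : \typeA$ as a theorem.

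I do not expect a genuine obstacle here: the only delicate bookkeeping is to
check that every label that appears — the members of $A$, the join $\bigvee A$,
and $r$ itself — actually lies in the basis $B$, which is exactly why $B$ was
assumed closed under finite joins. The conceptual upshot is that \textbf{(arch)}
is the \emph{only} inference in the completeness proof that promotes ``$v =_r w$
for every $r \ll q$'' to ``$v =_q w$''; dropping it costs precisely that step and
nothing more.
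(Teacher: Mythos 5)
Your proposal is correct and follows exactly the route the paper intends: the paper gives no separate proof but states just before the theorem that one reruns the completeness argument of \cref{theo:sound_compl2} and observes that rule \textbf{(arch)} is only needed for the very last promotion from ``$v=_r w$ for all $r\ll q$'' to ``$v=_q w$'', which is precisely where you stop. Your added bookkeeping (that the construction of $\Syn(\mathscr{T})$ and its enrichment never invoke \textbf{(arch)}, and that $\bigvee A\in B$ by closure of the basis under finite joins) is accurate and, if anything, slightly more careful than the paper's own presentation.
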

\subsection{Equivalence theorem between linear $\mathcal{V}\lambda$-theories and
$\VCatSe$-autonomous categories}
\label{sec:equiv}
Analogously to Section~\ref{sec:eq} (where we address $\lambda$-calculus with
just classical equations), in this subsection we will present a category of
$\VCatSe$-autonomous categories, a category of linear $\V \lambda$-theories,
and then prove the existence of an equivalence between both categories.  For
reasons analogous to those presented in Section~\ref{sec:eq}, we start by
providing a bijective correspondence up-to isomorphism between models of $\V
\lambda$-theories $\mathscr{T}$ and $\VCatSe$-autonomous functors
$\Syn(\mathscr{T}) \to \catC$. The following proposition will help us establish
that.

\begin{prop}
        \label{prop:modelV}
        Consider a $\VCat$-autonomous functor $\funF : \catC \to \catD$ and a model
        $M$ on $\catC$ of a $\V \lambda$-theory $\mathscr{T}$. Then
        $\funF_\ast M$ is a model of $\mathscr{T}$ on $\catD$.
\end{prop}

\begin{proof}
        Recall from the proof of Proposition~\ref{prop:model} the equation,
        \[
                \sem{\Gamma \vljud v : \typeA}_{\funF_\ast M} = 
                h^{-1}_\typeA \comp 
                \funF \sem{\Gamma \vljud v : \typeA}_{M} \comp \mu[\Gamma] 
                \comp h[\Gamma]
        \]
        which holds for all judgements $\Gamma \vljud v : \typeA$. We use it to
        reason in the following manner:
\begin{flalign*}
        & \, v =_q w  &  \\
        & \Rightarrow a(\sem{v}_M, \sem{w}_M) \geq q 
        & \text{\{$M$ is a model of } \mathscr{T} \} \\
        & \Rightarrow a(\funF \sem{v}_M, \funF \sem{w}_M) \geq q
        & \text{\{$\funF$ is $\VCat$-enriched\}} \\
        & \Rightarrow a(h^{-1} \comp \funF \sem{v}_M, h^{-1} \comp 
                \funF \sem{w}_M) \geq q
        & \text{\{$\catC$ is $\VCat$-enriched\}} \\
        & \Rightarrow a(h^{-1} \comp \funF \sem{v}_M \comp \mu[\Gamma] 
                \comp h[\Gamma], h^{-1} \comp 
                \funF \sem{w}_M \comp \mu[\Gamma] \comp h[\Gamma] ) \geq q
        & \text{\{$\catC$ is $\VCat$-enriched\}} \\
        & \Rightarrow a(\sem{v}_{\funF_\ast M}, \sem{w}_{\funF_\ast M}) 
        \geq q &  \qedhere
\end{flalign*}

\end{proof}

\begin{thm} \label{th:strV}
        Let $\mathscr{T}$ be a $\V\lambda$-theory. Every $\VCatSe$-autonomous
        functor $\funF : \Syn(\mathscr{T}) \to \catC$ induces a model
        ${\funF_\ast \Syn(\mathscr{T})}$ of $\mathscr{T}$ and every model $M$
        of $\mathscr{T}$ induces a $\VCatSe$-strict autonomous functor $([v]
        \mapsto \sem{v}_M) : \Syn(\mathscr{T}) \to \catC$. Furthermore these
        constructions are inverse to each other up-to isomorphism, in the sense
        that, 
        \begin{align*}
                ([v] \mapsto \sem{v}_{\funF_\ast \Syn(\mathscr{T})}) 
                \cong \funF \hspace{1cm} 
                \text{ and } \hspace{1cm}
                ([v] \mapsto \sem{v}_M)_\ast \Syn(\mathscr{T}) = M
        \end{align*}
\end{thm}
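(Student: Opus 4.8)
The plan is to mirror the proof of~\cref{th:str} step by step, carrying the additional $\VCat$-enrichment data along at each stage and invoking~\cref{prop:modelV} wherever the earlier argument used~\cref{prop:model}. Nothing genuinely new is needed beyond three observations: that~\cref{prop:modelV} already gives the functors-to-models direction, that the model-induced functor is automatically $\VCat$-enriched by soundness, and that separation of the hom-$\V$-categories makes the model-induced assignment well defined.

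For the direction from functors to models, let $\funF:\Syn(\mathscr{T})\to\catC$ be a $\VCatSe$-autonomous functor. By the construction in the proof of~\cref{theo:sound_compl2} the category $\Syn(\mathscr{T})$ is $\VCatSe$-autonomous, and its canonical interpretation -- which is the identity on ground types and sends an operation symbol $f$ to $[\prog{pm}\ x\ \prog{to}\ x_1\otimes\dots\otimes x_n.\ f(x_1,\dots,x_n)]$ -- is a model of $\mathscr{T}$ on $\Syn(\mathscr{T})$ itself (this is precisely the generic/term model underlying completeness). Applying~\cref{prop:modelV} with $M=\Syn(\mathscr{T})$ then yields that $\funF_\ast\Syn(\mathscr{T})$ is a model of $\mathscr{T}$ on $\catC$.

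For the direction from models to functors, let $M$ be a model on a $\VCatSe$-autonomous $\catC$ and consider the assignment $[v]\mapsto\sem{v}_M$. Well-definedness is the place where separation enters: if $[v]=[w]$ in $\Syn(\mathscr{T})$ then $v=_\top w$ and $w=_\top v$ are theorems, so by~\cref{theo:sound} and integrality ($k=\top$) we get $k\leq a(\sem{v}_M,\sem{w}_M)$ and $k\leq a(\sem{w}_M,\sem{v}_M)$; anti-symmetry of the natural order of the separated $\V$-category $\catC(\sem{\typeA}_M,\sem{\typeB}_M)$ then forces $\sem{v}_M=\sem{w}_M$. Functoriality and the fact that the resulting functor is strict autonomous are verified exactly as in~\cref{th:str}. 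The only new point is $\VCat$-enrichment: for terms $x:\typeA\vljud v,w:\typeB$ one must check $a([v],[w])\leq a(\sem{v}_M,\sem{w}_M)$, but $a([v],[w])=\bigvee\{q\mid v=_q w\text{ is a theorem}\}$ and each such $q$ already satisfies $q\leq a(\sem{v}_M,\sem{w}_M)$ by~\cref{theo:sound} since $M$ is a model, so the join lies below $a(\sem{v}_M,\sem{w}_M)$; as the functor is moreover strict, it is $\VCatSe$-strict autonomous.

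Finally, the two constructions are mutually inverse up to isomorphism. The on-the-nose equality $([v]\mapsto\sem{v}_M)_\ast\Syn(\mathscr{T})=M$ is checked on ground types and operation symbols by the very computation used in~\cref{th:str} (exploiting strictness so that the coherence isomorphisms $h$ collapse to identities). For the natural isomorphism $([v]\mapsto\sem{v}_{\funF_\ast\Syn(\mathscr{T})})\cong\funF$, I would reuse the isomorphisms $h_\typeA:\sem{\typeA}_{\funF_\ast\Syn(\mathscr{T})}\to\funF\sem{\typeA}_{\Syn(\mathscr{T})}=\funF\typeA$ constructed in the proof of~\cref{prop:model}, whose naturality square is exactly the instance of the equation $\sem{v}_{\funF_\ast M}=h^{-1}\comp\funF\sem{v}_M\comp\mu[\Gamma]\comp h[\Gamma]$ with $|\Gamma|=1$ and $M=\Syn(\mathscr{T})$ proved there; and I would observe that a $\VCat$-natural transformation between $\VCat$-enriched functors amounts to an ordinary natural transformation with morphisms of the underlying category as components, since equality of $\V$-functors is equality of their underlying maps, so this isomorphism is automatically enriched and no further check is needed. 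I expect the main obstacle to be purely bookkeeping -- keeping every construction inside the $\VCatSe$-enriched world -- with the two substantive checks being the $\VCat$-enrichment of $[v]\mapsto\sem{v}_M$, which rests squarely on~\cref{theo:sound}, and the well-definedness step, which is the single place where separation of the hom-$\V$-categories is essential.
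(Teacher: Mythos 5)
Your proposal is correct and follows essentially the same route as the paper's proof: Proposition~\ref{prop:modelV} applied to the term model for the functors-to-models direction, separation of the hom-$\V$-categories for well-definedness of $[v]\mapsto\sem{v}_M$, the join argument $a([v],[w])=\bigvee\{q\mid v=_q w\}\leq b(\sem{v}_M,\sem{w}_M)$ for enrichment of that functor, and deferral of the inverse-up-to-isomorphism claims to the argument of Theorem~\ref{th:str}. Your additional remark that the natural isomorphism is automatically enriched is a harmless elaboration of what the paper leaves implicit.
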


\begin{proof}
       Let  us first focus on the mapping that sends functors to models.
       Consider a $\VCat$-autonomous functor $\funF : \Syn(\mathscr{T}) \to
       \catC$.  Then observe that $\Syn(\mathscr{T})$ corresponds to a model of
       $\mathscr{T}$ and thus by Proposition~\ref{prop:modelV} we conclude that
       $\funF_\ast \Syn(\mathscr{T})$ must be a model of $\mathscr{T}$. For the
       inverse direction,  consider a model of $\mathscr{T}$ over $\catC$. Let
       $a$ denote the underlying function of the hom-($\mathcal{V}$-categories)
       in $\Syn(\mathscr{T})$ and $b$ the underlying function of the
       hom-($\mathcal{V}$-categories) in $\catC$. Then note that if $[v] = [w]$
       in $\Syn(\mathscr{T})$ then, by completeness, the equations $v =_\top w$
       and $w =_\top v$ are theorems of $\mathscr{T}$, which means that
       $\sem{v}_M = \sem{w}_M$ by the definition of a model and
       \emph{separability}. This allows us to define a mapping $F :
       \Syn(\mathscr{T}) \to \catC$ that sends each type $\typeA$ to
       $\sem{\typeA}_M$ and each morphism $[v]$ to $\sem{v}_M$. The fact that
       this mapping is an autonomous functor follows from an analogous
       reasoning to the one used in the proof of~\cref{th:str}.  We now need to
       show that this functor is $\VCatSe$-enriched. Recall that $a([v],[w]) =
       \bigvee \{ q \mid v =_q w \}$ and observe that for every $v =_q w$ in
       the previous quantification we have $b(\sem{v}_M,\sem{w}_M) \geq q$ (by
       the definition of a model), which establishes, by the definition of a
       least upper bound, $a([v],[w]) = \bigvee \{ q \mid v =_q w \} \leq
       b(\sem{v}_M,\sem{w}_M)$.  Finally the proof for showing that,
       \begin{align*}
                ([v] \mapsto \sem{v}_{\funF_\ast \Syn(\mathscr{T})}) 
                \cong \funF \hspace{1cm} 
                \text{ and } \hspace{1cm}
                ([v] \mapsto \sem{v}_M)_\ast \Syn(\mathscr{T}) = M
        \end{align*}
        is completely analogous to that of~\cref{th:str}.
\end{proof}

Next, let $\VCatSe$-$\Aut$ be the category of $\VCatSe$-autonomous
categories and $\VCatSe$-autonomous functors. Consider then
$\VCatSe$-$\Aut_{/_{\cong}}$ whose objects are $\VCatSe$-autonomous categories
and morphisms are \emph{isomorphism classes} of $\VCatSe$-autonomous functors.
We now show that this category is equivalent to
the category $\V \lambda\text{-} \catfont{Th}$ whose objects are
linear $\V \lambda$-theories,
\begin{align} \label{equ}
        \xymatrix@C=40pt{
                \V \lambda\text{-} \catfont{Th}
                \ar@<5pt>[r]_(0.40){\simeq} & 
                \VCatSe\text{-}\Aut_{/_{\cong}}
                \ar@<5pt>[l]^(0.60){}
        }
\end{align}
Analogously to Section~\ref{sec:eq}, we set
$\V\lambda\text{-}\catfont{Th}(\mathscr{T}_1,\mathscr{T}_2) :=
\VCatSe\text{-}\Aut_{/_{\cong}}(\Syn(\mathscr{T}_1),\Syn(\mathscr{T}_2))$.  In
words, this means that a morphism $\mathscr{T}_1 \to \mathscr{T}_2$ between
$\V\lambda$-theories $\mathscr{T}_1$ and $\mathscr{T}_2$ is exactly an
isomorphism class of $\VCatSe$-autonomous functors $\Syn(\mathscr{T}_1) \to
\Syn(\mathscr{T}_2)$, which by~\cref{th:strV} are in bijective correspondence
(up-to isomorphism) to models of $\mathscr{T}_1$ on the category
$\Syn(\mathscr{T}_2)$. Note that our remarks in Section~\ref{sec:eq} about
Morita equivalence hold here as well, with the exception that now an
equivalence of theories $\mathscr{T}_1$ and $\mathscr{T}_2$ translates into a
$\VCatSe$-equivalence of categories $\Syn(\mathscr{T}_1)$ and
$\Syn(\mathscr{T}_2)$ instead of just an ordinary equivalence.

The next step is to set down the necessary constructions for describing the
functor $(\VCatSe)$-$\Aut_{/_{\cong}}\to \V\lambda\text{-} \catfont{Th}$ that
is part of the equivalence~\eqref{equ}. 

\begin{defi}[Internal language]
Consider a $\VCatSe$-enriched autonomous category $\catC$. It induces a linear
$\mathcal{V}\lambda$-theory $\Lang(\catC)$ whose ground types and operations
symbols are defined as in the case of linear $\lambda$-theories (recall
Definition~\ref{def:lng}). The axioms of $\Lang(\catC)$ are all the
$\mathcal{V}$-equations-in-context that are satisfied by the obvious
interpretation on $\catC$ plus the $\V$-equations that mark $i_\typeA : \typeA
\to i(\typeA)$ as an isomorphism in the theory $\Lang(\catC)$. 
\end{defi}
In conjunction with the proof of~\cref{theo:sound_compl2}, a consequence
of the following theorem is that $\Syn(\Lang(\catC))$ is a $\VCatSe$-enriched
category. We will need this to properly formulate the claim that $\Lang(\catC)$
completely describes $\catC$ from a syntactical perspective, like we did
in Section~\ref{sec:eq} for classical equations.

\begin{thm}\label{theo:vari}
  The linear $\mathcal{V}\lambda$-theory $\Lang(\catC)$ is varietal.
\end{thm}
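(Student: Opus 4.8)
The plan is to exhibit, for every pair of types $\typeA$ and $\typeB$, an injection of the (a priori large) quotient $\mathcal{V}$-category $(\closValBP{\typeA}{\typeB},a)_{/\sim}$ into a hom-set of $\catC$; since $\catC$ is locally small, this immediately makes the quotient a small $\mathcal{V}$-category, which is precisely what being varietal requires. The tool is the \emph{obvious interpretation} $\sem{-}$ of $\Lang(\catC)$ on $\catC$: it interprets a ground type (an object of $\catC$) and an operation symbol (a morphism of $\catC$) as themselves, and interprets each $i_\typeA$ so that the isomorphism axioms attached to it hold --- this is possible, and is spelled out in the proof of~\cref{th:int}. By the very definition of $\Lang(\catC)$ this interpretation satisfies every axiom, so it is a model of $\Lang(\catC)$ on the $\VCatSe$-enriched autonomous category $\catC$; hence, by~\cref{theo:sound}, it satisfies every theorem of $\Lang(\catC)$.

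Fix $\typeA,\typeB$ and, writing $\sem{v}$ for $\sem{x:\typeA \vljud v : \typeB}$, define a map $\Phi \colon (\closValBP{\typeA}{\typeB},a)_{/\sim} \to \catC(\sem{\typeA},\sem{\typeB})$ on representatives by $\Phi[v] = \sem{v}$; note that $\sem{\typeA}$ and $\sem{\typeB}$ are genuine objects of $\catC$ because $\catC$ is autonomous and the ground types of $\Lang(\catC)$ are its objects. The map $\Phi$ is well defined: if $[v] = [w]$ then, by construction of the quotient, $v =_\top w$ and $w =_\top v$ are theorems of $\Lang(\catC)$; evaluating the model $\sem{-}$ on $\catC$ yields $\top \leq a(\sem{v},\sem{w})$ and $\top \leq a(\sem{w},\sem{v})$, where $a$ is the underlying function of the hom-$\mathcal{V}$-category $\catC(\sem{\typeA},\sem{\typeB})$; since $\catC$ is $\VCatSe$-enriched this hom-object is separated, so $\sem{v} \leq \sem{w}$ and $\sem{w} \leq \sem{v}$ in its natural order force $\sem{v} = \sem{w}$.

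Injectivity of $\Phi$ is the converse implication. Suppose $\sem{v} = \sem{w}$. The $\mathcal{V}$-category axiom $k \leq a(x,x)$ together with integrality ($k = \top$) gives $\top \leq a(\sem{v},\sem{v}) = a(\sem{v},\sem{w})$, and symmetrically $\top \leq a(\sem{w},\sem{v})$; hence $v =_\top w$ and $w =_\top v$ are $\mathcal{V}$-equations satisfied by the obvious interpretation on $\catC$, so they are axioms --- and therefore theorems --- of $\Lang(\catC)$, whence $v \sim w$ and $[v] = [w]$. Thus $\Phi$ embeds $(\closValBP{\typeA}{\typeB},a)_{/\sim}$ into the set $\catC(\sem{\typeA},\sem{\typeB})$, so this quotient is a small $\mathcal{V}$-category. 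As $\typeA$ and $\typeB$ were arbitrary, $\Lang(\catC)$ is varietal.

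The point deserving care --- and the reason the argument is not circular --- is that it uses only~\cref{theo:sound} (soundness of the deduction rules against the concrete model on $\catC$, which is unconditional) and never the completeness direction of~\cref{theo:sound_compl2}: the latter is stated for varietal theories, so invoking it for $\Lang(\catC)$ before varietality is established would be illegitimate. The only remaining routine check is that the obvious interpretation really does satisfy the isomorphism axioms attached to the $i_\typeA$, which is exactly what is verified in the proof of~\cref{th:int}.
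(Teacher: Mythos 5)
Your proof is correct, but it takes a genuinely different route from the paper's. The paper first passes through the \emph{classical} internal language $\Lang^{\lambda}(\catC)$: by \cref{th:int} its syntactic category is equivalent to $\catC$, hence locally small, and one then exhibits a \emph{surjection} $\Syn(\Lang^{\lambda}(\catC))(\typeA,\typeB)\twoheadrightarrow\Syn(\Lang(\catC))(\typeA,\typeB)$, $[v]\mapsto[v]$ (well-defined because $v=w$ in $\Lang^{\lambda}(\catC)$ forces $\sem{v}=\sem{w}$, hence $v=_\top w$ and $w=_\top v$ are axioms of $\Lang(\catC)$), and concludes by replacement. You instead \emph{inject} $(\closValBP{\typeA}{\typeB},a)_{/\sim}$ directly into the hom-set $\catC(\sem{\typeA},\sem{\typeB})$ via the interpretation, bypassing $\Lang^{\lambda}(\catC)$ entirely. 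Both arguments turn on the same crucial observation --- that every $\V$-equation validated by the obvious interpretation is by definition an axiom of $\Lang(\catC)$ --- but you use it for injectivity while the paper uses it for well-definedness of the surjection. Your version proves slightly more (the hom-classes of $\Syn(\Lang(\catC))$ embed into those of $\catC$, which is in effect a first step toward \cref{theo:intV}), at the cost of also needing the converse direction, i.e.\ soundness plus separation of the hom-objects of $\catC$ to see that $\Phi$ descends to the quotient; note that for the bare size bound this direction is dispensable, since injectivity alone shows each $\sim$-class is a union of fibres of $v\mapsto\sem{v}$. Two points you gloss over, both consistent with the paper's own conventions: passing from $\bigvee\{q\mid v=_q w\}=\top$ to ``$v=_\top w$ is a theorem'' silently uses the \textbf{(join)}/\textbf{(weak)}/\textbf{(arch)} argument from the proof of \cref{theo:sound_compl2}, and treating a class quotient as small via an injection into a set carries the same foundational caveat as the paper's appeal to replacement.
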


\begin{proof}
Let us denote by $\Lang^\lambda(\catC)$ the \emph{linear $\lambda$-theory}
generated from $\catC$. Then according to~\cref{th:int}, the category
$\Syn(\Lang^\lambda(\catC))$ (\ie the syntactic category generated from
$\Lang^\lambda(\catC)$) is locally small whenever $\catC$ is locally small.
Consider now two types $\typeA$ and $\typeB$.  We will prove our claim by
taking advantage of the axiom of replacement in ZF set-theory, specifically by
presenting a \emph{surjective} map, \begin{flalign*}
\Syn(\Lang^\lambda(\catC))(\typeA,\typeB) \twoheadrightarrow
\Syn(\Lang(\catC))(\typeA,\typeB) \end{flalign*} The crucial observation is
that if $v = w$ in $\Lang^\lambda(\catC)$ then $v =_\top w$ and $w =_\top v$ in
$\Lang(\catC)$. This is obtained by the definition of a model, the definition
of a $\mathcal{V}$-category, and the definition of $\Lang(\catC)$. This
observation allows to establish the surjective map that sends $[v]$ to $[v]$,
\ie it sends the equivalence class of $v$ as a $\lambda$-term in
$\Lang^\lambda(\catC)$ into the equivalence class of $v$ as a $\lambda$-term in
$\Lang(\catC)$.
\end{proof}

Finally we state,

\begin{thm}\label{theo:intV}
        Consider a $\VCatSe$-autonomous category $\catC$. There exists a
        $\VCatSe$-equivalence $\Syn(\Lang(\catC)) \simeq \catC$ and both
        functors witnessing this equivalence are autonomous. The functor going
        from the left to right direction is additionally strict. 
\end{thm}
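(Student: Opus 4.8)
The plan is to mirror the classical proof of \cref{th:int}, upgrading every step from ``autonomous functor'' to ``$\VCatSe$-autonomous functor'' and checking that the $\mathcal{V}$-enrichment data is respected throughout. First I would observe that \cref{theo:vari} guarantees $\Lang(\catC)$ is varietal, so that $\Syn(\Lang(\catC))$ is a well-defined $\VCatSe$-autonomous category by (the proof of) \cref{theo:sound_compl2}. Next, by construction the obvious interpretation of $\Lang(\catC)$ on $\catC$ — acting as the identity on ground types and on operation symbols — is a model of $\Lang(\catC)$: this is forced by the definition of $\Lang(\catC)$, whose axioms are exactly the $\mathcal{V}$-equations-in-context satisfied by that interpretation together with the isomorphism equations for $i_\typeA$. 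Applying \cref{th:strV} to this model yields a $\VCatSe$-strict autonomous functor $e'_\catC : \Syn(\Lang(\catC)) \to \catC$. For the opposite direction I would define $e_\catC : \catC \to \Syn(\Lang(\catC))$ to act as the identity on objects and send a $\catC$-morphism $f : X \to Y$ to the equivalence class $[f(x)]$ where $x : X \vljud f(x) : Y$; this is the same map as in \cref{th:int}, and its well-definedness as an \emph{ordinary} functor is already established there.

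The genuinely new content is twofold. First, $e_\catC$ must be shown to be $\VCatSe$-enriched, i.e.\ for all $X, Y \in \catC$ the map $\catC(X,Y) \to \Syn(\Lang(\catC))(X,Y)$ is a $\mathcal{V}$-functor. Unwinding the definitions, if $b$ denotes the hom-$\mathcal{V}$-structure of $\catC$ and $a$ that of $\Syn(\Lang(\catC))$, I need $b(f,g) \leq a([f(x)],[g(x)])$. But $a([f(x)],[g(x)]) = \bigvee\{q \mid f(x) =_q g(x) \text{ is a theorem}\}$, and the $\mathcal{V}$-equation $f(x) =_{b(f,g)} g(x)$ is satisfied by the canonical interpretation on $\catC$ (since that interpretation sends $f(x)$ to $f$ and $g(x)$ to $g$), hence is an axiom of $\Lang(\catC)$; thus $b(f,g)$ is one of the $q$'s in the join, giving the inequality. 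Second, the two functors must be shown to form a $\VCatSe$-equivalence and $e_\catC$ to be autonomous. The equivalence part reuses the isomorphisms $i_\typeA : \typeA \cong i(\typeA)$ from \cref{def:lng} exactly as in \cref{th:int}: $e'_\catC \circ e_\catC = \Id_\catC$ on the nose (both $\VCatSe$-strict), and $e_\catC \circ e'_\catC \cong \Id$ via the natural family $i_\typeA$, which is a $\mathcal{V}$-natural isomorphism because isomorphisms of $\mathcal{V}$-categories preserve the hom-values (a fact already used in the proof of \cref{theo:sound}). That $e_\catC$ is autonomous — symmetric strong monoidal with the transpose condition — is verified exactly as in the last three paragraphs of the proof of \cref{th:int}: each coherence diagram reduces to an equation between $\lambda$-terms that holds because it holds after interpretation on $\catC$ (here one appeals to the fact that the satisfied \emph{equations} are the $\top$-labelled $\mathcal{V}$-equations, which are among the axioms), and strictness of $e'_\catC$ collapses the unitor/associator/symmetry diagrams.

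The main obstacle I expect is bookkeeping rather than conceptual difficulty: one must be careful that the $\mathcal{V}$-equations defining the axioms of $\Lang(\catC)$ include \emph{all} $q$-labelled equations valid in $\catC$ (not merely the $\top$-labelled ones), so that the enrichment inequality $b(f,g) \leq a([f(x)],[g(x)])$ comes out with equality and $e_\catC$ is not merely $\mathcal{V}$-enriched but \emph{fully faithful} as a $\mathcal{V}$-functor — which is what makes $i_\typeA$ a $\mathcal{V}$-\emph{iso} and hence the equivalence a $\VCatSe$-equivalence. Once this is in place, separability of the hom-$\mathcal{V}$-categories of $\Syn(\Lang(\catC))$ ensures there is no ambiguity in the term-classes, and the proof closes by routine verification that all the diagrams of \cref{def:autf}, now read inside $\Syn(\Lang(\catC))$, commute because their images under the interpretation on $\catC$ do.
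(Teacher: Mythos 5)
Your route is the same as the paper's: obtain $e'_\catC : \Syn(\Lang(\catC)) \to \catC$ from the canonical model via \cref{th:strV}, define $e_\catC$ as the identity on objects and $f \mapsto [f(x)]$ on morphisms, check that $e_\catC$ is enriched, and defer the autonomy, strictness and equivalence verifications to the argument of \cref{th:int}. The one genuinely new computation, however --- the enrichment inequality $b(f,g) \leq a([f(x)],[g(x)])$ --- is exactly where your argument has a gap. You assert that $f(x) =_{b(f,g)} g(x)$ is satisfied by the canonical interpretation and is therefore an axiom of $\Lang(\catC)$, so that $b(f,g)$ occurs among the $q$'s in the join defining $a([f(x)],[g(x)])$. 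But a $\mathcal{V}$-equation-in-context is by definition labelled by an element of the \emph{basis} $B$ of $\mathcal{V}$, whereas $b(f,g)$ is an arbitrary element of the quantale that need not lie in $B$; so $f(x) =_{b(f,g)} g(x)$ is in general not a well-formed $\mathcal{V}$-equation and cannot be taken as an axiom.

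The paper repairs precisely this point by approximation through the way-below relation: for every $q \in B$ with $q \ll b(f,g)$ one has $q \leq b(f,g)$, hence $f(x) =_q g(x)$ is a well-formed axiom of $\Lang(\catC)$; continuity of the underlying lattice together with the basis property then yield
$b(f,g) = \bigvee \{ q \in B \mid q \ll b(f,g) \} \leq \bigvee \{ q \in B \mid f(x) =_q g(x) \} = a([f(x)],[g(x)])$.
With that substitution your proof goes through. Note also that the ``full faithfulness'' issue you raise at the end resolves itself: since $e'_\catC \comp e_\catC$ is the identity and both functors are $\mathcal{V}$-enriched, one gets $b(f,g) \leq a([f(x)],[g(x)]) \leq b(f,g)$, so the inequality is automatically an equality and no strengthening of the axioms of $\Lang(\catC)$ is needed.
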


\begin{proof}
  Let $a$ denote the underlying function of the hom-($\mathcal{V}$-categories)
  in $\Syn(\Lang(\catC))$ and $b$ the underlying function of the
  hom-($\mathcal{V}$-categories) in $\catC$.  We have, by construction, a model
  of $\Lang(\catC)$ on $\catC$ which acts as the identity in the interpretation
  of ground types and operation symbols. We can then appeal 
  to~\cref{th:strV} to establish a $\VCatSe$-functor $\Syn(\Lang(\catC))
  \to \catC$. Next, the functor working on the inverse direction behaves as the
  identity on objects and sends a morphism $f$ into $[f(x)]$. Let us show that
  it is $\VCatSe$-enriched. First, observe that if $q \ll b(f,g)$ in $\catC$
  and $q \in B$ then $f(x) =_q g(x)$ is a theorem of $\Lang(\catC)$, due to the
  fact that $\ll$ entails $\leq$ and by the definition of $\Lang(\catC)$. Using
  the definition of a basis, we thus obtain $b(f,g) = \bigvee \{ q \in B \mid
  \> q \ll b(f,g) \} \leq \bigvee \{ q \in B \mid f(x) =_q g(x) \} =
  a([f(x)],[g(x)])$.
  The fact that both functors are autonomous and that the one from left to right
  direction is additionally strict is proved as in~\cref{th:int}. The
  same applies to showing that the functors define an equivalence of categories.
\end{proof}

\begin{thm}\label{theo:fequiv}
        There exists an equivalence,
        \begin{align}
        \xymatrix@C=40pt{
                \V\lambda\text{-} \catfont{Th}
                \ar@<5pt>[r]^(0.40){\Syn}_(0.4){\simeq} & 
                \VCatSe\text{-}\Aut_{/_{\cong}}
                \ar@<5pt>[l]^(0.60){\Lang}
        }
        \end{align}
\end{thm}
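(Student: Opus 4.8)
The plan is to mirror the proof of \cref{th:eq} step by step, systematically replacing each unenriched ingredient by its $\VCatSe$-enriched counterpart established above. First I would fix the functor $\Syn : \V\lambda\text{-}\catfont{Th} \to \VCatSe\text{-}\Aut_{/_{\cong}}$: it sends a $\V\lambda$-theory $\mathscr{T}$ to its syntactic category $\Syn(\mathscr{T})$, which by the proof of \cref{theo:sound_compl2} carries a $\VCatSe$-enriched autonomous structure and is locally small because $\mathscr{T}$ is varietal; on morphisms it acts as the identity, which is legitimate since by construction $\V\lambda\text{-}\catfont{Th}(\mathscr{T}_1,\mathscr{T}_2) = \VCatSe\text{-}\Aut_{/_{\cong}}(\Syn(\mathscr{T}_1),\Syn(\mathscr{T}_2))$. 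In particular $\Syn$ is full and faithful by fiat, so to get an equivalence it suffices to produce a quasi-inverse $\Lang$ together with the two families of natural isomorphisms.

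For $\Lang$ I would use the internal-language construction. Given a $\VCatSe$-autonomous category $\catC$, \cref{theo:vari} ensures that $\Lang(\catC)$ is varietal, so $\Syn(\Lang(\catC))$ is a bona fide object of $\VCatSe\text{-}\Aut_{/_{\cong}}$, and \cref{theo:intV} provides mutually pseudo-inverse $\VCatSe$-autonomous functors $e_\catC : \catC \to \Syn(\Lang(\catC))$ and $e'_\catC : \Syn(\Lang(\catC)) \to \catC$. I would then define $\Lang : \VCatSe\text{-}\Aut_{/_{\cong}} \to \V\lambda\text{-}\catfont{Th}$ to send an isomorphism class $[\funF] : \catC \to \catD$ to $[e_\catD \comp \funF \comp e'_\catC]$. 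Well-definedness and the functorial laws are checked exactly as in \cref{th:eq}: if $\funF \cong \funF'$ then $e_\catD \comp \funF \comp e'_\catC \cong e_\catD \comp \funF' \comp e'_\catC$; one has $e_\catC \comp \Id \comp e'_\catC \cong \Id$, whence $[e_\catC \comp \Id \comp e'_\catC] = [\Id]$; and for composable $\funF : \catC \to \catA$ and $\funG : \catA \to \catD$ one has $e_\catD \comp \funG \comp \funF \comp e'_\catC \cong e_\catD \comp \funG \comp e'_\catA \comp e_\catA \comp \funF \comp e'_\catC$. The only extra point relative to \cref{th:eq} is that all functors occurring here are $\VCatSe$-enriched, which is precisely what \cref{theo:intV} (together with \cref{th:strV}) supplies.

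It then remains to produce the two families of isomorphisms witnessing the equivalence. The isomorphism $\Syn(\Lang(\catC)) \cong \catC$ in $\VCatSe\text{-}\Aut_{/_{\cong}}$ is immediate from \cref{theo:intV}, since the $\VCatSe$-equivalence furnished there becomes an isomorphism once we pass to isomorphism classes of $\VCatSe$-autonomous functors. For $\Lang(\Syn(\mathscr{T})) \cong \mathscr{T}$ in $\V\lambda\text{-}\catfont{Th}$, I would instantiate \cref{theo:intV} at $\catC = \Syn(\mathscr{T})$ to get a $\VCatSe$-equivalence $\Syn(\Lang(\Syn(\mathscr{T}))) \simeq \Syn(\mathscr{T})$, and then invoke the enriched analogue of the Morita-equivalence remark from \cref{sec:eq} — namely that an isomorphism of $\V\lambda$-theories is exactly a $\VCatSe$-equivalence between their syntactic categories — to conclude $\Lang(\Syn(\mathscr{T})) \cong \mathscr{T}$. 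Naturality of both families in $\catC$ and in $\mathscr{T}$ follows from routine calculations, just as in the unenriched case.

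The main obstacle is not any single hard computation but the bookkeeping of enrichment: one must verify that every functor manufactured along the way ($e_\catC$, $e'_\catC$, the functors induced from models via \cref{th:strV}, and all their composites) is genuinely $\VCatSe$-enriched, and that the passage ``isomorphism of $\V\lambda$-theories $\Leftrightarrow$ $\VCatSe$-equivalence of syntactic categories'' is available. This is where \cref{theo:vari} (so that $\Syn(\Lang(-))$ makes sense as a $\VCatSe$-category) and the $\VCatSe$-strengthenings of \cref{th:strV} and \cref{theo:intV} do the real work; once they are in place, the argument is a formal repackaging of the proof of \cref{th:eq}.
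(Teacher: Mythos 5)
Your proposal is correct and follows essentially the same route as the paper: define $\Syn$ as the identity on morphisms (legitimate since the hom-sets of $\V\lambda\text{-}\catfont{Th}$ are defined to coincide with those of $\VCatSe\text{-}\Aut_{/_{\cong}}$), build $\Lang$ from the internal language via the enriched functors $e_\catC$, $e'_\catC$ of \cref{theo:intV}, and obtain the two natural isomorphisms from \cref{theo:intV} together with the Morita-equivalence remark. The paper's own proof is exactly this formal repackaging of \cref{th:eq}, with \cref{theo:vari}, \cref{th:strV}, and \cref{theo:intV} carrying the enrichment bookkeeping you identify.
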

\begin{proof}
        The functor $\Syn : \V \lambda\text{-}\catfont{Th} \to
        \VCatSe\text{-}\Aut_{/_{\cong}}$ sends a $\V\lambda$-theory to its
        syntactic category and acts as the identity on morphisms. For the
        inverse direction, recall that for every $\VCatSe$-autonomous
        category $\catC$, ~\cref{theo:intV} provides
        $(\VCatSe)$-autonomous functors $e_\catC : \catC \to
        \Syn(\Lang(\catC))$ and $e'_\catC : \Syn(\Lang(\catC)) \to \catC$. So
        then we define  $\Lang : \VCatSe\text{-}\Aut_{/_{\cong}}\to \V
        \lambda\text{-}\catfont{Th}$ as the functor that sends a
        $\VCatSe$-autonomous category $\catC$ to its theory $\Lang(\catC)$
        and that sends an isomorphism class $[\funF]$ of $(\VCatSe)$-autonomous
        functors of type $\catC \to \catD$ into $[ e_\catD \comp \funF \comp
        e'_\catC]$. This last functor is well-defined for the reasons already
        detailed in~\cref{th:int}. Similarly, the fact that both
        functors $\Syn$ and $\Lang$ form an equivalence follows a
        reasoning analogous to the one detailed in the proof of~\cref{th:eq}.
\end{proof}
\cref{theo:fequiv} and the preceding results extend straightforwardly to
symmetric linear $\V \lambda$-theories and $\VCatSS$-autonomous categories. As
before when $\mathcal{V}$ is the Boolean quantale we obtain the classical
equivalence theorem described in the previous section (\cref{th:eq}).

\section{Examples of linear $\V \lambda$-theories and their models}
\label{sec:examples}
\subsection{Real-time computation}

We now return to the example of wait calls and its  metric axioms~\eqref{ax}
sketched in Section~\ref{S:one}. Let us build a model on $\Met$ for this
theory.  Denote by $\Nats$ the metric space of natural numbers. Then fix a
metric space $A$, interpret the ground type $X$ as $\Nats \otimes A$ and the
operation symbol $\prog{wait_n} : X \to X$ as the non-expansive map, $
\sem{\prog{wait_n}} : \Nats \otimes A \to \Nats \otimes A$ , $(i,a) \mapsto (i
+ n, a)$.  Since we already know that $\Met$ is a $\Met$-autonomous category
(recall Definition~\ref{defn:enr_aut} and Example~\ref{ex:pos_met}) we only
need to show that the axioms in~\eqref{ax} are satisfied by the proposed
interpretation. This can be shown via a few routine calculations.  

As an illustration of this theory at work, let us use it to briefly explore
what would happen if one freely allowed multiple uses of the same variable in
building a judgement. Consider the \emph{non-linear} term $\lambda f, x.\
f(f(\dots(f \> x)))$ which we abbreviate to $\prog{seq^n}$. Intuitively, it
applies the operation given as argument $n$-times.  For example in
$\prog{seq^n}\ (\lambda x.\ \prog{wait_1}(x))\ z$ the operation $\lambda x.\
\prog{wait_1}(x)$ is applied $n$-times. If $\prog{seq^n}$ was allowed in our
calculus then we would obtain problematic metric theorems, such as,
\[
\prog{seq^n}\ (\lambda x.\ \prog{wait_1}(x))\ z =_1
\prog{seq^n}\ (\lambda x.\ \prog{wait_2}(x))\ z
\]
The theorem makes no sense because the left-hand side unfolds to an execution
time of $n$ seconds and the right-hand side unfolds to an execution time of
$2n$ seconds. In the linear setting, we can rewrite $\prog{seq^n}$ into the
more general term $\lambda f_1,\dots,f_n, x.\ f_1(f_2(\dots(f_n \> x)))$ and
then via our $\V$-deductive system obtain,
\[
\prog{seq^n}\ (\lambda x.\ \prog{wait_1}(x))\ \dots\
(\lambda x.\ \prog{wait_1}(x))\ z =_{n \cdot 1}
\prog{seq^n}\ (\lambda x.\ \prog{wait_2}(x))\ \dots\ 
(\lambda x.\ \prog{wait_2}(x))\ z
\]
which is in line with the total execution times of $n$ and $2n$ seconds
mentioned above. This brief exploration tells that it is important to track
the use of resources in metric $\lambda$-theories.

  Now, it may be the case that is unnecessary to know the \emph{distance}
  between the execution time of two programs -- instead it suffices to know
  whether a program finishes its execution \emph{before} another one.  This
  leads us to linear $\mathcal{V}\lambda$-theories where $\mathcal{V}$ is the
  Boolean quantale. We call such theories \emph{linear ordered
  $\lambda$-theories}.  Recall again the language from the introduction with a single
  ground type $X$ and the signature of wait calls $\Sigma = \{ \prog{wait_n} :
  X \to X \mid n \in \Nats \}$. Then we adapt the metric axioms \eqref{ax} to
  the case of the Boolean quantale by considering instead:
  \begin{flalign*}
    \prog{wait_0}(x) = x \hspace{1cm} \prog{wait_n}(\prog{wait_m}(x)) =
    \prog{wait_{n + m }}(x) \hspace{1cm}
    \infer{\prog{wait_n}(x) \leq \prog{wait_m}(x)}{n \leq m}
  \end{flalign*}
  where a classical equation $v = w$ is shorthand for $v \leq w$ (\ie $v =_1
  w$) and $w \leq v$ (\ie $w =_1 v$).  In the resulting theory we can consider
  for instance (and omitting types for simplicity) the $\lambda$-term that
  defines the composition of two functions $\lambda f.\ \lambda g.\ g\ (f\ z)$,
  which we denote by $v$, and show that,
  \begin{flalign*} 
    v\ (\lambda x.\ \prog{wait_1}(x)) \leq v\
    (\lambda x.\ \prog{wait_1}(\prog{wait_1}(x)))
  \end{flalign*}
  This inequation between higher-order programs arises from the argument
  $\lambda x.\ \prog{wait_1}(\prog{wait_1}(x))$ being costlier than the
  argument $\lambda x.\ \prog{wait_1}(x)$ -- specifically the former invokes
  one more wait call ($\prog{wait_1}$) than the latter. Intuitively, the
  inequation entails that for every argument $g$ the execution time of
  computation $v\ (\lambda x.\ \prog{wait_1}(x))\ g$ will always be smaller
  than that of computation $v\ (\lambda x.\ \prog{wait_1}(\prog{wait_1}(x)))\
  g$ since it invokes one more wait call. In general the inequation reflects the fact
  that costlier programs fed as input will result in longer execution
  times when performing the corresponding computation.

  In order to build a model for the ordered theory of wait calls, let now
  $\Nats$ be the poset of natural numbers. Then fix a poset $A$ and define a
  model over $\Pos$ by sending $X$ into $\Nats \otimes A$ and $\prog{wait_n} :
  X \to X$ to the monotone map $ \sem{\prog{wait_n}} : \Nats \otimes A \to
  \Nats \otimes A$, $(i,a) \mapsto (i + n, a)$.  Since we already know that
  $\Pos$ is a $\Pos$-autonomous category (recall Definition~\ref{defn:enr_aut}
  and Example~\ref{ex:pos_met}) we only need to show that the ordered axioms
  are satisfied by the proposed interpretation.  But again, this can be shown
  via a few routine calculations.

\subsection{Probabilistic computation}
\label{sec:prob}
Let us now analyse an example of a metric $\lambda$-theory concerning
probabilistic computation. We start by considering ground types $\real,\real^+$
and $\unit$, and a signature of operations consisting of $\{ r : \typeI \to
\real \mid r \in \mathbb{Q} \} \cup \{ r^+ : \typeI \to \real^+ \mid r \in
\mathbb{Q}_{\geq 0}\} \cup \{ r^u : \typeI \to \unit \mid r \in [0,1] \cap
\mathbb{Q} \}$, an operation $+$ of type $\real,\real\to\real$, and
\emph{sampling} functions $\bern: \real,\real,\unit\to\real$ and
$\normal:\real,\real^+\to\real$. Whenever no ambiguities arise, we drop the
superscripts in $r^u$ and $r^+$.  Operationally, $\bern(x,y,p)$ generates a
sample from the Bernoulli distribution with parameter $p$ on the set $\{x,y\}$,
whilst $\normal(x,y)$ generates a normal deviate with mean $x$ and standard
deviation $y$.  We then postulate the metric axiom,
\begin{flalign}
\infer{\bern(x_1,x_2,p(\ast)) =_{\abs{p-q}} \bern(x_1,x_2,q(\ast))}
	{p, q\in [0,1] \cap \mathbb{Q}} \label{eq:bern}
\end{flalign}
Consider now the following $\lambda$-terms (where we abbreviate the constants
$0(\ast),1(\ast),p(\ast),q(\ast)$ to $0,1,p,q$, respectively),
\begin{flalign*}
& \mathtt{walk1}\defeq \lambda x:\real. \bern(0,x+\normal(0,1),p)
\\
& \mathtt{walk2}\defeq \lambda x:\real. \bern(0,x+\normal(0,1),q), \qquad p,q\in[0,1] \cap \mathbb{Q}.
\end{flalign*}
As the names suggest, these two terms of type $\real\multimap\real$ correspond
to random walks on $\Reals$. At each call, $\mathtt{walk1}$ (\resp
$\mathtt{walk2}$) performs a jump drawn randomly from a standard normal
distribution, or is forced to return to the origin with probability $p$ (\resp
$q$).  These are non-standard random walks whose semantics tend to be given by
complicated operators (details below), but the simple $\V$-equational system of
Fig.~\ref{fig:theo_rules} and the axiom \eqref{eq:bern} allow us to easily
derive $\mathtt{walk1} = _{\abs{p-q}}\mathtt{walk2}$ without having to compute
the semantics of these terms. In other words, the axiom \eqref{eq:bern} is
enough to tightly bound the distance between two non-trivial random walks
represented as higher-order terms in a probabilistic programming language.
Furthermore the tensor in $\lambda$-calculus allows to easily scale up this
reasoning to random walks on higher dimensions such as $\lambda x :
\prog{real \otimes real}.\ \prog{pm}\ x\ \prog{to}\ x_1 \otimes x_2.\
\mathtt{walk1}(x_1)\otimes\mathtt{walk2}(x_2) : (\real \otimes \real) \multimap
(\real \otimes \real)$ on $\Reals^2$.

We now interpret the resulting metric $\lambda$-theory in the category
$\Ban$ of Banach spaces and short operators, \ie the semantics of
\cite{DK20a,K81c} without the order structure needed to interpret
\texttt{while} loops.  This is the usual representation of Markov
chains/kernels as matrices/operators. Recall that every operator
$T : V \to U$ between Banach spaces $V$ and $U$ has a norm $\norm{T}$,
\[
        \norm{T} = \bigvee\ \{\ \norm{T v} \mid \norm{v} = 1 \}
\]
called \emph{operator norm}. Recall also that short operators satisfy the
inequation $\norm{T} \leq 1$. It is well known that $\Ban$ has an autonomous
structure~\cite{DK20a,K81c} where the tensor product is the projective tensor
$\hat \otimes_\pi$. So we will just focus on showing that $\Ban$ is a
$\Met$-enriched autonomous category (\ie an instance of
Definition~\ref{defn:enr_aut}). First, recall that a norm on a vector space
induces a metric~\cite{ryan2013introduction}, in particular we obtain a metric
$d$ on the hom-set $\Ban(V,U)$ which is defined by,
\[
        d(S,T)= \norm{S - T}
\]
for $S,T \in \Ban(V,U)$. For all operators $T,S$ between Banach spaces, it is
also well-known that if $S$ is a short operator then $\norm{S T} \leq
\norm{T}$, and if $T$ is a short operator then $\norm{S T} \leq \norm{S}$.
Moreover it is the case that $\norm{T \hat \otimes_\pi \id} \leq \norm{T}$ and
$\norm{\id \hat \otimes_\pi T} \leq \norm{T}$ (see \cite[\S
2.1]{ryan2013introduction}). From these results it follows that,

\begin{prop}
        \label{prop:ban}
        The category $\Ban$ is $\Met$-enriched and the bifunctor
        $\hat \otimes_\pi : \Ban \otimes \Ban \to \Ban$ is 
        $\Met$-enriched as well.
\end{prop}

\begin{proof}
        Let us first show that $\Ban$ is $\Met$-enriched. We deduce by
        unfolding the respective definitions that we
        need to show the following: for all short operators $T,T' : V \to U$ and $S,S' :
        U \to W$ the inequation $\norm{S - S'} + \norm{T - T'} \geq \norm{ST -
        S'T'}$ holds. So we reason in the following way:
        \begin{flalign*}
              &  \, \norm{S - S'} + \norm{T - T'} & \\
              & \geq \norm{(S - S')T} + \norm{S'(T-T')} 
              & \left \{\norm{(S- S')T} \leq \norm{S - S'} 
              \text{ and } \norm{S'(T-T')} \leq \norm{T - T'} \right \} \\
              & = \norm{ST - S'T} + \norm{S'T - S'T'} & \\
              & \geq \norm{ST - S'T + S'T - S'T'} & \text{\{Triangle inequality\}} \\
              & = \norm{ST - S'T'}
        \end{flalign*}
        Next, concerning $\hat \otimes_\pi$ we can also deduce by unfolding the
        respective definitions that we need to prove $\norm{S - S'} +
        \norm{T - T'} \geq \norm{T \hat \otimes_\pi S - T' \hat \otimes_\pi
        S'}$. For this case we calculate,
        \begin{flalign*}
              &  \, \norm{S - S'} + \norm{T - T'} & \\
              & \geq \norm{\id \hat \otimes_\pi (S - S')} 
              + \norm{(T-T') \hat \otimes_\pi \id } 
              & \Big \{ \norm{\id \hat \otimes_\pi (S- S')} \leq \norm{S - S'} 
               \text{ and }\dots \\
              & &  \norm{(T-T') \hat \otimes_\pi \id} 
                \leq \norm{T - T'} \Big \}  \\
              & = \norm{\id \hat \otimes_\pi S - \id \hat \otimes_\pi S'} 
              + \norm{T \hat \otimes_\pi \id - T' \hat \otimes_\pi \id} & \\
              & \geq \norm{(\id \hat \otimes_\pi S) \comp (T \hat \otimes_\pi \id)
                - (\id \hat \otimes_\pi S') \comp (T' \hat \otimes_\pi \id)} & 
              \text{\{$\Ban$ is $\Met$-enriched\}} \\
              & = \norm{T \hat \otimes_\pi S - T' \hat \otimes_\pi S'}
              & \qedhere
        \end{flalign*}
 
\end{proof}

We will next recur to the following general theorem for showing that $\Ban$ is
a $\Met$-enriched autonomous category.

\begin{thm}
        \label{thm:aut}
        Consider an autonomous category $\catC$ such that,
        \begin{enumerate}
                \item $\catC$ is $\VCat$-enriched; 

                \item the functor $\otimes : \catC \times \catC \to \catC$ is
                        $\VCat$-enriched;
                
                \item and for all $\catC$-objects $X$ the functor $(X \multimap
                        -) : \catC \to \catC$ is $\VCat$-enriched as well.  
        \end{enumerate}
        Then $\catC$ is a $\VCat$-enriched autonomous category.
\end{thm}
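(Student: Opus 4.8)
The plan is to read off from Definition~\ref{defn:enr_aut} what still has to be proved. Hypotheses~(1) and~(2) already say that $\catC$ is $\VCat$-enriched and that $\otimes\colon\catC\times\catC\to\catC$ is a $\VCat$-functor, and $\catC$ is autonomous by assumption; so the only remaining obligation is to show that the adjunction $(-\otimes X)\dashv(X\multimap -)$ is a $\VCat$-adjunction, that is, that the canonical currying bijection $\Lambda_{Y,Z}\colon\catC(Y\otimes X,Z)\to\catC(Y,X\multimap Z)$, $f\mapsto\overline f$, is a $\mathcal{V}$-isomorphism natural in $Y$ and $Z$. Ordinary naturality of $\Lambda$ (and of the precomposition maps occurring in its naturality squares) is already available, so I would reduce the theorem to the single claim that currying \emph{and} its inverse, uncurrying, are both non-expansive: a $\mathcal{V}$-functor with a $\mathcal{V}$-functor inverse is by definition a $\mathcal{V}$-isomorphism (and then automatically $a(f,g)=a(\overline f,\overline g)$, as in the proof of~\cref{theo:sound}).

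To prove the two non-expansiveness claims I would first isolate three elementary facts. From~(1), precomposition and postcomposition with a fixed $\catC$-morphism are non-expansive, since composition $\catC(-,-)\otimes\catC(-,-)\to\catC(-,-)$ is a $\mathcal{V}$-functor and $k\le a(\id,\id)$ with $k$ the $\otimes$-unit of $\mathcal{V}$; from~(2), the map $h\mapsto h\otimes\id_X$ is non-expansive (it is $\otimes$ applied to the pair $(h,\id_X)$, recalling that the hom-objects of $\catC\times\catC$ are tensors of hom-objects); from~(3), the map $f\mapsto(X\multimap f)$ is non-expansive. I would then decompose the two transposes via the triangle identities: uncurrying is $h\mapsto\app_Z\comp(h\otimes\id_X)$, and currying is $f\mapsto(X\multimap f)\comp\eta_Y$ with $\eta_Y\colon Y\to X\multimap(Y\otimes X)$ the unit of the adjunction. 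Each is a composite of one of the three elementary non-expansive maps with post- or pre-composition by a fixed morphism, hence $a(h,h')\le a(\widehat h,\widehat{h'})$ and $a(f,g)\le a(\overline f,\overline g)$; combined with $\widehat{\overline f}=f$ these force $a(f,g)=a(\overline f,\overline g)$. So $\Lambda_{Y,Z}$ is a $\mathcal{V}$-isomorphism, and together with its classical naturality it witnesses $(-\otimes X)\dashv(X\multimap -)$ as a $\VCat$-adjunction, which is the last clause of Definition~\ref{defn:enr_aut}.

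I expect the only real work to be bookkeeping: confirming that the transpose of $f\colon Y\otimes X\to Z$ really is the mate $(X\multimap f)\comp\eta_Y$ and symmetrically for uncurrying, and checking that the quantale estimates collapse as claimed (this is where the unit laws of $\mathcal{V}$-categories are used; incidentally, the standing integrality of $\mathcal{V}$ does not seem to be needed for this particular theorem). Everything after the decompositions is immediate from hypotheses~(1)--(3), and no further coherence property of the autonomous structure is required --- which is why the three hypotheses are precisely the right amount of data.
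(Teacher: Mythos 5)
Your proposal is correct and follows essentially the same route as the paper's proof: the paper likewise reduces the statement to showing that the two transposition maps are $\mathcal{V}$-functors, first establishing from hypothesis~(2) that $h\mapsto h\otimes\id_X$ is $\VCat$-enriched, and then writing currying as $(X\multimap -)$ followed by precomposition with the unit $\overline{\id_{Y\otimes X}}$ and uncurrying as $(-\otimes X)$ followed by postcomposition with $\app$, exactly your decompositions. Your side remark that integrality of $\mathcal{V}$ is not needed here is also accurate, since the unit law $k\le a(\id,\id)$ together with monotonicity of the quantale multiplication suffices for the pre- and post-composition estimates.
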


\begin{proof}
        We first show that for all $\catC$-objects $X$ the functor $(- \otimes
        X) : \catC(A, B) \to \catC(A \otimes X, B \otimes X)$ defined by $h
        \mapsto h \otimes \id$ is $\VCat$-enriched.  This follows directly
        from (2) and from the fact that the corresponding mapping is given by
        the composite,
        \[
                \xymatrix@C=38pt{
                        \catC(A,B) \cong \catC(A,B) \otimes 1 
                        \ar[r]^(0.53){\id \otimes\, (\ast \mapsto \id_X)} & 
                        \catC(A,B) \otimes \catC(X,X) \ar[r]^(0.52){\otimes} & 
                        \catC(A \otimes X, B \otimes X)
                }
        \]
        We thus obtain that for all $\catC$-objects $X$ both functors $(-
        \otimes X)$ and $(X \multimap -)$ are $(\VCat)$-enriched. The next step
        is to prove that the maps witnessing the isomorphism $\catC(Y \otimes
        X, Z) \cong \catC(Y, X \multimap Z)$ are $\V$-functors. The
        left-to-right direction is the composite,
        \begin{align*}
        &       \xymatrix@C=40pt{
                        \catC(Y \otimes X, Z)
                        \ar[r]^(0.24){(X \multimap -)} & 
                        \catC(X \multimap (Y \otimes X), X \multimap Z) 
                        \cong 1 \otimes \catC(X \multimap (Y \otimes X), 
                        X \multimap Z) 
                                        &
                } \\[-2pt]
        &       
                \xymatrix@C=40pt{
                        \ar[r]^(0.10){(\ast \mapsto \overline{ \id_{Y \otimes X}})\, 
                        \otimes\, \id} 
                        & 
                        \catC(Y, X \multimap (Y \otimes X)) 
                                \otimes \catC(X \multimap (Y \otimes X), 
                        X \multimap Z) \ar[r]^(0.72){(\cdot)} &
                        \catC(Y, X \multimap Z)
                }
        \end{align*}
        and so by $(3)$ and $(1)$  it must be a $\V$-functor.
        The right-to-left direction is the composite,
        \begin{align*}
        &       \xymatrix@C=40pt{
                        \catC(Y, X \multimap Z) \ar[r]^(0.26){(- \otimes X)} &
                        \catC(Y \otimes X, (X \multimap Z) \otimes X) \cong 
                        \catC(Y \otimes X, (X \multimap Z) \otimes X) \otimes 1 
                        \ar[r]^(0.88){\id \otimes (\ast \mapsto \app) } &
                } \\[-2pt]
        &       \xymatrix@C=40pt{
                        \catC(Y \otimes X, (X \multimap Z) \otimes X) \otimes 
                        \catC( (X \multimap Z) \otimes X, Z) \ar[r]^(0.72){(\cdot)} &
                        \catC(Y \otimes X, Z)
                }
        \end{align*}
        Since $(- \otimes X)$ is $\VCat$-enriched, and $\catC$ is autonomous
        and $\VCat$-enriched, this composite must be a $\V$-functor as well.
\end{proof}
The previous theorem can easily be adjusted to work with the bases of
enrichment $\VCatSe$ and $\VCatSS$. We then obtain,

\begin{thm}
        The category $\Ban$ is a $\Met$-enriched autonomous category.
\end{thm}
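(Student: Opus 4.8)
The plan is to apply~\cref{thm:aut} in the form adjusted to the base of enrichment $\VCatSS$ (as remarked just above its statement), using the identification $\Met \cong \VCatSS$ that holds when $\mathcal{V}$ is the metric quantale. Two of the three hypotheses are already in hand: that $\Ban$ is autonomous with tensor $\hat\otimes_\pi$ is classical~\cite{DK20a,K81c}, and conditions~(1) and~(2) of~\cref{thm:aut} --- namely that $\Ban$ is $\Met$-enriched and that $\hat\otimes_\pi$ is $\Met$-enriched --- are exactly the content of~\cref{prop:ban}. So the only thing left to verify is condition~(3): that for every Banach space $X$ the internal-hom functor $(X \multimap -) : \Ban \to \Ban$ is $\Met$-enriched.

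For this, I would first make explicit that $(X \multimap -)$ acts on a short operator $S : Y \to Z$ by post-composition, sending $h : X \multimap Y$ to $S \comp h : X \multimap Z$. Unfolding the definition of $\Met$-enrichment, the claim to establish is that for all short operators $S, S' : Y \to Z$ one has $\norm{S - S'} \geq \norm{(S - S') \comp (-)}$, the right-hand side being the operator norm of $h \mapsto (S - S') \comp h$ on the hom-space $\Ban(X \multimap Y, X \multimap Z)$. This is immediate from submultiplicativity of the operator norm: for every short operator $h$ we have $\norm{(S - S') \comp h} \leq \norm{S - S'} \cdot \norm{h} \leq \norm{S - S'}$, so the supremum over all such $h$ is bounded by $\norm{S - S'}$.

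With condition~(3) in place,~\cref{thm:aut} (adjusted to $\VCatSS$) delivers the result. I do not expect any genuine obstacle here: the computation for~(3) is a one-line consequence of operator-norm submultiplicativity, and the only point demanding a little care is bookkeeping --- namely, checking that the $\VCatSS$-variant of~\cref{thm:aut} applies verbatim and that the functor $(X \multimap -)$ really is post-composition, so that the relevant hom-space and norm are the expected ones.
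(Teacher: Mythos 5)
Your proposal is correct and follows essentially the same route as the paper: it invokes \cref{thm:aut} (in its $\VCatSS$/$\Met$ form) with conditions (1) and (2) supplied by \cref{prop:ban}, and verifies condition (3) by observing that post-composition $S \mapsto (S \comp -)$ is non-expansive because $\norm{(S-S')\comp h} \leq \norm{S-S'}$ for every $h$ of norm at most $1$. Your explicit use of $S - S'$ is in fact slightly cleaner bookkeeping than the paper's phrasing, but the argument is the same.
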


\begin{proof}
        Proposition~\ref{prop:ban} and~\cref{thm:aut} entail our claim
        if we show that the functor $(V \multimap -) : \Ban \to \Ban$ is
        $\Met$-enriched for all Banach spaces $V$. Thus recall that for all
        Banach spaces $W$ and $U$ the mapping $(V \multimap -) : \Ban(W,U) \to
        \Ban(V \multimap W, V \multimap U)$ is defined as $S \mapsto (S \comp
        -)$. Then to prove that $(V \multimap -)$ is $\Met$-enriched recall as
        well that the space $V \multimap W$ is equipped with the operator norm,
        and that all elements $T$ of this space with $\norm{T} = 1$ are
        necessarily short. Finally observe that for all $S \in \Ban(W,U)$ we have
        $\norm{S} \geq \norm{(S \comp -)}$, because for all operators $T \in V
        \multimap W$ with $\norm{T} = 1$ the inequation $\norm{S} \geq \norm{S
        \comp T}$ holds. It follows that the mapping $(V \multimap -)
        : \Ban(W,U) \to \Ban(V \multimap W, V \multimap U)$, $S \mapsto (S
        \comp -)$ is non-expansive.
\end{proof}        

Finally, $\Ban$ forms a model of our metric $\lambda$-theory via the following
interpretation:  we define $\sem{\real}=\meas\Reals$, the Banach space of
finite Borel measures on $\Reals$ equipped with the total variation norm, and
similarly $\sem{\real^+}=\meas\Reals^+$ and $\sem{\unit}=\meas[0,1]$.  We have
$\sem{\typeI}=\Reals \ni 1$, and for every $r\in\mathbb{Q}$ we put
$\sem{r}:\Reals\to\meas\Reals,x\mapsto x\delta_r$, where $\delta_r$ is the
Dirac delta over $r$; thus $\sem{r}(1)=\delta_r$. We define an analogous
interpretation for the operation symbols $r^+$ and $r^u$.  For $\mu, \upsilon
\in\meas\Reals$ we define $\sem{+}(\mu \otimes \upsilon)=
+_{\ast}(\mu\otimes\upsilon)$, \ie the pushforward under $+$ of the product
measure $\mu \otimes \upsilon$ (seen as an element of
$\meas(\Reals \times\Reals)$, see \cite{DK20a}). For $\mu,\upsilon,\xi
\in\meas\Reals$ we define $\sem{\bern} (\mu \otimes \upsilon \otimes \xi)=
\mathrm{bern}_{\ast}(\mu \otimes \upsilon \otimes \xi)$, the pushforward of the
product measure $\mu \otimes \upsilon \otimes \xi$ under the Markov kernel
$\mathrm{bern}: \Reals^3\to \Reals, (u,v,p)\mapsto p\delta_u + (1-p)\delta_v$,
and similarly for $\sem{\normal}$ (see \cite{DK20a} for the definition of
pushforward and pushforward by a Markov kernel). This interpretation
is sound because the norm on $\meas\Reals$ is the total variation norm, and the
metric axiom~\eqref{eq:bern} describes the total variation distance between the
corresponding Bernoulli distributions.  More formally, for
$\sem{\bern(x,y,p)}$ and $\sem{\bern(x,y,q)}$ with
$p,q\in[0,1]$ we calculate, 
\begin{flalign*}
&\bigvee_A \abs{\int\hspace{-4pt}\int \hspace{-1pt} p\delta_u(A)\hspace{-2pt} +\hspace{-2pt} (1\hspace{-2pt}-\hspace{-2pt}p)\delta_v(A) \hspace{-1pt} d\sem{x}(du)d\sem{y}(dv)\hspace{-2pt} - \hspace{-3pt}\int\hspace{-4pt}\int \hspace{-1pt} q\delta_u(A) \hspace{-2pt}+\hspace{-2pt}(1\hspace{-2pt}-\hspace{-2pt}q)\delta_v(A) \hspace{-1pt} d\sem{x}(du)d\sem{y}(dv)}
\\
=&\bigvee_A\abs{\int\hspace{-4pt}\int (p-q)\delta_u(A) + ((1-p)-(1-q))\delta_v(A)~d\sem{x}(du)d\sem{y}(dv)}
\\
=&\bigvee_A\abs{(p-q)\sem{x}(A) + (q-p)\sem{y}(A)}
\\
=&\bigvee_A\abs{(p-q)(\sem{x}(A) -\sem{y}(A))}
\\
\leq & ~\abs{p-q}
\end{flalign*}

\subsection{Quantum computation}

We now turn our attention to quantum computing, a paradigm in which notions
of approximation and noise take a central role~\cite[Chapter 9]{nielsen02}.
Since quantum computing is perhaps less known than the probabilistic case, we
start by recalling some basic concepts on the topic -- we assume however some
familiarity with linear algebra. After these preliminaries we present a very
simple metric theory based on the idea of approximating a quantum operation
(given in terms of a rotation). We then start studying possible models for
this theory. An obvious candidate is the category of isometries but we will
see that it unfortunately carries a too fine-grained metric for quantum
computing.  We will thus move to a category of so-called quantum channels
that fixes this problem. This latter category is not monoidal closed so we
will take the $\Met$-enriched version of its presheaf category as
interpretation domain.

Recall that for every natural number $n \geq 1$ the set $\C^n$ is a complex
vector space, in fact an inner product space with the inner product $\langle v,
u \rangle$ given by $v^\dagger u$ where $(-)^\dagger$ is the adjoint operation.
Recall as well that inner products induce a norm $\norm{v}= \sqrt{\langle v, v
\rangle}$. A quantum state is an element $v \in \C^n$ such that $\norm{v} = 1$.
We denote the elements of the canonical basis of $\C^n$ by
$\ket{0},\dots,\ket{n-1}$.  Consider now a linear map $T : \C^n \to \C^m$. If
$\norm{Tv} = \norm{v}$ for all elements $v \in \C^n$ we call $T$ an isometry.
Note that isometries are precisely those linear maps that send quantum states to
quantum states, so it makes sense to focus on
isometries (at least for now). An important isometry is the so-called
\emph{phase} operation $P(\phi) : \C^2 \to \C^2$ ($\phi$ is an angle in
radians) which is  defined by $\ket{0} \mapsto \ket{0}$, $\ket{1} \mapsto
e^{i\phi}\ket{1}$. This operation can be elegantly explained geometrically in
the following way: first, it is well-known that when global phases are ignored
we can represent a quantum state $v \in \C^2$ in the form,
\[
        \cos \frac{\theta}{2} \ket{0} + e^{i \varphi} \sin \frac{\theta}{2}\ket{1}
\]
which corresponds to a point in the unit sphere where $\theta$ marks the
latitude (\ie the polar angle) and $\varphi$ marks the longitude (\ie the
azimuthal angle). This representation is traditionally called the \emph{Bloch
sphere representation}.  A point in the latter representation corresponds to
the vector in $\mathbb{R}^3$ defined by $(\cos \varphi \sin \theta, \sin
\varphi \sin \theta, \cos \theta)$ and often called Bloch vector~\cite[page
174]{nielsen02}. Then via routine linear algebra calculations we see that
$P(\phi)$ corresponds to a rotation of $\phi$ radians along the $z$-axis, \ie
the polar angle is preserved and the azimuthal angle changes to $\varphi +
\phi$. Another important isometry is $R_y(\phi) : \C^2 \to \C^2$ which
corresponds to a rotation of $\phi$ radians along the $y$-axis and is given by
the matrix,
\[
        \begin{pmatrix}
                \cos\ \frac{\phi}{2} & - \sin\ \frac{\phi}{2} \\
                \sin\ \frac{\phi}{2} & \cos\ \frac{\phi}{2}
        \end{pmatrix}
\]
An isometry often used in quantum computing is the \emph{Hadamard gate}
$H : \C^2 \to \C^2$ which is defined by $H = R_y(\frac{\pi}{2}) P(\pi)$.

\noindent\textbf{Quantum random walk on a (discrete) circle.} 
Quantum random walks have been extensively studied in the quantum
literature~\cite{venegas12}. They are part of several quantum algorithms, and
although conceptually similar to the probabilistic counterpart they tend to
exhibit very different behaviours. Here we focus on a simple
case of a quantum random walk and use it to provide another example of a metric
$\lambda$-theory.

Let us start by defining a metric $\lambda$-theory whose ground types are
$\prog{qbit}$ and $\prog{pos}$. In this example values of type $\prog{qbit}$
will be used to decide whether to move left or right, and $\prog{pos}$ will be
used to mark the current position on a discrete circle. As operations we have
$\prog{H} : \prog{qbit} \to \prog{qbit}$ (which will later be interpreted as
the Hadamard gate) and $\prog{S} : \prog{qbit} \otimes \prog{pos} \to
\prog{qbit} \otimes \prog{pos}$, which, as alluded above, will use a value of
type $\prog{qbit}$ to move either to the left or right. Here one step of the
quantum walk is given by,
\[
        \lambda x: \prog{qbit} \otimes \prog{pos}.\ \prog{pm}\ x\
        \prog{to}\ x_1 \otimes x_2.\ \prog{S}(\prog{H}(x_1) \otimes x_2) :
        \prog{qbit}\otimes\prog{pos} \multimap
        \prog{qbit}\otimes\prog{pos} 
\]
which we abbreviate to $\prog{step} : \prog{qbit}\otimes\prog{pos} \multimap
\prog{qbit}\otimes\prog{pos}$. If we wish to perform $n$-steps, then we can
proceed modularly in the following way. First we define the closed judgement
$\lambda f_1,\dots,f_n, x.\ f_1(f_2(\dots(f_n \> x)))$ which we abbreviate to
$\prog{seq^n}$ (recall the example of wait calls above). Then we can
straightforwardly build the closed judgement $\prog{seq^n}\ \prog{step}\ \dots\
\prog{step} : \prog{qbit} \otimes \prog{pos} \multimap \prog{qbit} \otimes
\prog{pos}$ to represent $n$-steps in the walk.

Let us now consider a simple metric axiom: we add a new operation
$\prog{H}^\epsilon : \prog{qbit} \to \prog{qbit}$ and postulate as axiom that
$x : \prog{qbit} \vljud \prog{H}(x) =_\epsilon \prog{H}^\epsilon(x) :
\prog{qbit}$.  The operation $\prog{H}^\epsilon$ represents an \emph{imperfect}
implementation of $\prog{H}$. For example by knowing that the Hadamard
gate is the composition $R_y(\frac{\pi}{2}) \comp P(\pi)$ we may regard
$\prog{H}^\epsilon$ as the composition $R_y(\frac{\pi}{2})\comp P(\pi +
\delta)$ which does not rotate along the $z$-axis precisely $\pi$ radians but
$\pi + \epsilon$ radians instead. This kind of imperfection is unavoidable in
the implementation of quantum gates. Next, we denote by
$\prog{step}^\epsilon$ the judgement that results from replacing $\prog{H}$ in
$\prog{step}$ by $\prog{H}^\epsilon$, and can easily obtain $\prog{step}
=_\epsilon \prog{step}^\epsilon$ via our metric deductive system. Finally we
can deduce,
\[
        \prog{seq^n}\ \prog{step}\ \dots\
        \prog{step} =_{n \cdot \epsilon}
        \prog{seq^n}\ \prog{step}^\epsilon\ \dots\ \prog{step}^\epsilon 
\]
This last metric equation gives the important message that by closing the
distance between $\prog{H}$ and $\prog{H}^\epsilon$ (with $\prog{H}^\epsilon$
corresponding to an imperfect implementation of $\prog{H}$) we can also close
the distance between an idealised quantum walk and its implementation when the
walk is bounded by a specific number of steps. This is of course necessary to
be able to \emph{actually execute} a quantum walk that is close to our
idealisation of it. 

Let us now build a model for the theory just introduced. An obvious
candidate for the interpretation domain is the category $\Iso$ whose objects
are natural numbers $n \geq 1$ and morphisms $n \to m$ are isometries $\C^n \to
\C^m$. This category is strict symmetric monoidal with the tensor given by the
usual tensor product of vector spaces and symmetry given by the isometry $v
\otimes w \mapsto w \otimes v$~\cite{staton18}. Moreover the set of
isometries $\Iso(n,m)$ can be equipped with the metric induced by the
\emph{operator norm} which is defined as,
\[
        \norm{T} = \bigvee \left \{ \norm{T v} \mid 
        \norm{v} = 1 \right  \}
\]
for $T : n \to m$ an isometry. Unfortunately, the category $\Iso$ has two
important shortcomings:
\begin{enumerate}
        \item it is not monoidal closed -- and although higher-order structure
                is not frequently used in quantum algorithmics we know that it
                renders program constructions and deductions more modular (as
                illustrated with the quantum walk above and $\prog{seq^n}$);
        \label{i:ho}
        \item the metric induced by the operator norm is too fine-grained for
                quantum computing. Indeed quantum states should be
                \emph{indistinguishable up to global phase}, but even so we can
                define two isometries $T,S : 1 \to 2$ such that $T 1 =
                \ket{0}$, $S 1 = -\ket{0}$ and the distance between $\norm{T-S}$
                will be at least $2$.
        \label{i:metric}
\end{enumerate}
So we now recall a formalism for quantum computing that produces a
category that fixes item \eqref{i:metric} (item~\eqref{i:ho} will be handled
later on). We will need some preliminaries: a matrix $A\in\C^{n\times n}$ is
said to be \emph{positive semi-definite} (or just positive), notation $A\geq
0$, if $\langle v, A v \rangle \geq 0$ for all vectors $v \in \C^n$. We denote the trace
of a matrix $A$ by $\Tr\ A$. A matrix $A\in\C^{n\times n}$ is
\emph{Hermitian} if $A=A^\dagger$.  Positive matrices are Hermitian~\cite[page
71]{nielsen02}. A positive matrix $A \in \C^{n \times n}$ with $\Tr\ A=1$ is
called a \emph{density matrix} or a \emph{density operator}. A mixed quantum
state is a convex combination $p_1 \cdot v_1v_1^\dagger + \dots + p_n \cdot
v_nv^\dagger_n$ of quantum states $v_1,\dots,v_n \in \C^n$ and such a
combination corresponds precisely to a density matrix~\cite{nielsen02}. One
usually denotes density matrices by the greek letters $\rho,\sigma$ and so
forth. A density matrix encodes uncertainty about the current state of the
quantum system at hand. For example, $\frac{1}{2} \cdot \ket{0}\ket{0}^\dagger
+ \frac{1}{2}\cdot \ket{1}\ket{1}^\dagger$ tells that the current state is
either $\ket{0}$ or $\ket{1}$ with probability $\frac{1}{2}$. Note that global
phases disappear in this formalism because given a quantum state $e^{i \theta}
v$ we have $(e^{i \theta} v)(e^{i \theta} v)^\dagger = e^{i
\theta}\overline{e^{i \theta}} v v^\dagger = v v^\dagger$.

An operator between spaces of matrices $T: \C^{n\times n}\to \C^{m\times m}$ is
often called a super-operator.  A super-operator $T: \C^{n\times n}\to
\C^{m\times m}$ is called \emph{positive} if it sends positive matrices to
positive matrices, \ie $A\geq 0\Rightarrow TA\geq 0$.  Since density matrices
are positive,  it is clearly necessary that a physically allowed transformation
be represented by a positive operator.  In fact more is needed: since one can
always extend the space $\C^{n\times n}$ to a space $\C^{n\times n}\otimes
\C^{k\times k}$ by adjoining a new quantum system, any physically allowed
transformation $T: \C^{n\times n}\to \C^{m\times m}$ must have the property
that $T\otimes \id_{\C^{k\times k}}:\C^{n\times n}\otimes \C^{k\times k}\to
\C^{m\times m}\otimes \C^{k\times k}$ is positive.  A super-operator $T$
satisfying this condition is called \emph{completely positive}. Finally, a
super-operator $T$ is called \emph{trace-preserving} if $\Tr\ TA=\Tr\ A$.
Completely positive, trace-preserving super-operators are traditionally called
\emph{quantum channels}. 

It is straightforward to prove that quantum channels are closed under operator
composition and tensoring~\cite{selinger2004towards}. In fact, we have the
strict symmetric monoidal category $\CPTP$ whose objects are natural numbers $n
\geq 1$ and morphisms $n \to m$ are quantum channels $\C^{n \times n} \to \C^{m
\times m}$.  Note also the existence of a strict symmetric monoidal functor $r:
\Iso \to \CPTP$ (formally a reflection) that behaves as the identity on objects
and that sends an isometry $T$ to the mapping $A \mapsto T A
T^\dagger$~\cite{staton18}.

Our next step is to prove that $\CPTP$ is a $\Met$-enriched symmetric monoidal
category.  For that effect we will require more preliminaries: first a matrix
$A\in\C^{n\times n}$ is said to be \emph{normal} if $AA^\dagger=A^\dagger A$.
Clearly every Hermitian matrix is normal.  Note also that for every matrix $A
\in \C^{n \times n}$ the matrix $A^\dagger A$ is Hermitian.  Next, it is
well-known that by appealing to the spectral theorem~\cite{nielsen02}, every
normal matrix $A \in \C^{n \times n}$ can be expressed as a linear combination
$\sum_i \lambda_i b_i b^\dagger_i$ where the set $\{ b_1, \dots, b_n \}$ is an
orthonormal basis of $\C^n$. Using this last result we can extend any function
$f: \C\to\C$ to normal matrices via,
\[
        f(A) = \sum_i f(\lambda_i) b_i b_i^\dagger
\]
We then obtain the norm $\norm{A}_1 = \Tr \sqrt{A^\dagger A}$ for matrices $A
\in \C^{n \times n}$. This norm is called the \emph{trace norm} and is also
known as the Schatten $1$-norm~\cite{watrous18}. The trace norm induces a
metric on the set of density matrices which is defined by $d(\rho,\sigma) =
\norm { \rho - \sigma}_1$. In the sequel we will often treat $vv^\dagger$ as if
it were simply $v$. Next, it is well known that the distance $d(vv^\dagger,
uu^\dagger)$ between two quantum states $v$ and $u$ is their \emph{Euclidean
distance} in the Bloch sphere~\cite{watrous18,nielsen02}. It is thus easy to
see for example that the distance between two quantum states,
\[
        \cos \frac{\theta}{2} \ket{0} + e^{i \varphi} \sin \frac{\theta}{2}\ket{1}
        \text{ and }
        \cos \frac{\theta}{2} \ket{0} + e^{i (\varphi + \epsilon)} 
        \sin \frac{\theta}{2}\ket{1}
\]
tends to $0$ when $\epsilon$ approaches $0$, more formally if $\epsilon_n \to
0$ then $f(\epsilon_n) \to 0$ where,
\begin{align*}
        f(\epsilon) & = \norm{ 
        (\cos \varphi \sin \theta, \sin \varphi \sin \theta, \cos \theta) -
        (\cos (\varphi + \epsilon) \sin \theta, \sin (\varphi + \epsilon) 
        \sin \theta, \cos \theta)  }_2 \\
        & = \norm{
        ( (\cos \varphi - \cos(\varphi + \epsilon)) \sin \theta,
          (\sin \varphi - \sin(\varphi + \epsilon)) \sin \theta,
          0 )
        }_2 \\
        & = \lvert \sin \theta \rvert 
        \norm{(\cos \varphi - \cos(\varphi + \epsilon),
          \sin \varphi - \sin(\varphi + \epsilon),
        0 )}_2
\end{align*}
Using results from topology, we know that this holds because $f(0) = 0$ and
moreover $f$ is continuous.  The definition of $f$ also tells us that for a
fixed $\epsilon$ the distance between the two states is maximised when $\lvert
\sin\ \theta \rvert = 1$ which corresponds to $\theta = \pm \frac{\pi}{2}$
(both states are located at the equator) and minimised when $\lvert \sin\
\theta \rvert = 0$ which corresponds to $\theta = 0$ or $\theta = \pi$ (both states are located at one of the poles which renders the azimuthal angle
irrelevant). We now consider the following norm on super-operators $T :
\C^{n\times n} \to \C^{m \times m}$~\cite{watrous18}:
\[
        \norm{T}_1 = \max\ \{ \norm{T A}_1 \mid  
        \norm{A}_1 = 1 \}
\]
Unfortunately, the norm $\norm{-}_1$ on super-operators is not stable under
tensoring~\cite{watrous18}, specifically the inequation $\norm{T \otimes \id}_1
\leq \norm{T}_1$ does not hold which makes impossible to enrich the monoidal
structure of $\CPTP$ via this norm. So instead we will use the so-called
\emph{diamond norm}, which given a super-operator $\C^{n\times n} \to \C^{m
\times m}$ is defined by,
\[
        \norm{T}_\diamond = \norm{T \otimes \id_n}_1
\]
Then it follows from \cite[Proposition 3.44 and Proposition 3.48]{watrous18}
that for all super-operators $T : \C^{n \times n} \to \C^{m \times m}$, $S :
\C^{m \times m}  \to \C^{o \times o}$ if $T$ is a quantum channel the
inequation $\norm{S T}_\diamond \leq \norm{S}_\diamond$ holds and if $S$ is a
quantum channel the inequation $\norm{S T}_\diamond \leq \norm{T}_\diamond$
also holds.  Furthermore, via \cite[Corollary 3.47]{watrous18} we have both
inequations $\norm{T}_\diamond \geq \norm{T \otimes \id}_\diamond$ and
$\norm{T}_\diamond \geq \norm{\id \otimes T}_\diamond$ for $T$ a
super-operator. As we saw previously with the category $\Ban$ these are
sufficient conditions to prove that $\CPTP$ is a $\Met$-enriched symmetric
monoidal category (recall Proposition~\ref{prop:ban}).

The exact calculation of distances induced by $\norm{-}_\diamond$ tends to be
quite complicated, but a useful property for calculating the distance
between quantum channels in the image of $r : \Iso \to \CPTP$ is provided in
\cite[Theorem 3.55]{watrous18}:
\begin{thm}
        \label{theo:smaller}
        Consider two isometries $T,S : n \to m$. There exists a unit
        vector $v \in \C^n$ such that,
        \[
                \norm{r(T)(vv^\dagger) - r(S)(vv^\dagger)}_1 =
                \norm{r(T) - r(S)}_\diamond
        \]
\end{thm}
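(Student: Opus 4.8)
The plan is to reduce the diamond norm $\norm{r(T) - r(S)}_\diamond$ to an optimisation over \emph{pure} input states, evaluate the resulting trace norm explicitly, and then push that optimisation from the ancilla-extended space $\C^n \otimes \C^n$ back down to $\C^n$. First I would invoke the standard characterisation of the diamond norm of a Hermiticity-preserving map in terms of density operators (see \cite{watrous18}): $\norm{r(T) - r(S)}_\diamond$ equals the maximum of $\norm{((r(T) - r(S)) \otimes \id_n)(\rho)}_1$ over density operators $\rho$ on $\C^n \otimes \C^n$. Since $\rho \mapsto \norm{((r(T) - r(S)) \otimes \id_n)(\rho)}_1$ is convex and the density operators form a compact convex set whose extreme points are the pure states, this maximum is attained at some $\rho = ww^\dagger$ with $w \in \C^n \otimes \C^n$ a unit vector.

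Next I would compute the image explicitly. Because $r(T)(A) = T A T^\dagger$, for any unit $w$ we get $((r(T) - r(S)) \otimes \id_n)(ww^\dagger) = \xi\xi^\dagger - \eta\eta^\dagger$ with $\xi := (T \otimes \id_n)w$ and $\eta := (S \otimes \id_n)w$, both unit vectors since $T \otimes \id_n$ and $S \otimes \id_n$ are isometries. The operator $\xi\xi^\dagger - \eta\eta^\dagger$ is Hermitian, traceless, and of rank at most two, so its eigenvalues are $\pm\lambda$ with $2\lambda^2 = \Tr\big((\xi\xi^\dagger - \eta\eta^\dagger)^2\big) = 2\big(1 - \abs{\langle \xi, \eta\rangle}^2\big)$; hence its trace norm is $2\sqrt{1 - \abs{\langle \xi, \eta\rangle}^2}$. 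Setting $M := T^\dagger S : \C^n \to \C^n$ and noting $\langle \xi, \eta\rangle = \langle w, (M \otimes \id_n)w\rangle$, the two steps so far yield
\[
        \norm{r(T) - r(S)}_\diamond = 2\sqrt{\,1 - \min\nolimits_{\norm{w} = 1}\, \abs{\langle w, (M \otimes \id_n)w\rangle}^2\,}.
\]

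The decisive step -- and the one I expect to be the genuine obstacle -- is discarding the ancilla, i.e.\ showing this minimum over unit $w \in \C^n \otimes \C^n$ is already realised by a unit $v \in \C^n$. This I would obtain from the identity $W(M \otimes \id_n) = W(M)$ between numerical ranges: expanding $w = \sum_i v_i \otimes e_i$ along an orthonormal basis $\{e_i\}$ of the ancilla shows $\langle w, (M \otimes \id_n)w\rangle = \sum_i \langle v_i, M v_i\rangle$ is a convex combination, with weights $\norm{v_i}^2$, of points of $W(M)$, and $W(M)$ is convex by the Toeplitz--Hausdorff theorem (and compact, $\C^n$ being finite-dimensional), the reverse inclusion being trivial. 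Hence $\min_{\norm{w}=1} \abs{\langle w, (M \otimes \id_n)w\rangle} = \min_{\norm{v}=1} \abs{\langle v, M v\rangle}$, attained at some unit $v \in \C^n$. For that $v$, $r(T)(vv^\dagger) = (Tv)(Tv)^\dagger$ and $r(S)(vv^\dagger) = (Sv)(Sv)^\dagger$ with $Tv, Sv$ unit, so re-running the rank-two computation without ancilla gives $\norm{r(T)(vv^\dagger) - r(S)(vv^\dagger)}_1 = 2\sqrt{1 - \abs{\langle v, Mv\rangle}^2} = \norm{r(T) - r(S)}_\diamond$, which is the claim.
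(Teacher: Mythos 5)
Your argument is correct: the reduction to pure inputs via convexity, the rank-two computation giving $2\sqrt{1-\abs{\langle\xi,\eta\rangle}^2}$, and the elimination of the ancilla via $W(M\otimes\id_n)=W(M)$ (Toeplitz--Hausdorff) all check out. The paper offers no proof of its own here --- it simply cites \cite[Theorem 3.55]{watrous18} --- and your proposal is essentially the standard argument behind that cited result, so there is nothing to compare beyond noting that you have supplied the proof the paper delegates.
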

As an illustration of the theorem at work, note that every isometry $1 \to n$
corresponds to the initialisation of a quantum state. So for two such
isometries $T$ and $S$ we can immediately see that the distance $\norm{r(T) -
r(S)}_\diamond$ between quantum channels $r(T)$ and $r(S)$ of type $1 \to n$ is
precisely $\norm{r(T)(1 1^\dagger) - r(S)(1 1^\dagger)}_1$.   For example the
distance between the $r$-image of the mapping $1 \mapsto \cos \frac{\theta}{2}
\ket{0} + e^{i \varphi} \sin \frac{\theta}{2}\ket{1}$ and the mapping $1
\mapsto \cos \frac{\theta}{2} \ket{0} + e^{i (\varphi + \epsilon) } \sin
\frac{\theta}{2}\ket{1}$ must be $f(\epsilon) = \lvert \sin \theta \rvert
\norm{(\cos \varphi - \cos(\varphi + \epsilon), (\sin \varphi - \sin(\varphi +
\epsilon), 0 )}_2$ which we already know tends to $0$ when $\epsilon$
approaches $0$. Let us now consider a more complex example, which is closely
related to the example above of a quantum random walk.

We already know that the phase operation $P(\phi) : \C^2 \to \C^2$ is an
isometry defined by $\ket{0} \mapsto \ket{0}$ and $\ket{1} \mapsto
e^{i\phi}\ket{1}$.  We also know that  the implementation of a phase operation
tend to only approximate its idealisation. For example, the implementation of
$P(\phi)$ might be $P(\phi + \epsilon)$ for some error $\epsilon$, meaning that
we will not rotate along the $z$-axis precisely $\phi$ radians but $\phi +
\epsilon$ instead. We will now compute an upper bound for the distance $\norm{
r(P(\phi + \epsilon)) - r(P(\phi))}_\diamond$ and additionally show that it
tends to $0$ when $\epsilon$ approaches $0$. The crucial observation here is
that for all quantum states $\cos \frac{\theta}{2} \ket{0} + e^{i \varphi  }
\sin \frac{\theta}{2}\ket{1} \in \C^2$ we have,
\begin{flalign*}
       &\, \norm{ 
                r(P(\phi)) \left ( \cos \frac{\theta}{2}
                \ket{0} + e^{i \varphi  } 
                \sin \frac{\theta}{2}\ket{1} \right ) - 
                r(P(\phi + \epsilon)) 
                \left ( \cos \frac{\theta}{2}
                \ket{0} + e^{i \varphi  } 
                \sin \frac{\theta}{2}\ket{1} \right )
        }_1 & \\
        &= \norm{ 
                \left ( \cos \frac{\theta}{2}
                \ket{0} + e^{i (\varphi + \phi) } 
                \sin \frac{\theta}{2}\ket{1} \right ) - 
                \left ( \cos \frac{\theta}{2}
                \ket{0} + e^{i (\varphi + \phi + \epsilon) } 
                \sin \frac{\theta}{2}\ket{1} \right )
        }_1 & \\
        &= \lvert \sin \theta \rvert 
        \norm{(\cos (\varphi + \phi) - \cos(\varphi + \phi + \epsilon),
          \sin (\varphi + \phi) - \sin(\varphi + \phi + \epsilon),
0 )}_2 & \\
        &\leq
        \norm{(\cos (\varphi + \phi) - \cos(\varphi + \phi + \epsilon),
          \sin (\varphi + \phi) - \sin(\varphi + \phi + \epsilon),
0 )}_2 &  \\
       & \leq
\norm{(K_1 \epsilon, K_2 \epsilon, 0 )}_2& \\
       & \leq
\max(K_1,K_2) \norm{(\epsilon,\epsilon,0)}_2&
\end{flalign*}
The penultimate step arises from both functions $\cos$ and $\sin$ being
Lipschitz continuous with $K_1$ and $K_2$ the corresponding Lipschitz factors.
Then by an application of~\cref{theo:smaller} we obtain,
\begin{equation}
\label{eq:dist}
\norm{ r(P(\phi + \epsilon)) - r(P(\phi))}_\diamond \leq 
\max(K_1,K_2) \norm{(\epsilon,\epsilon,0)}_2
\end{equation}
Finally the function defined by $f(\epsilon) = \max(K_1,K_2)
\norm{(\epsilon,\epsilon,0)}_2$ clearly respects $f(0) = 0$ and is continuous
which entails that the distance $\norm{ r(P(\phi + \epsilon)) -
r(P(\phi))}_\diamond$ tends to $0$ as $\epsilon$ approaches $0$.

Let us now address item~\eqref{i:ho} which concerns the fact that the category
$\Iso$ is not monoidal closed. Unfortunately, the category $\CPTP$ is not
monoidal closed either~\cite{selinger04}. We can however use general results of
category theory to overcome this issue: specifically, we can \emph{embed}
$\CPTP$ into the $\Met$-autonomous category $[\CPTP^{\cop},\Met]$ of
$\Met$-valued presheaves. Before proceeding we need to recall further
aspects of (enriched) category theory.

First it is well-known that $\CPTP$ is small and $\Met$ is (co)complete. Thus
by general results~\cite{kelly82,borceux94}, we obtain a
$\Met$-enriched category $[\CPTP^{\cop},\Met]$  whose objects are
\emph{$\Met$-enriched} functors $\CPTP^{\cop} \to \Met$ and for two such
functors $\funF$ and $\funG$ the corresponding hom-object is given by the
enriched end formula,
\[
        [\CPTP^\cop, \Met](F,G) \cong \int_{n} \Met(\funF n, \funG n)
\]
We also obtain a $\Met$-enriched Yoneda \emph{embedding} functor $\Yoneda :
\CPTP \to [\CPTP^\cop,\Met]$ that sends an object $n \in \CPTP$ to $\CPTP(-,n)
: \CPTP^\cop \to \Met$ and that sends a quantum channel $T : n \to m$ to the
morphism $(T \comp - ) : \CPTP(-,n) \to \CPTP(-,m)$. Let us analyse the
distance between two quantum channels $T$ and $S$ when in the form $\Yoneda(T)$,
$\Yoneda(S)$. To that effect we will recur to the enriched Yoneda
lemma~\cite{kelly82} which in our setting establishes an isometry,
\[
        [\CPTP^\cop,\Met](\CPTP(-,n), \funF) \cong F n
\]
for every object $n \in \CPTP$ and $\Met$-enriched functor $\funF : \CPTP^\cop
\to \Met$.  By instantiating $\funF$ with $\CPTP(-,m)$ we obtain an isometry,
\[
        [\CPTP^\cop,\Met](\CPTP(-,n), \CPTP(-,m)) \cong \CPTP(n,m)
\]
and therefore the distance between $\Yoneda(T)$ and $\Yoneda(S)$ must be equal
to that of $T$ and $S$ in $\CPTP(n,m)$. So the Yoneda embedding $\Yoneda :
\CPTP \to [\CPTP^\cop,\Met]$ indeed faithfully embeds the $\Met$-enriched
categorical structure of $\CPTP$ into $[\CPTP^\cop,\Met]$. We can actually do
better than this by recurring to Day's work on equipping functor categories
with enriched biclosed structures~\cite{day70,day70b}. Specifically, by virtue
of $\CPTP$ being small and $\Met$ being (co)complete we can equip
$[\CPTP^\cop,\Met]$ with a $\Met$-enriched symmetric monoidal structure given
by Day convolution,
\[
        F \otimes_D G \cong
        \int^{n,m} \funF n \otimes \funG m \otimes
        \CPTP(-, n \otimes m)
\]
\begin{thmC}[\cite{day70,day70b,im86}]
 The Yoneda embedding functor $\Yoneda : \CPTP \to [\CPTP^\cop,\Met]$
 is symmetric strong monoidal.
\end{thmC}
In particular, we obtain $\Yoneda(n) \otimes_D \Yoneda(m) \cong \Yoneda(n
\otimes m)$, and when $\Yoneda(n) \otimes_D \Yoneda(m)$ is seen in the form
$\Yoneda(n \otimes m)$ the swap operation corresponds to $\Yoneda(\sw) :
\Yoneda(n \otimes m) \to \Yoneda(m \otimes n)$. Finally given two
$\Met$-enriched functors $F,G : \CPTP^\cop \to \Met$ their exponential is
defined as,
\[
        F \multimap_D G \cong \int_{n} \Met(F n, G(- \otimes n))
\]
This category of enriched presheaves thus overcomes
issues \eqref{i:ho} and \eqref{i:metric} discussed above. We also obtain the
sequence of symmetric strong monoidal functors,
\[
        \xymatrix{
                \Iso \ar[r]^(0.45){r} & 
                \CPTP \ar@{^{(}->}[r]^(0.35){\Yoneda} 
                & [\CPTP^\cop, \Met]
        }
\]
We are finally ready to build a model for the metric theory of a quantum random
walk presented above. Recall that the class of ground types is
$\{\prog{qbit},\prog{pos}\}$. We interpret $\sem{\prog{qbit}}= \Yoneda r(2)$
and $\sem{\prog{pos}} = \Yoneda r(n)$ where we allow $n$ to be any  
natural number $n \geq 1$.
Then recall that the class of operation symbols is given by $\{ \prog{H} : \prog{qbit}
\to \prog{qbit},\ \prog{H^\epsilon} : \prog{qbit} \to \prog{qbit},\ \prog{S} :
\prog{qbit} \otimes \prog{pos} \to \prog{qbit} \otimes \prog{pos} \}$. We
interpret $\sem{\prog{H}} = \Yoneda r(R_y(\frac{\pi}{2}) \comp P(\phi))$ and
$\sem{\prog{H}^\epsilon} = \Yoneda r(R_y(\frac{\pi}{2}) \comp P(\phi +
\delta))$ where we allow $\delta$ to be any non-negative real number such that
$\max(K_1,K_2)\norm{(\delta,\delta,0)}_2 \leq \epsilon$.  Then recall
that for any finite set $\{ 0, \dots, n - 1\}$ we have the operations increment
$\oplus\ 1$ and decrement $\ominus\ 1$ modulo $n$. This gives rise to the
isometry $S : 2 \otimes n \to 2 \otimes n$,
\[         S(\ket{0} \otimes \ket{i}) = \ket{0} \otimes \ket{i \ominus 1}
        \hspace{1cm}
        S(\ket{1} \otimes \ket{i}) = \ket{1} \otimes \ket{i \oplus 1}
\]
(it is an isometry because it is a permutation of the canonical basis of $\C^2
\otimes \C^n$).  Intuitively, $S$ uses the left qubit as control to either move
to the left or to the right in the circle. We then interpret $\sem{\prog{S}}
= \Yoneda r(S)$. The only thing that remains to prove is that the axiom $x :
\prog{qbit} \vljud \prog{H}(x)=_\epsilon \prog{H}^\epsilon(x) : \prog{qbit}$
propounded above is sound in this interpretation. So we reason,
\begin{flalign*}
        & \, a \left (\sem{\prog{H}(x)}, 
        \sem{\prog{H^\epsilon}(x)} \right)
        & \\ 
        & =
        a \left (\Yoneda r(R_y(\textstyle{\frac{\pi}{2}}) \comp P(\phi)), 
        \Yoneda r(R_y(\textstyle{\frac{\pi}{2}}) \comp P(\phi + \delta)) \right )
        &  \\
        & =
        a \left (r(R_y(\textstyle{\frac{\pi}{2}}))\comp r(P(\phi)), 
r(R_y(\textstyle{\frac{\pi}{2}})) \comp r(P(\phi + \delta)) \right )& 
\{ \text{Enriched Yoneda}\}\\
        & \leq
a \left (r(P(\phi)), r(P(\phi + \delta)) \right )&
\{\text{$\CPTP$ is $\Met$-enriched} \}\\
        & \leq
\max(K_1,K_2)\norm{(\delta,\delta,0)}_2 & \\
        & \leq \epsilon
\end{flalign*}
We have therefore obtained a concrete higher-order model for our idealised
quantum walk, its approximation, and the fact that when $\epsilon$ tends to $0$
the distance between both walks tends to $0$ as well when bounded by a specific
number of steps.

\section{Concluding notes}
\label{sec:concl}

We end the paper with three remarks. The first 
discusses the extension of our results from the linear to the so-called
\emph{affine} setting, which renders weakening admissible -- this is 
one of the simplest natural extensions of linear calculi. The second 
establishes a functorial connection between the categorical semantics of $\V
\lambda$-calculus (presented in Section~\ref{sec:main}) and previous work on
\emph{algebraic semantics} of linear logic~\cite{paiva99}.  Finally the third
provides a brief exposition of future work.

\subsection{From linear to affine}

In this subsection we introduce an \emph{affine} variant of $\lambda$-calculus,
\ie an extension of the linear version that admits a weakening rule. We will
see that all the results that we have established thus far for linear $
\lambda$-calculus and its $\V$-equational system can be easily extended to the
affine variant. 

For the same reason as in the linear case, we assume that $\mathcal{V}$ is
integral despite not being strictly necessary for the results presented below.
The grammar of types for affine $\lambda$-calculus is defined as in the linear
case. Next, it is tempting to formalise the affine nature of the calculus
merely by adding the weakening rule,
\begin{align*}
\small{
\infer[\rulename{weakening}]{\Gamma, x:\typeA \vljud v:\typeB}{\Gamma\vljud v:\typeB}
}
\end{align*}
to those presented in \cref{fig:lang}.  This indeed allows to discard $x$ in
the construction of $v$, but it also breaks the `unique derivation' principle
that was previously discussed (recall Section~\ref{sec:back}). For example, the simple
judgement $x : \typeA, y : \typeB \vljud f(x):  \typeC$ with $f : \typeA \to
\typeC$ can now be derived in at least two different ways: 
\[
	\small{
        \infer[\rulename{weakening}]{x : \typeA, y : \typeB \vljud f (x) : \typeC}{
                \infer[\rulename{ax}]{x : \typeA \vljud f (x) : \typeC}{
                x : \typeA \vljud x : \typeA }
        }
        \hspace{4cm}
        \infer[\rulename{ax}]{x : \typeA, y : \typeB \vljud f (x) : \typeC}{
        \infer[\rulename{weakening}]{x : \typeA, y : \typeB \vljud x : \typeA}{
                x : \typeA \vljud x : \typeA }
        } 
	}        
\]
An obvious alternative is to use instead the rule,
\[
        \infer{\Gamma, x : \typeA \vljud x : \typeA}{}
\]
but unfortunately it also breaks the unique derivation principle.  This can
be seen for example with the simple judgement $x : \typeA, y : \typeB, z :
\typeC \vljud f(x,y) : \typeD$. So our approach is to add instead the term
formation rule in \cref{fig:dis} to those in \cref{fig:lang}. This rule marks
a term $v$ as \emph{discardable}, which can then be effectively discarded via
Rule $\rulename{\typeI_e}$.  We thus obtain a weakening rule,
\[
        \infer{\Gamma, x : \typeA \vljud \prog{dis}(x)\ \prog{to}\
        \ast.\ v  : \typeB}
        { \Gamma \vljud v : \typeB}
\]
\begin{figure*}[h!]
\small{
	{\renewcommand{\arraystretch}{1.1}
        \begin{tabular}{|c | c | }
	\hline 
	Term formation rule
	& 
	Semantics rule
        \\
	\hline
        \infer[\rulename{discardable}]
	{\Gamma \vljud \prog{dis}(v) : \typeI}
    {\Gamma \vljud v : \typeA} 
	&
        \infer[]{\sem{\Gamma \vljud \prog{dis}(v) : \typeI} =!_{\sem{\typeA}} \comp f}
	{\sem{\Gamma \vljud v : \typeA} = f} \\
	\hline
        \multicolumn{2}{|c|}{ Equation } \\
        \hline
        \multicolumn{2}{|c|}{
                $\begin{aligned}[t]
                  v & =  \prog{dis}(x_1)\ \prog{to}\ \ast.\ \dots\
                  \prog{dis}(x_{n-1})\ \prog{to}\ \ast.\ \prog{dis}(x_n)
          \end{aligned}$
        }
        \\
        \hline
	\end{tabular}
	}
}
\caption{Additional data for affine $\lambda$-calculus.}
 \label{fig:dis}
\end{figure*}
Substitution is defined in the expected way and like in
Section~\ref{sec:back} we obtain the following result. 
\begin{thm}\label{thm:propertiesA}
        The affine $\lambda$-calculus defined above enjoys the following properties:
        \begin{enumerate}
                \item (Unique typing) For any two judgements $\Gamma \vljud v :
                        \typeA$ and $\Gamma \vljud v : \typeA'$, we have
                        $\typeA = \typeA'$; \label{i:unique_type}
                \item (Unique derivation) Every judgement $\Gamma \vljud v :
                        \typeA$ has a unique derivation; \label{i:unique_der}
                \item (Exchange) For every judgement $\Gamma, x : \typeA, y :
                        \typeB, \Delta \vljud v : \typeC$ \label{i:exch} we can
                        derive $\Gamma, y : \typeB, x : \typeA, \Delta \vljud v
                        : \typeC$; 
                \item (Substitution) For all judgements $\Gamma, x : \typeA
                        \vljud v : \typeB$ and $\Delta \vljud w : \typeA$ we
                        can derive $\Gamma,\Delta \vljud v[w/x] : \typeB$.
                        \label{i:subst} 
	\end{enumerate}
\end{thm}

\begin{proof}
The proof is analogous to that of \cref{thm:properties}.
\end{proof}        

Our next step is to interpret the rule $\rulename{discardable}$ in a sensible
way. To that effect we move from the setting of autonomous categories to that
of \emph{semi-Cartesian autonomous categories}, or more specifically autonomous
categories whose unit object $I$ is terminal. In order to not overburden
nomenclature, let us call such categories affine. Given an affine
category $\catC$ and a $\catC$-object $X$ denote by $!_X = X \to I$ the
terminal map to the unit object. We then add to the rules of \cref{fig:lang_sem}
(which define the interpretation of linear $\lambda$-calculus on autonomous
categories) the rule for the interpretation of the $\prog{dis}$ construct, 
given in the second column of \cref{fig:dis}.  

We now focus on equipping affine $\lambda$-calculus with an equational system.
Recall  the axiomatics of autonomous categories which was provided in
\cref{fig:eqs}. We extend it with the equation in the bottom line of
\cref{fig:dis}. It can be seen as an $\eta$-equation which intuitively states
that all judgements $\Gamma \vljud v : \typeI$ (with $\Gamma = x_1 : \typeA_1,
\dots, x_n : \typeA_n$) carry no different information than that of just
discarding all variables available in context $\Gamma$.  It is clear that the
equation must be sound in affine categories, for it only involves judgements of
type $\typeI$ which is always interpreted as a terminal object.

\begin{defi}[Affine $\mathcal{V}\lambda$-theories and their
        models]\label{defn:avtheory} 
        Consider a tuple $(G,\Sigma)$ consisting of a class $G$ of ground types
        and a class of sorted operation symbols $f : \typeA_1,\dots,\typeA_n
        \to \typeA$ with $n \geq 1$. An affine $\mathcal{V}\lambda$-theory
        $((G,\Sigma),Ax)$ is a tuple such that $Ax$ is a class of
        $\mathcal{V}$-equations-in-context over \emph{affine} $\lambda$-terms
        built from $(G,\Sigma)$. 
       
        Consider an affine $\mathcal{V}\lambda$-theory $((G,\Sigma),Ax)$ and a
        $\VCatSe$-enriched \emph{affine} category $\catC$. Suppose that for
        each $X \in G$ we have an interpretation $\sem{X}$ as a $\catC$-object
        and analogously for the operation symbols. This interpretation
        structure is a model of the theory if all axioms in $Ax$ are satisfied
        by the interpretation.
\end{defi}
Let $Th(Ax)$ be the smallest class that contains $Ax$, that is closed under the
rules of Fig.~\ref{fig:eqs}, bottom line of~\cref{fig:dis}, of
\cref{fig:theo_rules}, and the compatibility rule,
\[
        \infer{\prog{dis}(v) =_q \prog{dis}(w)}{v =_q w}
\]
The elements of $Th(Ax)$ are called theorems of the
theory.

\begin{thm}[Soundness and Completeness]\label{theo:sound_compl_aff} 
        Consider an affine $\V \lambda$-theory $\mathscr{T}$.  An equation in
        context $\Gamma \vljud v =_q w : \typeA$ is a theorem of $\mathscr{T}$
        iff it is satisfied by all models of the theory.  
\end{thm}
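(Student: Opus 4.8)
The plan is to follow closely the two soundness--completeness proofs already established --- \cref{theo:sound_compl} for classical equations and \cref{theo:sound_compl2} for $\V$-equations --- and to localise the argument to the single new ingredient, the $\prog{dis}$ construct. The guiding observation is that an affine category is in particular autonomous, so the judgement interpretation of \cref{fig:lang_sem}, the extra semantics clause for $\prog{dis}$ in \cref{fig:dis}, and all of the $\VCat$-enriched reasoning from \cref{theo:sound} carry over unchanged; the only genuinely new facts to check concern the unit object $\typeI$ being terminal. Throughout I would keep the standing convention of working with varietal theories, so that the quotiented hom-$\mathcal{V}$-categories are small and $\Syn(\mathscr{T})$ is locally small.

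For \textbf{soundness} I would argue by induction over the rules defining $Th(Ax)$. The rules of \cref{fig:eqs} and \cref{fig:theo_rules} are handled exactly as in \cref{theo:sound} and \cref{theo:sound_compl2}. For the two equations in the bottom line of \cref{fig:dis}, both sides of each equation are terms of type $\typeI$, which in an affine category denotes the terminal object; hence the two sides receive the \emph{same} interpretation, and $\top = k \leq a(\sem{v},\sem{w})$ follows from integrality of $\mathcal{V}$. The new compatibility rule for $\prog{dis}$ is sound for the same reason, since $\sem{\prog{dis}(v)}$ and $\sem{\prog{dis}(w)}$ are morphisms into a terminal object and therefore equal.

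For \textbf{completeness} I would reuse the Lindenbaum--Tarski construction of \cref{theo:sound_compl2}: build the syntactic category $\Syn(\mathscr{T})$ with types as objects and hom-objects the separated $\mathcal{V}$-categories $(\closValBP{\typeA}{\typeB},a)_{/\sim}$, where $a(v,w)=\bigvee\{q\mid v=_q w\text{ a theorem}\}$; the quotient is well-defined because the relation $v=_\top w \wedge w=_\top v$ is a congruence for all term formation rules, including $\prog{dis}$ via its compatibility rule. As in \cref{theo:sound_compl2} this is a $\VCatSe$-enriched autonomous category with $a([v],[w])=\bigvee\{q\mid v=_q w\}$. The new step is to show that $\Syn(\mathscr{T})$ is \emph{affine}, i.e.\ that $\typeI$ is terminal: given any morphism $[v]\colon\typeA\to\typeI$, witnessed by $x:\typeA\vljud v:\typeI$, the monoidal equation $v\ \prog{to}\ \ast.\ \ast = v$ of \cref{fig:eqs} together with the second equation of \cref{fig:dis} gives $v = \prog{dis}(v)$, and the first equation of \cref{fig:dis} then rewrites $\prog{dis}(v)$ to $\prog{dis}(x)\ \prog{to}\ \ast.\ \ast$, an expression not mentioning $v$; hence every morphism $\typeA\to\typeI$ coincides with $[\prog{dis}(x)\ \prog{to}\ \ast.\ \ast]$. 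With $\Syn(\mathscr{T})$ now a $\VCatSe$-enriched affine category, the canonical interpretation (each operation $f$ sent to $[\prog{pm}\ x\ \prog{to}\ x_1\otimes\dots\otimes x_n.\ f(x_1,\dots,x_n)]$) is a model, and the implication ``satisfied by $\Syn(\mathscr{T})$ $\Rightarrow$ theorem'' is proved verbatim as before: from $a([v],[w])=\bigvee\{r\mid v=_r w\}\geq q$ one extracts, for each $x\ll q$ in $B$, a finite $A\subseteq\{r\mid v=_r w\}$ with $x\leq\bigvee A$, then applies rules \textbf{(join)}, \textbf{(weak)}, and finally \textbf{(arch)}. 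The symmetric case replaces $\VCatSe$ by $\VCatSS$ throughout.

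The main obstacle --- indeed essentially the only non-routine point --- is the verification that $\Syn(\mathscr{T})$ is affine, which hinges on the two equations of \cref{fig:dis} forcing all morphisms into $\typeI$ to collapse to a single one; once that is in place, every remaining step is a direct transcription of the linear arguments of \cref{theo:sound} and \cref{theo:sound_compl2}.
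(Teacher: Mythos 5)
Your proposal is correct and follows essentially the same route as the paper: soundness and the Lindenbaum--Tarski completeness argument are transcribed from the linear $\V$-equational case, and the one genuinely new step is verifying that $\Syn(\mathscr{T})$ is affine via the two $\prog{dis}$ equations of \cref{fig:dis}. The paper phrases that step as showing any two parallel terms of type $\typeI$ are provably equal (by passing both through $\prog{dis}(-)$, which depends only on the context), whereas you show every such term equals the canonical $\prog{dis}(x)\ \prog{to}\ \ast.\ \ast$; these are the same argument, with yours additionally making the existence of a map into $\typeI$ explicit.
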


\begin{proof}
        The proof of soundness is analogous to that of
        \cref{theo:sound_compl2}. The proof of completeness is
        also analogous to that of \cref{theo:sound_compl2}, but with
        the important difference that we need to show that the generated
        syntactic category $\Syn(\mathscr{T})$ is affine. So let us 
        consider two judgements $x : \typeA \vljud v : \typeI$ 
        and $x : \typeA \vljud w : \typeI$ and reason in the following way:
        \[
                v = \prog{dis}(x) = w 
        \]
        This entails that $[v] = [w]$ by the definition of a syntatic category.
\end{proof}        
All other results that we have established for linear $\V \lambda$-theories
(most notably, the equivalence theorem presented in \cref{theo:fequiv}) can be
obtained for the affine case as well, simply by retracing the path walked in
Section~\ref{sec:equiv}. So we omit the repetition of such steps here.

To conclude this subsection, let us briefly illustrate the affine $\V$-equational
system at work in relation to the linear case. Observe that the equations $w
=_q w'$ and $u =_r u'$ between \emph{linear} $\lambda$-terms give rise to,
\begin{align*}
  (\lambda x.\ \lambda y.\ v)\ w\ u =_{q \otimes r}
  (\lambda x.\ \lambda y.\ v)\ w'\ u'
\end{align*}
which intuitively states that the `differences' $q$ and $r$ between the input
terms $w,w'$ and $u,u'$ compound when applied to the same operation. Now let us
see what happens when the term $(\lambda x.\ \lambda y.\ v)$ does not use one
of the arguments, for instance it discards $x$ (which should not happen in the
linear case). By taking advantage of Rule \textbf{(trans)}, the fact that $k =
\top$, and of equation $\prog{dis}(w)=  \prog{dis}(x) = \prog{dis}(w')$
(discussed above), we logically deduce:
\begin{flalign*}
 & \, (\lambda x.\ \lambda y.\ \prog{dis}(x) \> \prog{to} \> \ast.\ v)\ w\ u & \\
 & = (\lambda y.\ \prog{dis}(w) \> \prog{to} \> \ast.\ v)\ u & \\
 & = (\lambda y.\ \prog{dis}(w') \> \prog{to} \> \ast.\ v)\ u & \\
 & =_{r} (\lambda y.\ \prog{dis}(w') \> \prog{to} \> \ast.\ v)\ u' & \\
 & = (\lambda x.\ \lambda y.\ \prog{dis}(x) \> \prog{to} \> \ast.\ v)\ w'\ u' &
\end{flalign*}
Note that the value $q$ is now \emph{not} accounted for in the relation between
the $\lambda$-terms $(\lambda x.\ \lambda y.\ \prog{dis}(x) \> \prog{to} \>
\ast.\ v)\ w\ u$ and $(\lambda x.\ \lambda y.\ \prog{dis}(x) \> \prog{to} \>
\ast.\ v)\ w'\ u'$; as a consequence of $w$ and $w'$ being discarded by the
operation. This is a manifestation of the available interplay between
discarding resources and $\mathcal{V}$-equations. 

\subsection{Functorial connection to previous work}
Let us introduce a simple yet instructive functorial connection between (1) the
categorical semantics of linear $\lambda$-calculus with the
$\mathcal{V}$-equational system, (2) the categorical semantics of linear
$\lambda$-calculus with the equational system of Section~\ref{sec:folin}, and (3) the
\emph{algebraic semantics} of the exponential free, multiplicative fragment of
linear logic. First we need to recall some well-known facts. As detailed
before, typical categorical models of linear $\lambda$-calculus and its
classical equational system are locally small autonomous categories, \ie
$\Set$-enriched autonomous categories. The latter form the category
$\Set$-$\Aut$ whose morphisms are autonomous functors. Then the usual algebraic
models of the exponential free, multiplicative fragment of linear logic are the
so-called \emph{lineales}~\cite{paiva99}. In a nutshell, a lineale is a poset
$(X,\leq)$ paired with a commutative monoid operation $\otimes : X \times X
\to X$ that satisfies certain conditions. Lineales are almost quantales: the
only difference is that they do not require $X$ to be cocomplete. The key idea
in algebraic semantics is that the order $\leq$ in the lineale encodes the
logic's entailment relation.

A functorial connection between $\Set$-autonomous categories and lineales (\ie
between (2) and (3)) is already stated in~\cite{paiva99} and is based on the
following two observations. First, (possibly large) lineales can be seen as
\emph{thin} autonomous categories, which are equivalently $\{0 \leq
1\}$-enriched autonomous categories. Let us use $\{0 \leq 1\}$-$\Aut$ denote
the category of $\{ 0 \leq 1\}$-enriched autonomous categories and $\{ 0
\leq 1 \}$-enriched autonomous functors.  Second, the inclusion functor $\{0
\leq 1\}$-$\Aut$ $\hookrightarrow$ $\Set$-$\Aut$ has a left adjoint which
\emph{collapses all morphisms} of a given autonomous category $\catC$
(intuitively, it eliminates the ability of $\catC$ to differentiate different
terms between two types).  This provides an adjoint situation between (2) and
(3). 

We can then expand the connection just described to our categorical semantics of
linear $\lambda$-calculus and corresponding $\mathcal{V}$-equational system
(\ie (1)) in the following way:  the forgetful functor $\VCat \to \Set$ has a
left adjoint $D : \Set \to \VCat$ which sends a set $X$ to $DX = (X,d)$, where
$d(x_1,x_2) = k$ if $x_1 = x _2$ and $d(x_1,x_2) = \bot$ otherwise.  This left
adjoint is (strict) monoidal, specifically we have $D(X_1 \times X_2) = DX_1
\otimes DX_2$ and $I = (1,(\ast,\ast) \mapsto k) = D1$. This gives rise to
the change-of-base functors on the left side of the diagram,
\[
\begin{tikzcd}
  \VCat\text{-} \Aut \arrow[r, shift right=1ex]
  & \Set\text{-}\Aut \arrow[l, "\hat D"', shift right=1ex]
  \arrow[phantom, l, "\scriptscriptstyle{\bot}"]
  \arrow[r, "C", shift left=1ex]
  & \{0 \leq 1\}\text{-}\Aut
  \arrow[phantom, l, "\scriptscriptstyle{\bot}"]
  \arrow[l, shift left=1ex]
\end{tikzcd}
\]
The functor $\hat D$ equips the
hom-sets of a $\Set$-autonomous category with the corresponding discrete
$\mathcal{V}$-category and $C$ collapses all morphisms of a $\Set$-autonomous
category as described earlier. The right adjoint of $\hat D$ forgets the
$\mathcal{V}$-categorical structure between terms (\ie between morphisms) and the
right adjoint of $C$ is the inclusion functor mentioned earlier. Note that
$\hat D$ restricts to $\VCatSe$-$\Aut$ and $\VCatSS$-$\Aut$, and thus we
obtain a functorial connection between the categorical semantics of linear
$\lambda$-calculus with the $\mathcal{V}$-equational system (\ie (1)), (2),
and (3). In essence, the connection formalises the fact that our categorical
models admit a richer structure between terms (\ie morphisms) than the
categorical models of linear $\lambda$-calculus and its classical equational
system. The latter in turn permits the existence of different terms between two
types as opposed to the algebraic semantics of the exponential free,
multiplicative fragment of linear logic. The connection also shows that models
for (2) and (3) can be mapped into models of our categorical semantics by
equipping the respective hom-sets with a trivial, discrete structure.

\subsection{Future work}

We introduced the notion of a $\mathcal{V}$-equation which generalises the
notions of  equation, inequation~\cite{kurz2017quasivarieties,adamek20}, and
metric equation~\cite{mardare16,mardare17}. We then presented a sound and
complete $\mathcal{V}$-equational system, illustrated with different examples
concerning real-time, probabilistic, and quantum computing. We additionally
showed that linear $\V \lambda$-theories are the syntactic counterpart of
$\VCatSe$-enriched autonomous categories, a connection that allows to
seamlessly invoke both logical and categorical constructs when studying any of
these structures.

Linear $\lambda$-calculus is at the root of different ramifications of
$\lambda$-calculus that relax resource-based conditions in different ways.
We already presented here the affine case which allows to discard resources. We
are now studying the possibility of extending our results to \emph{mixed
linear-non-linear} calculus~\cite{benton1994mixed}. Actually we already have some
results in this regard, specifically for the case in which linear
$\lambda$-calculus is extended with a \emph{graded exponential
modality}~\cite{neves23}. The general idea of the \emph{op. cit.} is to extend
the notion of a model studied in this paper (\ie\ $\VCat$-enriched
autonomous category) with that of a graded comonad that obeys a lax notion of
$\VCat$-enrichment -- the latter can be seen as a generalised form of Lipschitz
continuity.

Next, our main examples of $\mathcal{V}\lambda$-theories (see
Section~\ref{sec:examples}) used either the Boolean or the metric quantale.  We would
like to study linear $\mathcal{V}\lambda$-theories whose underlying quantales
are neither the Boolean nor the metric one, for example the ultrametric
quantale which is (tacitly) used to interpret Nakano's guarded
$\lambda$-calculus \cite{birkedal10} and also to interpret a higher-order
language for functional reactive programming~\cite{krishnaswami11}. Another
interesting quantale is the G\"{o}del one which is a basis for fuzzy
logic~\cite{denecke13} and whose $\mathcal{V}$-equations give rise to what we
call fuzzy inequations.

Finally we plan to further explore the connections between our work and
different results on metric universal
algebra~\cite{mardare16,mardare17,rosicky20} and inequational universal
algebra~\cite{kurz2017quasivarieties,adamek20,rosicky20}. For example, an
interesting connection is that the monad construction presented
in~\cite{mardare16} crucially relies on quotienting  a pseudometric space into
a metric space -- this is a particular case of quotienting a
$\mathcal{V}$-category into a separated $\mathcal{V}$-category which we also
use in our work.

\section*{Acknowledgments}
This work is financed by National Funds through FCT - Fundação para a Ciência e
a Tecnologia, I.P. (Portuguese Foundation for Science and Technology) within
project IBEX, with reference PTDC/CCI-COM/4280/2021.  We are also thankful for
the reviewers' helpful feedback. 
\bibliographystyle{alphaurl}
\bibliography{biblio}

\appendix
  
\end{document}